\providecommand{\keywords}[1]{\textbf{Keywords:} #1}
\def\bc{{\bf c}}
\def\bo{{\bf o}}
\def\bx{{\bf x}}
\def\bO{{\bf O}}
\def\bX{{\bf X}}
\def\thick#1{\hbox{\rlap{$#1$}\kern0.25pt\rlap{$#1$}\kern0.25pt$#1$}}
\def\btheta{\boldsymbol{\theta}}
\def\smbalpha{\boldsymbol{{\scriptstyle{\alpha}}}}
\def\ehat{{\widehat e}}
\def\etahat{{\widehat\eta}}
\def\thetahat{{\widehat\theta}}
\def\muhat{{\widehat\mu}}
\def\pihat{{\widehat\pi}}
\def\Psihat{{\widehat\Psi}}
\def\psitilde{{\widetilde\psi}}
\def\Psitilde{{\widetilde\Psi}}
\def\smbalpha{\widehat{\smbalpha}}
\def\hbar{\bar{ h}}
\def\ubar{\bar{ u}}
\def\Osc{{\cal O}}
\def\transpose{{\sf \scriptscriptstyle{T}}}
\def\E{\mbox{E}}
\def\pr{P}
\def\trans{^{\transpose}}
\def\inv{^{-1}}
\def\mybox#1{\vskip1mm \begin{center}
        \hspace{.0\textwidth}\vbox{\hrule\hbox{\vrule\kern6pt
\parbox{.9\textwidth}{\kern6pt#1\vskip6pt}\kern6pt\vrule}\hrule}
        \end{center} \vskip-5mm}
\def\lboxit#1{\vbox{\hrule\hbox{\vrule\kern6pt
      \vbox{\kern6pt#1\vskip6pt}\kern6pt\vrule}\hrule}}
\def\thickboxit#1{\vbox{{\hrule height 1mm}\hbox{{\vrule width 1mm}\kern6pt
          \vbox{\kern6pt#1\kern6pt}\kern6pt{\vrule width 1mm}}
               {\hrule height 1mm}}}
\def\fat#1{\hbox{\rlap{$#1$}\kern0.25pt\rlap{$#1$}\kern0.25pt$#1$}}
\newcommand{\I}{{\sf I}}
\newtheorem{remark}{Remark}
\newtheorem{proposition}{Proposition}
\newcommand{\indep}{\perp \!\!\! \perp}
\def\mubar{\bar{\mu}}
\newcommand{\ellbar}{\overline{\ell}}
\def\bX{\mathbf{X}}
\def\bx{\mathbf{x}}
\def\E{\mathbb{E}}
\def\P{\mathbb{P}}
\def\PP{\mathbb{P}}
\def\V{\mathbb{V}}
\def\I{\mathbb{I}}
\def\eif{\text{EIF}}
\newtheorem{assumption}{Assumption}
\newtheorem{definition}{Definition}
\newtheorem{corollary}{Corollary}
\newtheorem{theorem}{Theorem}
\newtheorem{intassumption}{Assumption}
\numberwithin{intassumption}{assumption}
\begin{document}

\title{Bounding causal effects with an unknown mixture of informative and non-informative missingness}

\author[1]{Max Rubinstein\thanks{Corresponding author: mrubinstein@rand.org}}
\author[2]{Denis Agniel}
\author[3]{Larry Han}
\author[4]{Marcela Horvitz-Lennon}
\author[5]{Sharon-Lise Normand}

\affil[1]{RAND Corporation, Pittsburgh, USA}
\affil[1]{RAND Corporation, Santa Monica, USA}
\affil[3]{Northeastern University, Boston, USA}
\affil[4]{RAND Corporation, Boston, USA}
\affil[5]{Harvard Medical School, Boston, USA}

\maketitle

\abstract{In experimental and observational data settings, researchers often have limited knowledge of the reasons for missing outcomes. To address this uncertainty, we propose bounds on causal effects for missing outcomes, accommodating the scenario where missingness is an unobserved mixture of informative and non-informative components. Within this mixed missingness framework, we explore several assumptions to derive bounds on causal effects, including bounds expressed as a function of user-specified sensitivity parameters. We develop influence-function based estimators of these bounds to enable flexible, non-parametric, and machine learning based estimation, achieving root-$n$ convergence rates and asymptotic normality under relatively mild conditions. We further consider the identification and estimation of bounds for other causal quantities that remain meaningful when informative missingness reflects a competing outcome, such as death. We conduct simulation studies and illustrate our methodology with a study on the causal effect of antipsychotic drugs on diabetes risk using a health insurance dataset.}

\keywords{Missing data, informative missingness, causal inference, sensitivity analysis, non-parametrics, censoring}



\maketitle

\section{Introduction}

Causal inference requires that the intervention precede the outcome, imposing a necessary temporal ordering. However, in practice, some outcomes may be unobservable because units drop out of the study or become inaccessible to the researcher for other reasons during the period between the intervention and the outcome measurement. This issue can arise in virtually any type of study, even in randomized controlled trials. When outcomes are missing, identification assumptions are necessary to relate causal estimands to observable data quantities. A commonly adopted assumption is the ``missing at random'' (MAR) mechanism, under which outcome missingness is independent of the value of the outcome itself, conditional on all other observed data. Thus, any association between missingness and the outcome is fully explained by observable characteristics, enabling point identification of causal effects (see, e.g., \cite{williamson2012doubly, chaudhuri2016gmm, metten2022inverse}). If instead missingness is a function of the outcomes themselves, or unobserved variables that determine both missingness and the outcomes, then the outcomes are said to be ``missing not at random'' (MNAR), and commonly targeted causal effects, such as the average treatment effect, are not in general identifiable.

Unfortunately, testing whether MAR holds versus MNAR is not possible, and MAR is often implausible in practice. For example, in randomized trials examining disease progression or mortality, unobserved deterioration in health could directly lead to both study dropout and an adverse outcome. These issues are even more pronounced in observational studies utilizing electronic health records or insurance claims data, which typically feature higher rates of missingness. In this setting, outcomes such as diagnoses or procedures might not be documented if patients seek care elsewhere or forgo treatment altogether. Additionally, the reasons for missingness in such settings are often unknown yet plausibly linked to critical downstream outcomes, such as loss of insurance coverage. Furthermore, when missingness constitutes a competing event, an outcome that prevents the primary outcome from occurring, MAR is violated, and traditional causal estimands such as the average treatment effect may not be interpretable \cite{stensrud2022separable}. 

In contrast to the substantial literature addressing potential violations of the ``ignorability" assumption, which states that treatment assignment is independent of potential outcomes conditional on observed covariates, there has been relatively limited exploration of MAR violations in causal inference. While ignorability holds by design in a randomized controlled trial, it generally does not hold in an observational study. As a result, a substantial body of literature has proposed sensitivity analyses that do not generally point identify causal effects, but instead provide bounds in cases where ignorability may not hold \cite{liu2013introduction}. However, less work has extended these methods to address violations of MAR when outcomes are missing. 

This paper addresses this gap by extending some of these methods, particularly those proposed by \cite{luedtke2015statistics}, to accommodate scenarios in which MAR may be violated but ignorability still holds. We propose a novel sensitivity framework where missingness consists of an unknown mixture of informative (non-ignorable) and non-informative (ignorable) components. Our proposed framework yields general bounds for the average treatment effect under MAR violations. To enhance practical utility, we derive narrower bounds by specifying sensitivity parameters -- such as the proportion of informative missingness -- that link observable and unobservable data quantities. All bounds may be inverted or evaluated across a range of sensitivity parameters to determine ``tipping point'' values of sensitivity parameters at which observed associations vanish, which may be preferred in situations where sensitivity parameters are difficult to specify a priori. We propose doubly-robust estimators \cite{bang2005doubly} that incorporate flexible non-parametric and machine learning techniques \cite{kennedy2022semiparametric, chernozhukov2018double} to estimate these bounds. Lastly, beyond bounds on the ATE, we provide bounds for alternative estimands more suitable in scenarios involving competing events, such as mortality, where violations of MAR are particularly pronounced. One such estimand that remains well-defined in the presence of competing events is the separable direct effect \cite{stensrud2022separable}. We implement the proposed methods in the \texttt{marbounds} package for R, available at \url{https://github.com/mrubinst757/marbounds}. 

The rest of the paper is organized as follows. In Section \ref{sec:setting} we formalize our setting and state key assumptions. In Section \ref{sec:bounds} we propose several bounds on the average treatment effect and discuss point identification. In Section \ref{sec:estimation} we formalize our proposed estimators. In Section \ref{sec:alternative-effects} we consider alternative causal estimands that can be more relevant in the presence of certain competing events. In Section \ref{sec:simulations} we provide a simulation study verifying the expected performance of our estimators. We then apply these methods to study how missingness may affect estimates of diabetes risk due to antipsychotic use in commercial claims data in Section \ref{sec:application}. Finally, we provide additional discussion in Section \ref{sec:discussion}. 

\subsection{Related work}
The challenge of drawing valid inferences from data in the presence of missing outcomes has been extensively studied across various statistical settings. Missing outcomes routinely occur in longitudinal and time-to-event settings due to loss to follow up and other missingness. For example, survival analyses often make MAR assumptions (usually referred to as assuming non-informative censoring, or CAR) to point identify various functions of survival distributions (see, e.g., \cite{nicolaie2015vertical, do2017analysis, park2022semiparametric}), and sensitivity analyses that relax the MAR assumption have been proposed \cite{lipkovich2016sensitivity}. However, a limitation of much of this literature is the absence of explicit causal interpretations of commonly estimated parameters; for instance, \cite{young2020causal} demonstrated that hazard ratios generally lack clear causal interpretations. Importantly, our work does not directly address time-to-event outcomes but instead focuses on settings with outcomes measured at discrete time points or single measurements. Nevertheless, the problem of how to handle missing data, particularly when the missingness is due to a type of outcome that precludes the primary outcome from being measured (in the survival setting, a competing event), is directly relevant to this work.

While a subset of the survival analysis literature explicitly addresses causal effects (see, e.g., \cite{robins1986new,frangakis2002principal,young2020causal}), the primary focus has been on point identification and estimation \cite{young2020causal, stensrud2022separable}. In contrast, our focus is specifically on deriving non-parametric bounds and conducting sensitivity analyses to address uncertainties due to informative missingness. Existing approaches proposing bounds and sensitivity analyses have largely relied on parametric models \cite{andrea2001methods}. Additionally, although distinctions between administrative censoring (typically assumed MAR) and competing risks (potentially informative censoring events) are recognized in the literature (see, e.g., \cite{mitra2020analysis, zhou2023case, han2021multiple, do2017analysis, moreno2013regression, lau2009competing, nicolaie2015vertical, lau2011parametric, varadhan2010evaluating}), they generally assume these missingness mechanisms can be explicitly distinguished. In our setting, the researcher does not know which missing data are informative, complicating the analysis.

Some prior work has considered similar problems at a single time point. Similar to the survival literature, missing-at-random assumptions are common in this work to point identify causal effects (see, e.g., \cite{chang2023covariate,zhao2024adjust}). By contrast, \cite{mattei2014identification} define causal effects within principal strata and derive bounds for these effects under non-ignorable missingness, though their approach relies on the availability of a binary instrumental variable separate from treatment assignment. \cite{jiang2022multiply} introduce multiply-robust estimators based on a ``principal ignorability'' assumption, enabling flexible machine learning methods for estimation; however, this assumption is strong and may not hold in many practical scenarios. \cite{scharfstein2003incorporating} propose Bayesian sensitivity analyses that summarize uncertainty in causal effects from non-ignorable missing outcomes but require explicit parametric assumptions about the missingness mechanism. 

Other researchers have jointly modeled missing outcomes and non-compliance in randomized experiments, leveraging randomization as an instrumental variable (\cite{frangakis1999addressing, baker2000, o2005likelihood, zhouandli2006}). Non-parametric bounding techniques have also been developed for average treatment effects with binary outcomes in randomized contexts (\cite{horowitz2000nonparametric, taylor2006, gabriel2023nonparametric}). We extend this non-parametric bounding approach to a broader context encompassing both randomized and observational studies. Our proposed bounds generalize previously derived results, including those in \cite{horowitz2000nonparametric} and equation (1) of \cite{gabriel2023nonparametric}. We illustrate that without additional assumptions or constraints, these general bounds can often be overly wide, limiting their practical utility in scenarios with substantial missingness. Consequently, we provide strategies for incorporating subject matter expertise to yield more informative bounds.

Finally, our approach shares some broad conceptual similarities with the sensitivity analysis framework of \cite{bonvini2022sensitivity}. This paper proposes a sensitivity analysis for no unmeasured confounding assumption by positing that some set proportion of the observed data are confounded. Specifically, they define a latent indicator $S$ such that when $S = 1$, no unmeasured confounding holds, and when $S = 0$ it does not. However, our proposed approach is not simply an extension of this framework. For example, to extend \cite{bonvini2022sensitivity} to our setting, we would instead define a latent indicator $S$ that defines a subset of units where MAR holds and a corresponding subset where it does not. However, our sensitivity analysis instead makes assumptions about the proportion of informatively missing observations among the missing observations -- not the proportion of units where MAR holds. To be clear, even for observations without informatively missing outcomes, MAR does not necessarily hold. Extending the method from \cite{bonvini2022sensitivity} to our setting ultimately requires substantively different starting assumptions, although doing so would be an interesting avenue for future research.

\section{Mixture missingness setting}\label{sec:setting}

\subsection{Data, notation, and definitions}

We consider a dataset $\Osc = \{\bO_i\}_{i=1}^n$, comprising $n$ independent observations $\bO_i \stackrel{indep.}\sim (\bX_i, A_i, C_i, (1-C_i)Y_i)$, where $\bX_i$ denotes a vector of covariates, $A_i \in \{0,1\}$ is a binary treatment indicator, and $Y_i$ denotes some binary or bounded continuous outcome of interest. $Y_i$ may not be observed for all individuals due to a missingness indicator $C_i$, which occurs before $Y_i$ and after $\bX_i$ and $A_i$. Consequently, we observe $Y_i$ when $C_i = 0$.\footnote{In contrast to this formulation, it is more conventional to use an indicator $R$ to denote missingness where $R = 0$ indicates missingness. We choose this alternative formulation for conceptual clarity: because our framework conceives of distinct missingness mechanisms $U_{I}$, and $U_{NI}$, it is easier to conceive of these variables as indicating the presence of these mechanisms rather than their absence. Moreover, defining $C = \I(U_I = 0, U_{NI} = 0)$ would then switch the meaning of the indicator (now $C = 1$ means its observed) which we believe may introduce confusion.} Finally, we posit the existence of potential outcomes $\bO_i(a)$, denoting the counterfactual set of random variables had unit $i$ been treated with $A_i = a$. 

We assume that missingness is a mixture of two mutually exclusive types\footnote{Technically, our results can be shown to hold without mutual exclusivity. However, without this requirement, the case where $U_I = 1$ and $U_{NI} = 1$ is difficult to conceptualize -- either the missingness was informative or it was not. Therefore, we invoke mutual exclusivity for conceptual clarity.}, defined by unobserved indicators $U_{NI}$ and $U_I$. The observed missingness indicator $C$ is a composite of these variables given by
 $C = 1 - (1 - U_{NI})(1 - U_I) = \I\{U_{NI} = 1 \text{ or } U_I = 1\}$.
$U_{NI} \in \{0,1\}$ represents \textit{non-informative} or administrative missingness, while $U_I \in \{0,1\}$ represents \textit{informative missingness}, which can be thought of as an intermediate outcome or a mediator of the effect of $A$ on $Y$. Missingness by $U_{NI}$ is non-informative when assumption (\ref{asmpt:non-informative})\footnote{We thank an anonymous reviewer for observing that our initial submission used a stronger non-informative missingness assumption than was necessary for our later results to hold.} holds:

\begin{assumption}[Non-informative missingness by $U_{NI}$]\label{asmpt:non-informative}
\begin{equation*}
  Y \indep U_{NI} \mid \bX, A, U_{I} = 0
\end{equation*}
\end{assumption}

\noindent For example, in a clinical trial evaluating the effect of an antipsychotic drug on symptoms of psychosis and safety events such as increases in glycosylated hemoglobin (A1c), which may herald development of type 2 diabetes (for simplicity, hereafter we refer to the outcome as diabetes risk), participants might relocate and lose contact with trial administrators. Such administrative missingness can reasonably be assumed to be independent of diabetes risk, conditional on observed covariates, and thus is non-informative. By contrast, we place no restrictions on the dependence between $Y$ and $U_I$ beyond their temporal ordering. This permits informative missingness, where $U_I$ may not be conditionally independent of $Y$. In other words,
    $Y \cancel\indep U_I | \bX, A$.    
Consider again the antipsychotic trial example. In some cases, participants may leave the trial before their outcomes are recorded due to medication side effects. Moreover, these side effects may also affect an individual's diabetes risk. In such cases, missingness is not conditionally independent of the outcome $Y$ (and is therefore \textit{informative} of the outcome). 

Our primary challenge is to establish causal relationships in the presence of unobserved informative missingness. We define our primary target of interest as the Average Treatment Effect (ATE).

\begin{definition}[Average Treatment Effect (ATE)] $ \Psi_0 = \E\{Y(1) - Y(0)\}.$
\end{definition}
\noindent The ATE represents the average difference in outcomes in a world where everyone received treatment versus one where no one received treatment. This effect averages over the covariate distribution of all individuals, including those who experience informative missingness. However, even in a randomized trial, this quantity is not generally identifiable due to the presence of informative missingness. Nevertheless, it remains the quantity most relevant to assessing the causal effect of a treatment. 

In some settings, the ATE may not correspond to a meaningful quantity. For example, consider again the case where $Y$ is type 2 diabetes, but now let $U_I$ denote all-cause mortality. For cases where the patient would have died under treatment $A = a$ ($U_I(a) = 1$) before measurement of the outcome, interpreting the potential outcome (after death) $Y(a)$ is conceptually challenging. In these situations, we define alternative treatment effects that remain well-defined (see Section \ref{sec:alternative-effects}).

We introduce the following notation: let $e(\bx) = \P(A = 1 | \bX = \bx)$ denote the probability of treatment (or propensity score), where we at times use $e_1(\bX)$ to denote $e(\bX)$ and $e_0(\bX)$ to denote $1-e(\bX)$; $\pi_a(\bx) = \P(C = 1 | \bX = \bx, A = a)$ the probability of missingness in treatment arm $A = a$; and $\pi^*_a(\bx) = \P(U_I = 1 | \bX = \bx, A = a)$ the probability of informative missingness in treatment arm $A = a$. Additionally, let $\mu_a(\bx) = \E(Y \mid \bX = \bx, A = a, C = 0)$ denote the conditional mean function in treatment arm $A = a$ among individuals with observed outcomes and $\mu_a^\star(\bx) = \E(Y \mid \bX = \bx, A = a, U_I = 1)$ the corresponding mean function among individuals informatively censored. We assume without loss of generality that $0 < \mu_a(\cdot), \mu_a^\star(\cdot) \leq 1$. This may be ensured by rescaling the bounded outcome or may occur by default with a binary outcome. Throughout, for any function $f$ of data $\bO$, $\P_n f$ denotes the empirical average $\frac{1}{n}\sum_{i=1}^n f(\bO_i)$, $\V(\bO) = \E\{(\bO - \E(\bO))(\bO - \E(\bO))\trans\}$, $\I(\cdot)$ is the indicator function, and we let $\|f(\bO)\|^2 = \int f(\bo)^2d\P(\bo)$ denote the squared $L_2(\P)$ norm. 

Finally, we define a naive target of inference as
\begin{align}\label{observed-ate}
    \Psitilde = \E\left\{\psitilde(\bX)\right\} = \E\left\{\mu_1(\bX) - \mu_0(\bX)\right\}.
\end{align}

\noindent This quantity represents the difference in expected outcomes between treatment arms among uncensored individuals, averaged over the distribution of covariates $\bX$. It is the limit of a naive estimate of the ATE in a conditionally randomized setting where all missingness is assumed to be non-informative. That is, if $\P(U_I = 1) = 0$, then under additional assumptions (specified in Section~\ref{assumps}), we have $\Psitilde = \Psi_0$. However, in the presence of informative missingness, we generally have $\Psitilde \neq \Psi_0$.

\subsection{Assumptions}\label{assumps}
We first assume that $\bX$ is sufficient to control for confounding of the primary outcome as well as the missingness indicators.
\begin{assumption}[No unmeasured confounding]\label{asmpt:confounding}
    $\{Y(a), U_{NI}(a), U_{I}(a)\} \indep A \mid \bX, \qquad a = 0, 1$.
\end{assumption}
This assumption may hold by design in a randomized controlled trial or may be assumed in an observational study. We also assume consistency,

\begin{assumption}[Consistency]\label{asmpt:consistency}
$\{Y, U_{NI}, U_{I}\} = \{Y(1), U_{NI}(1), U_{I}(1)\}A + \{Y(0), U_{NI}(0), U_{I}(0)\}(1-A)$.
\end{assumption}

\noindent Consistency implies that the observed data is equal to the counterfactual data under treatment assignment $A = a$, which will hold when there is no interference between individuals (that is, an individual's potential outcomes do not depend on another individual's treatment assignment).
Finally, we assume positivity of treatment assignment and missingness,

\begin{assumption}[Positivity]\label{asmpt:positivity}
    $\P\left\{ \epsilon < e(\bX) < 1 - \epsilon\right\} = 1$ for some $\epsilon > 0$.
\end{assumption}

\noindent Assumption \ref{asmpt:positivity} states that all individuals have some non-zero probability of being assigned to the treatment or control group. Assumptions \ref{asmpt:confounding}-\ref{asmpt:positivity} are standard in the causal inference literature and suffice to identify the ATE when all outcomes are observed. However, because not all outcomes are observed in our setting, we require a condition similar to Assumption \ref{asmpt:positivity} for the missingness mechanism.

\begin{assumption}[Missingness positivity]\label{asmpt:cens-positivity}
    $\P\left\{\pi_a(\bX) < 1 - \epsilon\right\} = 1$ for some $\epsilon > 0$.
\end{assumption}

\noindent Assumption \ref{asmpt:cens-positivity} states that, in all strata defined by $\bX$, outcomes have a positive probability of being observed. This assumption ensures sufficient overlap between the missing and non-missing observations to identify the observed data functionals that appear below. These assumptions allow us to equate the ATE to a functional that does not involve counterfactual quantities.

\begin{proposition}[Expression for $\Psi_0$]\label{prop:0}
Under Assumptions \ref{asmpt:non-informative}-\ref{asmpt:cens-positivity},
\begin{align*}
\Psi_0 = \E[\psi_0(\bX)] 
= \tilde{\Psi} + \E[\pi_1^\star(\bX)\{\mu_1^\star(\bX) - \mu_1(\bX)\} - \pi_0^\star(\bX)\{\mu_0^\star(\bX) - \mu_0(\bX)\}].
\end{align*}
\end{proposition}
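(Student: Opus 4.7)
\textbf{Proof plan for Proposition \ref{prop:0}.}

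The overall plan is standard identification: rewrite $\Psi_0$ as an observed-data functional using Assumptions \ref{asmpt:confounding}--\ref{asmpt:positivity}, then decompose the resulting regression function $\E[Y \mid \bX, A = a]$ by marginalizing over the latent informative-censoring indicator $U_I$, and finally use Assumption \ref{asmpt:non-informative} to replace conditional means on $\{U_I = 0\}$ and $\{U_I = 1\}$ by the observable quantities $\mu_a(\bX)$ and $\mu_a^\star(\bX)$, respectively.

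\textbf{Step 1 (identification of each arm).} By Assumption \ref{asmpt:consistency}, $Y = Y(A)$, and by Assumptions \ref{asmpt:confounding}--\ref{asmpt:positivity}, $Y(a) \indep A \mid \bX$ with $e(\bX)$ bounded away from $0$ and $1$. Iterating expectations yields $\E\{Y(a)\} = \E\{\E[Y \mid \bX, A = a]\}$ for $a \in \{0,1\}$. Note that $Y$ is interpreted here as the (possibly unobserved) potential outcome $Y(a)$ on $\{A=a\}$, so the inner expectation is well defined irrespective of whether the unit is censored.

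\textbf{Step 2 (marginalization over $U_I$).} I decompose $\E[Y \mid \bX, A = a]$ by law of total expectation over $U_I \in \{0,1\}$:
\begin{align*}
\E[Y \mid \bX, A = a] = \E[Y \mid \bX, A = a, U_I = 0]\{1 - \pi_a^\star(\bX)\} + \E[Y \mid \bX, A = a, U_I = 1]\,\pi_a^\star(\bX).
\end{align*}
Assumption \ref{asmpt:non-informative} states $\{Y, U_I\} \indep U_{NI} \mid \bX, A$, which implies both (i) $Y \indep U_{NI} \mid U_I, \bX, A$ and (ii) $U_I \indep U_{NI} \mid \bX, A$. Applying (i) with $U_I = v$ lets me insert the event $\{U_{NI} = 0\}$ into the conditioning set without changing the conditional mean, giving
\begin{align*}
\E[Y \mid \bX, A = a, U_I = 0] = \E[Y \mid \bX, A = a, U_{NI} = 0, U_I = 0] = \mu_a(\bX),
\end{align*}
where the last equality uses that $\{C = 0\} = \{U_{NI} = 0, U_I = 0\}$. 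Similarly, $\E[Y \mid \bX, A = a, U_I = 1] = \mu_a^\star(\bX)$ by definition. (Assumption \ref{asmpt:cens-positivity} ensures the conditioning event $\{C = 0\} \subseteq \{U_{NI} = 0\}$ has positive probability so $\mu_a(\bX)$ is well defined; $\mu_a^\star$ is used only where $\pi_a^\star(\bX) > 0$.)

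\textbf{Step 3 (assemble).} Substituting yields
\begin{align*}
\E[Y \mid \bX, A = a] = \mu_a(\bX) + \pi_a^\star(\bX)\{\mu_a^\star(\bX) - \mu_a(\bX)\}.
\end{align*}
Taking outer expectations and differencing $a = 1$ and $a = 0$ gives exactly the stated expression, since $\tilde{\Psi} = \E\{\mu_1(\bX) - \mu_0(\bX)\}$.

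The only nontrivial point is the conditional-independence bookkeeping in Step 2: one must recognize that Assumption \ref{asmpt:non-informative}, as a joint independence of $(Y, U_I)$ from $U_{NI}$, delivers both the factorization of the $U_I$ marginal and the freedom to add $\{U_{NI} = 0\}$ to the conditioning set of $\E[Y \mid \bX, A, U_I]$. Once these two implications are noted, the identification is a direct computation.
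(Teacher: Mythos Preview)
Your proof is correct and follows essentially the same route as the paper: use ignorability and consistency to reduce to $\E[Y\mid \bX, A=a]$, marginalize over $U_I$, and invoke Assumption~\ref{asmpt:non-informative} to insert $\{U_{NI}=0\}$ into the conditioning so that the two pieces become $\mu_a(\bX)$ and $\mu_a^\star(\bX)$. If anything, your Step~2 is more explicit than the paper's about why the joint independence $\{Y,U_I\}\indep U_{NI}\mid \bX,A$ delivers both needed implications.
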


\noindent The proof is given in Appendix \ref{app:proofs}. However, this quantity is not identifiable in the data: given observed data $\bO$, we do not observe $U_{NI}$, only the composite indicator $C$, making the quantities $\pi_a^\star(\bX)$ and $\mu_a^\star(\bX)$ inestimable without further assumptions. Nevertheless, as we will next show, it is possible to bound $\Psi_0$ under no additional assumptions, resulting in the general bounds we present in Section \ref{ssec:gbounds}. Moreover, in section \ref{narrower-bounds}, we show that we may narrow these bounds by invoking additional assumptions that relate the unobserved quantities $\mu_a^\star$ and $\pi_a^\star$ to the observed quantities $\mu_a$ and $\pi_a$, possibly as a function of sensitivity parameters. Moreover, all bounds we will present are sharp under the stated motivating assumptions. In section \ref{sec:sensitivity}, we propose sensitivity analyses that find which sensitivity parameters could explain away the naive $\Psitilde$, which may be useful when it is difficult to pre-specify values of the sensitivity parameters.

\section{Bounding the average treatment effect}\label{sec:bounds}

\subsection{General bounds}\label{ssec:gbounds}

Under Assumptions \ref{asmpt:non-informative}-\ref{asmpt:cens-positivity}, the ATE $\Psi_0$ is not identifiable due to the presence of informative missingness by $U_{I}$. We can, however, still bound the ATE using only quantities from the observed data under no additional assumptions. We state this result in Proposition \ref{prop:1}.

\begin{proposition}[General bounds on ATE]\label{prop:1}
Under Assumptions \ref{asmpt:non-informative}-\ref{asmpt:cens-positivity}, $\ell_0 \le \Psi_0 \le u_0$, where
\begin{align}
\begin{aligned}\label{assumption-free-bounds}
    &\ell_0 = \tilde{\Psi} - \E\left[\pi_1(\bX)\mu_1(\bX) + \pi_0(\bX)\{1-\mu_0(\bX)\}\right]\\
    &u_0 = \tilde{\Psi} + \E\left[\pi_1(\bX)\{1 - \mu_1(\bX)\} + \pi_0(\bX)\mu_0(\bX)\right].
    \end{aligned}
\end{align}    
\end{proposition}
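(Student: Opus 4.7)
The plan is to reduce the bounding problem to pointwise bounding of the unidentified terms in Proposition~0's representation of $\Psi_0$, then integrate. Starting from
\[
\Psi_0 = \tilde{\Psi} + \E\bigl[\pi_1^\star(\bX)\{\mu_1^\star(\bX) - \mu_1(\bX)\} - \pi_0^\star(\bX)\{\mu_0^\star(\bX) - \mu_0(\bX)\}\bigr],
\]
the only non-identifiable pieces are $\pi_a^\star(\bX)$ and $\mu_a^\star(\bX)$. The first step is to pin down the feasible range of these nuisance functions using only the model structure: since $C = 1 - (1-U_{NI})(1-U_I)$ forces $U_I \le C$, we have $\pi_a^\star(\bX) \le \pi_a(\bX)$ almost surely, and since $Y$ is bounded in $[0,1]$ we have $\mu_a^\star(\bX) \in [0,1]$. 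No further information about the joint law of $(\mu_a^\star,\pi_a^\star)$ is available under Assumptions~\ref{asmpt:non-informative}--\ref{asmpt:cens-positivity}.

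Next I would bound each summand $\pi_a^\star(\bX)\{\mu_a^\star(\bX) - \mu_a(\bX)\}$ pointwise in $\bX$ over the feasible rectangle $\{(\pi^\star,\mu^\star): 0 \le \pi^\star \le \pi_a(\bX),\ 0 \le \mu^\star \le 1\}$. For the $a=1$ contribution to the upper bound, I maximize $\pi_1^\star\{\mu_1^\star-\mu_1(\bX)\}$: the objective is linear in $\mu_1^\star$ with coefficient $\pi_1^\star \ge 0$, so take $\mu_1^\star = 1$, and then linear in $\pi_1^\star$ with coefficient $1-\mu_1(\bX) \ge 0$, so take $\pi_1^\star = \pi_1(\bX)$, yielding $\pi_1(\bX)\{1-\mu_1(\bX)\}$. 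For the $a=0$ contribution I minimize $\pi_0^\star\{\mu_0^\star-\mu_0(\bX)\}$ (since it enters with a minus sign): take $\mu_0^\star = 0$ and $\pi_0^\star = \pi_0(\bX)$ to get $-\pi_0(\bX)\mu_0(\bX)$. Substituting and using monotonicity of expectation gives exactly $u_0$. The lower bound $\ell_0$ follows by flipping each optimization: take $\mu_1^\star = 0$, $\pi_1^\star = \pi_1(\bX)$ to minimize the first summand, and $\mu_0^\star = 1$, $\pi_0^\star = \pi_0(\bX)$ to maximize the second.

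Finally, I would verify sharpness is not being claimed: the argument only requires that the pointwise extremizers are feasible, which they are, so each bound is achieved by \emph{some} admissible $(\pi_a^\star,\mu_a^\star)$, confirming the bounds are valid. The only subtlety worth a brief remark is that the pointwise optima are attained at the corners of the feasible rectangle and no joint constraints beyond $\pi_a^\star \le \pi_a$ and $\mu_a^\star \in [0,1]$ are imposed, which is precisely what ``no further assumptions beyond \ref{asmpt:non-informative}--\ref{asmpt:cens-positivity}'' means; hence the integrated bounds cannot be tightened without additional structure, motivating the refinements in Assumptions~\ref{asmpt:monotonicity-pos}--\ref{asmpt:bounded-censoring}. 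I do not anticipate a genuine obstacle here: the main content is Proposition~0, and the present proposition is essentially a one-line envelope argument once that identity is in hand.
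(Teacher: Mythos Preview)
Your proposal is correct and follows essentially the same route as the paper: both start from the Proposition~\ref{prop:0} identity, bound the unidentified term $\pi_a^\star(\bX)\{\mu_a^\star(\bX)-\mu_a(\bX)\}$ pointwise using $\mu_a^\star\in[0,1]$ and $\pi_a^\star\le\pi_a$, and then integrate. Your ordering (optimize over $\mu_a^\star$ first, then $\pi_a^\star$) is slightly cleaner than the paper's, which replaces $\pi_a^\star$ by $\pi_a$ before bounding $\mu_a^\star$ and thus tacitly relies on the sign of $\mu_a^\star-\mu_a$ in an intermediate line, but the content is the same.
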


\noindent Proposition \ref{prop:1} follows by the law of iterated expectations and the fact that $Y$ is binary (or bounded and rescaled). This result has also been noted, in, for example, \cite{mattei2014identification, gabriel2023nonparametric}. These bounds are sharp because they can be attained in the case where $\pi_a^\star(\bx) = \pi_a(\bx)$ and $\mu^\star_a(\bx) = a$ for all $\bx$ (for the upper bound) or $\mu^\star_a(\bx) = 1 - a$ for all $\bx$ (for the lower bound). Proofs for this and all subsequent propositions are available in Appendix \ref{app:proofs}. 

Consider a simple example with a single binary covariate $X$ with $\P(X = 1) = 0.7$, where the outcome risk is twice as high in the informatively missing versus non-informatively missing groups for either treatment arm ($\mu^*_a(x) = 2\mu_a(x)$), the treatment doubles the risk of the outcome (so that $\mu_1(x) = 2\mu_0(x)$), and $\mu_0(x) = 0.1 + 0.05x$. Assume there is a moderate amount of informative missingness with $\pi_0^\star(0) = 0.07, \pi_0^\star(1) = 0.12, \pi_1^\star(0) = 0.14$, and $\pi_1^\star(1) = 0.19$. For simplicity, assume that all the non-informative missingness has half of these same probabilities (implying that $\pi_a^\star(x) = \frac{2}{3}\pi_a(x)$). Treatment is randomly assigned ($e(x) = 0.5, x = 0, 1$). This implies that approximately 21\% of the outcomes are missing, with two-thirds of the missingness informative. 

In this scenario, the true ATE is $\Psi_0 \approx 0.17$. However, the naive estimand -- which assumes that all missingness is non-informative -- is $\Psitilde \approx 0.135$, approximately 20\% less. Plugging into \eqref{assumption-free-bounds}, we can bound the ATE in the range $[\ell_0, u_0] \approx [-0.07, 0.35]$. This bound is somewhat wide, and it leaves open the possibility that the true causal effect is negative, even though the naive estimate is sizeable and positive. We will next outline additional assumptions to construct narrower bounds.

\subsection{Narrower bounds}\label{narrower-bounds}

We first consider Assumptions \ref{asmpt:known-missingness}-\ref{asmpt:bounded-missingness}, which parameterize the proportion of informative missingness.

\stepcounter{assumption}

\setcounter{assumption}{6}

\begin{intassumption}[Known proportion informative missingness]\label{asmpt:known-missingness}
    $\P\left\{\frac{\pi_a^\star(\bX)}{\pi_a(\bX)} = \delta_a \right\} = 1 \text{ for some } \delta_a \in [0, 1]$.
\end{intassumption}

\begin{intassumption}[Bounded proportion informative missingness]\label{asmpt:bounded-missingness}
    $\P\left\{\delta_{a\ell} \leq \frac{\pi_a^\star(\bX)}{\pi_a(\bX)} \leq \delta_{au} \right\} = 1 \text{ for some } \delta_{a\ell}, \delta_{au} \in [0, 1]$.
\end{intassumption}

Assumption \ref{asmpt:known-missingness} states that the proportion of informative missingness in each treatment arm is some known constant that does not vary across covariate strata. One could in principle relax this assumption to allow $\delta_{a}$ to vary with $\bX$; however, in practice this function would likely be prohibitively challenging to parameterize or work with. Assumption \ref{asmpt:bounded-missingness} proposes a more feasible relaxation, stating instead that this proportion is bounded above and below across all covariate strata.\footnote{To bound $\Psi_0$ we only require assumptions on $\delta_{au}$; however, $\delta_{a\ell}$ may be useful for bounding $\Psi_1$, discussed in Section \ref{sec:alternative-effects}}. These assumptions immediately allow us to narrow the general bounds in Proposition \ref{prop:1}. 

\begin{corollary}[General bounds on ATE, bounded or known proportion informative missingness]\label{corr:0}
Under Assumptions \ref{asmpt:non-informative}-\ref{asmpt:cens-positivity}, and either \ref{asmpt:known-missingness} or \ref{asmpt:bounded-missingness}, $\ell_{0}^\star \le \Psi_0 \le u_{0}^\star$, where
\begin{align}\label{assumption-free-bounds-delta}
    &\ell_{0}^\star = \tilde{\Psi} - \E\left[\delta_{1u}\pi_1(\bX)\mu_1(\bX) + \delta_{0u}\pi_0(\bX)\{1-\mu_0(\bX)\}\right]\\
    &u_{0}^\star = \tilde{\Psi} + \E\left[\delta_{1u}\pi_1(\bX)\{1 - \mu_1(\bX)\} + \delta_{0u}\pi_0(\bX)\mu_0(\bX)\right].
\end{align}
\end{corollary}

We may also invoke assumptions that relate $\mu_a^\star$ to $\mu_a$ to further narrow these bounds. First, consider the following monotonicity constraints, which result in the bounds given in Proposition \ref{prop:2}.

\stepcounter{assumption}

\setcounter{assumption}{7}
\begin{intassumption}[Monotonicity, positive]\label{asmpt:monotonicity-pos}
    $\P\{\mu_a^\star(\bX) \geq \mu_a(\bX)\} = 1, \qquad a = 0, 1$.
\end{intassumption}

\begin{intassumption}[Monotonicity, negative]\label{asmpt:monotonicity-neg}
    $\P\{\mu_a^\star(\bX) \leq \mu_a(\bX)\} = 1, \qquad a = 0, 1$.
\end{intassumption}

\begin{proposition}[Bounds on ATE under monotonicity]\label{prop:2}
Under Assumptions \ref{asmpt:non-informative}-\ref{asmpt:cens-positivity}, \ref{asmpt:monotonicity-pos}, and either \ref{asmpt:known-missingness} or \ref{asmpt:bounded-missingness}, $\ell^{\text{pos}}_0 \leq \Psi_0 \leq u^{\text{pos}}_0$, where
\begin{align*}
     \ell^{\text{pos}}_0 &= \tilde{\Psi} - \E\left[\delta_{0u}\pi_0(\bX)\{1-\mu_0(\bX)\}\right], \\
    u^{\text{pos}}_0 &= \tilde{\Psi} + \E\left[\delta_{1u}\pi_1(\bX)\{1 - \mu_1(\bX)\}\right].
\end{align*}
Alternatively, under Assumptions \ref{asmpt:non-informative}-\ref{asmpt:cens-positivity}, \ref{asmpt:monotonicity-neg}, either \ref{asmpt:known-missingness} or \ref{asmpt:bounded-missingness}, we may obtain the bounds $\ell^{\text{neg}}_0 \leq \Psi_0 \leq u^{\text{neg}}_0$, where
\begin{align*}
\ell^{\text{neg}}_0 &= \tilde{\Psi} - \E\left[\delta_{1u}\pi_1(\bX)\mu_1(\bX)\right], \\
u^{\text{neg}}_0 &= \tilde{\Psi} + \E\left[\delta_{0u}\pi_0(\bX)\mu_0(\bX)\right].
\end{align*}
\end{proposition}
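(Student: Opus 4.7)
The plan is to start from the identity for $\Psi_0$ established in Proposition \ref{prop:0},
\[
\Psi_0 = \tilde{\Psi} + \E\left[\pi_1^\star(\bX)\{\mu_1^\star(\bX) - \mu_1(\bX)\}\right] - \E\left[\pi_0^\star(\bX)\{\mu_0^\star(\bX) - \mu_0(\bX)\}\right],
\]
and to bound the two unobserved expectations separately by combining the sign restriction supplied by monotonicity with the multiplicative envelope on $\pi_a^\star(\bX)$ supplied by Assumption \ref{asmpt:bounded-censoring}. Since the $a=0$ and $a=1$ summands involve disjoint unknown functions ($\pi_0^\star,\mu_0^\star$ versus $\pi_1^\star,\mu_1^\star$), they can be extremized independently, and the overall upper and lower bounds on $\Psi_0$ follow by plugging in the pointwise extreme values.

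Under positive monotonicity (Assumption \ref{asmpt:monotonicity-pos}), $\mu_a^\star(\bX) - \mu_a(\bX) \ge 0$ almost surely, and because $\mu_a^\star(\bX)\le 1$ we also have $\mu_a^\star(\bX) - \mu_a(\bX) \le 1 - \mu_a(\bX)$. Multiplying by $\pi_a^\star(\bX)\ge 0$ and using $\pi_a^\star(\bX)\le \delta_{au}\pi_a(\bX)$ gives the pointwise envelope
\[
0 \;\le\; \pi_a^\star(\bX)\{\mu_a^\star(\bX) - \mu_a(\bX)\} \;\le\; \delta_{au}\pi_a(\bX)\{1 - \mu_a(\bX)\}
\]
almost surely. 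Taking expectations and substituting the minimum of the $a=1$ term together with the maximum of the $a=0$ term into the representation above yields $\ell_0^{\mathrm{pos}}$; the opposite choice yields $u_0^{\mathrm{pos}}$.

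The negative monotonicity case (Assumption \ref{asmpt:monotonicity-neg}) is symmetric: now $\mu_a^\star(\bX) - \mu_a(\bX) \in [-\mu_a(\bX),0]$ almost surely, so
\[
-\delta_{au}\pi_a(\bX)\mu_a(\bX) \;\le\; \pi_a^\star(\bX)\{\mu_a^\star(\bX) - \mu_a(\bX)\} \;\le\; 0,
\]
and the analogous extremization over the two arms produces $\ell_0^{\mathrm{neg}}$ and $u_0^{\mathrm{neg}}$.

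There is no serious obstacle here; the calculation is pointwise maximization followed by monotonicity of the expectation. The one subtle observation to record is that the lower bound $\delta_{a\ell}$ from Assumption \ref{asmpt:bounded-censoring} does not appear in any of the final expressions: the extremal configurations push $\pi_a^\star$ to its upper endpoint $\delta_{au}\pi_a(\bX)$ in whichever arm contributes to the bound, while the other arm's contribution is driven to $0$ by taking $\mu_a^\star = \mu_a$ (which is admissible under either monotonicity condition, independently of $\delta_{a\ell}$). Consequently, the envelope conditions used are tight and the stated bounds are sharp under Assumptions \ref{asmpt:non-informative}--\ref{asmpt:cens-positivity} together with the relevant monotonicity assumption and Assumption \ref{asmpt:bounded-censoring}.
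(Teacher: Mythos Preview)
Your proof is correct and follows essentially the same approach as the paper: both arguments start from the representation in Proposition~\ref{prop:0} and bound the correction terms $\pi_a^\star(\bX)\{\mu_a^\star(\bX)-\mu_a(\bX)\}$ pointwise using the sign restriction from monotonicity together with $\pi_a^\star(\bX)\le \delta_{au}\pi_a(\bX)$ from Assumption~\ref{asmpt:bounded-censoring}. The paper phrases this as bounding $\E[Y(a)\mid\bx]$ above and below and then subtracting, while you work directly with the two correction summands, but the underlying inequalities are identical; your added observation that $\delta_{a\ell}$ plays no role (and the accompanying sharpness remark) is consistent with the paper's own Remark following Proposition~\ref{prop:2}.
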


\begin{remark}
Assumption \ref{asmpt:bounded-missingness} is again guaranteed to hold for $\delta_{au} = 1$. Assumption \ref{asmpt:bounded-missingness} may thus be discarded if $\delta_{au}$ is replaced by one in the expressions in Proposition \ref{prop:2}.
\end{remark}

Consider again the example in Section \ref{ssec:gbounds}, where we obtained the general bounds $[-0.07, 0.35]$. Under positive monotonicity alone, which holds in our example, we can obtain the slightly more narrow bounds $[-0.00, 0.33]$ (via Proposition \ref{prop:2}). If we are further willing to assume that the proportion of informative missingness in either treatment arm is at most 80\% ($\delta_{1u} = \delta_{0u} = 0.8$), we can further refine the bounds to be approximately $[0.03, 0.29]$. 

Typically, the informative missingness that most threatens identification is monotonic -- e.g., sicker patients are less likely to have complete data -- but we do not assume that monotonicity assumptions will be realistic for all applications. For example, patients may be less likely to follow up with healthcare providers if they are feeling healthier (and thus not in immediate need of care) as well as if they are too sick to seek care. As an alternative, we may consider assumption \ref{asmpt:bounded-risk} below.\footnote{Implicit in assumption \ref{asmpt:bounded-risk} and \ref{asmpt:known-risk} is that $\mu_a(\bx) > 0$ for all $a, \bx$. Alternatively, one could re-state this assumption on the support of $\bx$ where $\mu_a(\bx) > 0$, and define $\mu_a^\star(\bx) = 0$ where $\mu_a(\bx) = 0$. It would then be straightforward as below to derive analogous expressions and estimators for the implied bounds, though we do not explore this further in this paper.}

\begin{intassumption}[Bounded outcome risk]\label{asmpt:bounded-risk}
    $\P\left\{\tau_a\inv \leq \frac{\mu_a^\star(\bX)}{\mu_a(\bX)} \leq \tau_a \right\} = 1, a = 0, 1 \text{ for some } \tau_a \ge 1$.
\end{intassumption}

This assumption restricts how much informative missingness may increase/decrease the risk of the primary outcome, resulting in the following bound:

\begin{proposition}[Bounds on ATE, bounded outcome risk]\label{prop:3}
Under Assumptions \ref{asmpt:non-informative}-\ref{asmpt:cens-positivity}, either \ref{asmpt:bounded-missingness} or \ref{asmpt:known-missingness}, and \ref{asmpt:bounded-risk}, $\widetilde{\ell}_0 \le \Psi_0 \le \widetilde{u}_0$, where

\begin{align*}
\widetilde{\ell}_0 &= \tilde{\Psi} + \E\left[\delta_{1u}(\tau_1\inv - 1)\pi_1(\bX)\mu_1(\bX) - \delta_{0u}(\tau_0 - 1)\pi_0(\bX)\mu_0(\bX) \right], \\
\widetilde{u}_0 &= \tilde{\Psi} + \E\left[\delta_{1u}(\tau_1 - 1)\pi_1(\bX)\mu_1(\bX) - \delta_{0u}(\tau_0\inv - 1)\pi_0(\bX)\mu_0(\bX) \right].
\end{align*}
\end{proposition}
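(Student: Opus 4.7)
Starting from the identity in Proposition \ref{prop:0},
\[
\Psi_0 \;=\; \tilde\Psi \;+\; \E\bigl[\pi_1^\star(\bX)\{\mu_1^\star(\bX) - \mu_1(\bX)\} \;-\; \pi_0^\star(\bX)\{\mu_0^\star(\bX) - \mu_0(\bX)\}\bigr],
\]
the proof reduces to bounding each product inside the expectation pointwise in $\bX$ and then invoking monotonicity of $\E[\cdot]$.

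First I would recast Assumptions \ref{asmpt:bounded-risk} and \ref{asmpt:bounded-censoring} as pointwise box constraints. Since $\tau \ge 1$ and $\mu_a(\bX) \ge 0$, Assumption \ref{asmpt:bounded-risk} gives
\[
(\tau^{-1} - 1)\,\mu_a(\bX) \;\le\; \mu_a^\star(\bX) - \mu_a(\bX) \;\le\; (\tau - 1)\,\mu_a(\bX),
\]
so the signed deviation lies in an interval with nonpositive left endpoint and nonnegative right endpoint. Assumption \ref{asmpt:bounded-censoring} gives $0 \le \delta_{a\ell}\pi_a(\bX) \le \pi_a^\star(\bX) \le \delta_{au}\pi_a(\bX)$. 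Each product $\pi_a^\star(\bX)\{\mu_a^\star(\bX) - \mu_a(\bX)\}$ is bilinear in its two factors over this rectangle, so its extrema over the feasible set are attained at corners.

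For the upper bound $\widetilde u$ I would maximize the first product and minimize the second. The maximum of $\pi_1^\star(\bX)\cdot(\mu_1^\star(\bX) - \mu_1(\bX))$ over the rectangle is nonnegative and is attained at $\pi_1^\star = \delta_{1u}\pi_1(\bX)$, $\mu_1^\star - \mu_1 = (\tau - 1)\mu_1(\bX)$, yielding $\delta_{1u}(\tau - 1)\pi_1(\bX)\mu_1(\bX)$. The minimum of $\pi_0^\star(\bX)\cdot(\mu_0^\star(\bX) - \mu_0(\bX))$ is nonpositive and is attained at $\pi_0^\star = \delta_{0u}\pi_0(\bX)$, $\mu_0^\star - \mu_0 = (\tau^{-1} - 1)\mu_0(\bX)$, yielding $\delta_{0u}(\tau^{-1} - 1)\pi_0(\bX)\mu_0(\bX)$. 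Subtracting and integrating in $\bX$ delivers $\widetilde u$; the lower bound $\widetilde\ell$ follows by the mirror argument, swapping the roles of $(\tau - 1)$ and $(\tau^{-1} - 1)$ across the two terms.

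There is no real obstacle: the one point worth flagging during the write-up is that because the sign of $\mu_a^\star - \mu_a$ is unrestricted a priori, the extremal value of each product (in either direction) always selects the upper censoring endpoint $\delta_{au}$, which is why $\delta_{a\ell}$ does not appear in the stated bounds.
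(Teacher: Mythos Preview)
Your proposal is correct and follows essentially the same argument as the paper: both start from the identity $\E[Y(a)\mid\bx]=\mu_a(\bx)+\pi_a^\star(\bx)\{\mu_a^\star(\bx)-\mu_a(\bx)\}$, convert Assumption~\ref{asmpt:bounded-risk} into the two-sided bound $(\tau^{-1}-1)\mu_a\le\mu_a^\star-\mu_a\le(\tau-1)\mu_a$, then push $\pi_a^\star$ to $\delta_{au}\pi_a$ using the sign of the bracketed term, and finally difference and integrate. Your bilinear-over-a-rectangle framing is a clean repackaging of the paper's sequential inequalities, and your closing remark explaining why $\delta_{a\ell}$ drops out is a nice addition not made explicit in the paper.
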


\begin{remark}\label{rmk:prop3}
This bound is sharp when $\tau_a \mu_a(\bx) \le 1$ for $a = 0, 1$ and all $\bx$. However, this may not hold for any given choice of $\tau_a$. This will be a greater concern when the risk of the outcome is high (near one), leading to an overly conservative bound. One may instead start from the assumption that $\frac{\mu_a^\star(\bx)}{\mu_a(\bx)}\le\min(1/\mu_a(\bx), \tau_a)$ and integrate over the resulting expression to obtain a sharp bound. We give the explicit formulation of the resulting bounds, along with doubly-robust style estimators of these quantities, in Appendix \ref{app:addtlres}. \end{remark}

Continuing with our example, assuming that the risk of the outcome in observations with missing outcomes is no more than three times and no less than one-third of the risk of the outcome in the non-censored observations ($\tau = 3$), then even if we allow all missingness to be informative ($\delta_{au} = 1$), we obtain the bounds $[0.04, 0.29]$. If we further restrict the proportion of informative missingness to be no more than 80\%, we obtain the slightly narrower bounds $[0.06, 0.26]$. 

\subsection{Choosing sensitivity parameters}

Some of our proposed sensitivity analyses require specifying values for unknown parameters, such as the proportion of informative missingness $\delta_{au}$ and the outcome risk ratio $\tau_a$. Depending on the context of the application, we recommend four broad strategies to choose these values: (1) imposing the weakest assumptions possible; (2) using subject-area expertise; (3) using data-informed approaches; and (4) conducting an analysis across a range or grid of parameters (which, to be clear, is not mutually exclusive with the first three strategies). We discuss these strategies in more detail below.

First, we consider minimal assumption analyses. Researchers who feel unable to reason about the values of these parameters may prefer to make as few assumptions as possible. In this case, one may set $\delta_{au}=1$ (and, where relevant, $\delta_{a\ell}=0$). This corresponds to allowing the maximum possible proportion of informative missingness and therefore imposes no restriction on the missingness mechanism. Bounds obtained under these values represent the most conservative conclusions that can be drawn under the assumed structural conditions. Similarly, if we wish to make no assumptions on $\tau_a$, researchers may use the general bounds provided in Proposition \ref{prop:1}. Naturally, these bounds will typically be the widest.

Second, domain knowledge can meaningfully constrain the sensitivity parameters for some applications. For example, investigators may know that some missing outcomes are due to administrative censoring (e.g., the study ending before outcomes are observed), which would suggest that only a fraction of missingness is plausibly informative. Such reasoning can justify values of $\delta_{au}<1$. Similarly, subject-matter experts may have intuition about plausible outcome risk differences under informative missingness. For example, investigators might believe that the outcome risk among informatively missing individuals could plausibly be no more than twice as large (or half as large) as among observed individuals, corresponding to $\tau_a=2$.

Third, we may use data to help calibrate the sensitivity parameters. For example, in some cases an auxiliary dataset may be available that contains information on the informative missingness indicator $U_I$, allowing one to estimate quantities such as $\frac{\P(U_I=1)}{\P(C=1)}$, or even $\pi_a^\star(\bX)$ directly, depending on what variables are observed. These quantities can provide empirical guidance about plausible ranges for $\delta_{au}$. To inform $\tau_a$, even with no auxiliary dataset, one may compare the predictive strength of observed covariates with respect to missingness. Suppose there exist covariates $(\bX,U_X)$ such that
\[
Y \perp U_I \mid \bX,U_X.
\]
Although this assumption rules out direct causal effects of $U_I$ on $Y$, it allows the observed covariates to serve as a proxy for the confounding strength of informative missingness. One can then compute
\[
\Gamma_a =
\frac{\E[Y\mid A=a,C=1]}
     {\E[Y\mid A=a,C=0]} =
\frac{\E[\E[Y\mid A=a,C=0,\bX]\mid A=a,C=1]}
     {\E[\E[Y\mid A=a,C=0,\bX]\mid A = a, C = 0]},
\]
which summarizes the ability of the entire observed covariate set to explain outcome differences between censored and uncensored individuals within treatment arm $A=a$. This quantity can serve as a heuristic benchmark when specifying values of $\tau_a$. While these proposals outline some basic ideas, developing principled methods for calibrating these sensitivity parameters to observed data would be a useful area for future work.

Finally, regardless of whether researchers wish to make minimal assumptions, use subject-area expertise, or a data-driven approach, they may also wish to evaluate the bounds over a range or a grid of sensitivity parameter values. For example, subject-area expertise may only inform the range of plausible $\delta_{au}$; similarly, a calibration analysis may suggest a value for $\P(U_I = 1)/\P(C = 1)$, which can restrict the plausible range of $\pi_a^\star(\bx)$, but does not inform $\delta_{au}$ directly. In these cases one may evaluate the bounds across a range or grid, and plot how the bounds vary over $(\delta_{au},\tau_a)$ to provide a transparent view of how conclusions depend on the assumed parameters. This approach is particularly useful when analysts cannot commit to a single parameter choice but wish to understand how robust their conclusions are across a range of plausible assumptions.

\subsection{Point identification and tipping point sensitivity analyses}\label{sec:sensitivity}

The strategies outlined above all require specifying values or ranges for the sensitivity parameters. An alternative approach is to instead ask what magnitude of informative missingness would be required to explain away the observed association. This type of analysis, often called a ``tipping-point'' analysis, starts from the naive estimate $\tilde{\Psi}$ and determines the values of the sensitivity parameters for which the causal effect could change sign. Each of the previously proposed bounds may be ``inverted'' to obtain such a sensitivity analysis. That is, for fixed values of all but one sensitivity parameter, one may set the expression equal to zero and solve for the value of the parameter that would satisfy the equation. However, we emphasize that the grid searches proposed above also often reveal tipping points without directly solving for these values, and should not be viewed as incompatible or mutually exclusive with these tipping point analyses. 

Nevertheless, consider again the running example above to illustrate. In place of the bounds based on Proposition \ref{prop:2}, we may rearrange terms in Proposition \ref{prop:2} to determine what value of $\delta_{0u}$ or $\delta_{1u}$ are required for $\ell_0^{\text{pos}} > 0$ or $u_0^{\text{pos}} < 0$. In this example, we can guarantee that $\ell_0^\text{pos} \geq 0$ if $\delta_{0u} \leq \Psitilde/\E\left[\pi_0(\bX)(1-\mu_0(\bX))\right] \approx 1.00$ (which is always true because $\delta_{0u} \leq 1$ by definition). This type of analysis may be appealing in settings where researchers struggle to specify particular values of the sensitivity parameters but yet have a sense of which sensitivity parameters would be implausible. 

However, in cases where we also wish to make more substantive restrictions on the outcome risk ratio, the sensitivity analysis proposed in Proposition \ref{prop:3} may be harder to invert in an interpretable way, in part due to the numbers of parameters. In order to ease communication of such analyses to collaborators and non-statisticians, one may wish to reduce the complexity of this analysis. We therefore also propose a simplified analysis based primarily on a single sensitivity parameter $\tau$ that indicates how much informative missingness changes the risk. Specifically, we first show how $\Psi_0$ may be identified by assuming that the proportion of informative missingness ratio and outcome risk ratio are constant and known quantities. Assumption \ref{asmpt:known-risk} and Proposition \ref{prop:4} formalize this result.

\begin{intassumption}[Known outcome risk ratio]\label{asmpt:known-risk}
    $\P\left\{\frac{\mu_a^\star(\bX)}{\mu_a(\bX)} = \tau \right\} = 1, a = 0, 1 \text{ for some } \tau > 0$.
\end{intassumption}

\begin{proposition}[Point identification of the ATE]\label{prop:4}
Under Assumptions \ref{asmpt:non-informative}-\ref{asmpt:cens-positivity}, \ref{asmpt:known-missingness}, and \ref{asmpt:known-risk},
\begin{align*}
\Psi_0 = \tilde{\Psi} + \E\left[(\tau - 1)\left\{\delta_1\pi_1(\bX)\mu_1(\bX) - \delta_0\pi_0(\bX)\mu_0(\bX)\right\}\right]. 
\end{align*}    
\end{proposition}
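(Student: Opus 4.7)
The plan is to obtain Proposition~\ref{prop:4} as a direct corollary of Proposition~\ref{prop:0}, using the two ratio-type assumptions to replace each of the unidentified factors $\pi_a^\star(\bX)$ and $\mu_a^\star(\bX)$ by an identifiable multiple of the corresponding observed-data quantity. Since Proposition~\ref{prop:0} already expresses $\Psi_0$ as $\tilde\Psi$ plus an $\E[\,\cdot\,]$ term involving only $\pi_a^\star(\bX)$, $\mu_a^\star(\bX)$, $\pi_a(\bX)$, and $\mu_a(\bX)$, the remaining work is purely algebraic.

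First I would invoke Proposition~\ref{prop:0}, which holds under Assumptions~\ref{asmpt:non-informative}--\ref{asmpt:cens-positivity}, to write
\begin{align*}
\Psi_0 = \tilde\Psi + \E\bigl[\pi_1^\star(\bX)\{\mu_1^\star(\bX)-\mu_1(\bX)\} - \pi_0^\star(\bX)\{\mu_0^\star(\bX)-\mu_0(\bX)\}\bigr].
\end{align*}
Next, under Assumption~\ref{asmpt:known-risk} we have $\mu_a^\star(\bX) = \tau\,\mu_a(\bX)$ almost surely, so $\mu_a^\star(\bX)-\mu_a(\bX) = (\tau-1)\mu_a(\bX)$ for $a=0,1$. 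Under Assumption~\ref{asmpt:known-censoring} we similarly have $\pi_a^\star(\bX) = \delta_a\,\pi_a(\bX)$ almost surely. Substituting both relations into the display above and pulling the constant $\tau-1$ out of the expectation gives
\begin{align*}
\Psi_0 = \tilde\Psi + (\tau-1)\,\E\bigl[\delta_1\pi_1(\bX)\mu_1(\bX) - \delta_0\pi_0(\bX)\mu_0(\bX)\bigr].
\end{align*}
Finally, recognizing the two resulting expectations as $\grave\mu_1 := \E[\pi_1(\bX)\mu_1(\bX)]$ and $\grave\mu_0 := \E[\pi_0(\bX)\mu_0(\bX)]$ (with the scalars $\delta_1,\delta_0$ factored out) yields the claimed closed form $\Psi_0 = \tilde\Psi + (\tau-1)(\delta_1\grave\mu_1 - \delta_0\grave\mu_0)$.

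There is essentially no obstacle here: all the substantive content—using Assumptions~\ref{asmpt:non-informative}--\ref{asmpt:cens-positivity} together with the tower property to reduce $\E\{Y(a)\}$ to an expression in $\tilde\Psi$, $\pi_a^\star$, and $\mu_a^\star$—was absorbed into Proposition~\ref{prop:0}. The only thing Proposition~\ref{prop:4} adds is that the two unidentified ratios $\mu_a^\star/\mu_a$ and $\pi_a^\star/\pi_a$ are pinned down by Assumptions~\ref{asmpt:known-risk} and \ref{asmpt:known-censoring} to known constants, which immediately turns the bound-producing expectation into an identified functional of the observed-data distribution. The mildly subtle step worth stating carefully is that Assumption~\ref{asmpt:known-risk} is an equality of the \emph{ratio} rather than of the difference, so the factor $(\tau-1)$ does not depend on $\bX$ and may be taken outside the expectation; this is what allows the final rewriting in terms of the two scalar summaries $\grave\mu_0,\grave\mu_1$.
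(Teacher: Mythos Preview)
Your proof is correct and follows exactly the approach the paper takes: substitute the identities $\mu_a^\star(\bX)=\tau\mu_a(\bX)$ and $\pi_a^\star(\bX)=\delta_a\pi_a(\bX)$ from Assumptions~\ref{asmpt:known-risk} and~\ref{asmpt:known-censoring} into the expression for $\Psi_0$ from Proposition~\ref{prop:0}, then simplify. The paper does not spell this out separately for Proposition~\ref{prop:4}, but its proof of the analogous point-identification result for $\Psi_1,\Psi_2$ (Proposition~\ref{prop:9}) reads simply ``These results immediately follow from applying 7.a to the expressions derived in Propositions 7 and 8,'' so your argument is precisely in the intended spirit.
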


\begin{remark}
This expression is also valid for any continuous outcome under the stated assumptions.
\end{remark}

We can then use Proposition \ref{prop:4} to examine what combination of $(\tau, \delta_0, \delta_1)$ would be required to explain away the observed naive association $\Psitilde$, making succinct statements based only on $\tau$ when we let $\delta_0, \delta_1$ take extreme values. For example, assume, as in our running example, that $\Psitilde > 0$. Then, setting $\delta_0=1$ and $\delta_1 = 0$ (the most adversarial configuration of informative missingness), we may conclude that $\tau$ must be at least as large as $1 + \Psitilde/\E[\mu_0(\bX)\pi_0(\bX)]$ to possibly explain away the observed association. Correspondingly, if $\tau < 1 + \Psitilde/\E[\mu_0(\bX)\pi_0(\bX)]$, then $\Psi_0 > 0$ under the assumptions of Proposition \ref{prop:4}. 

In our running example, this analysis allows us to conclude that unless informative missingness increases risk by at least a factor of $\tau = 7.12$, there is no way to explain the observed naive association of $\Psitilde = 0.135$. And, even if $\tau$ could reasonably be as large as that, the proportion of informative missingness in the control group $\delta_0$ has to be at least as large as $\frac{\delta_1\E[\mu_1(\bX)\pi_1(\bX)] + \Psitilde/(\tau - 1)}{\E[\mu_0(\bX)\pi_0(\bX)]}$ when $\tau \ne 1$. If we take $\tau = 10$, then we require $\delta_0 > 0.68 + 3.29\delta_1$, which restricts the amount of informative missingness in the treated group to be very small and in the control group to be nearly universal. We plot the regions implied by Proposition \ref{prop:4} in Figure \ref{fig:regions}. For a given value of $\tau$, corresponding to one line in Figure \ref{fig:regions}, $(\delta_0, \delta_1)$ must lie in the region to the lower right of the line if informative missingness could possibly explain the observed association.

\begin{figure}
 \begin{center}
     \includegraphics[scale=0.8]{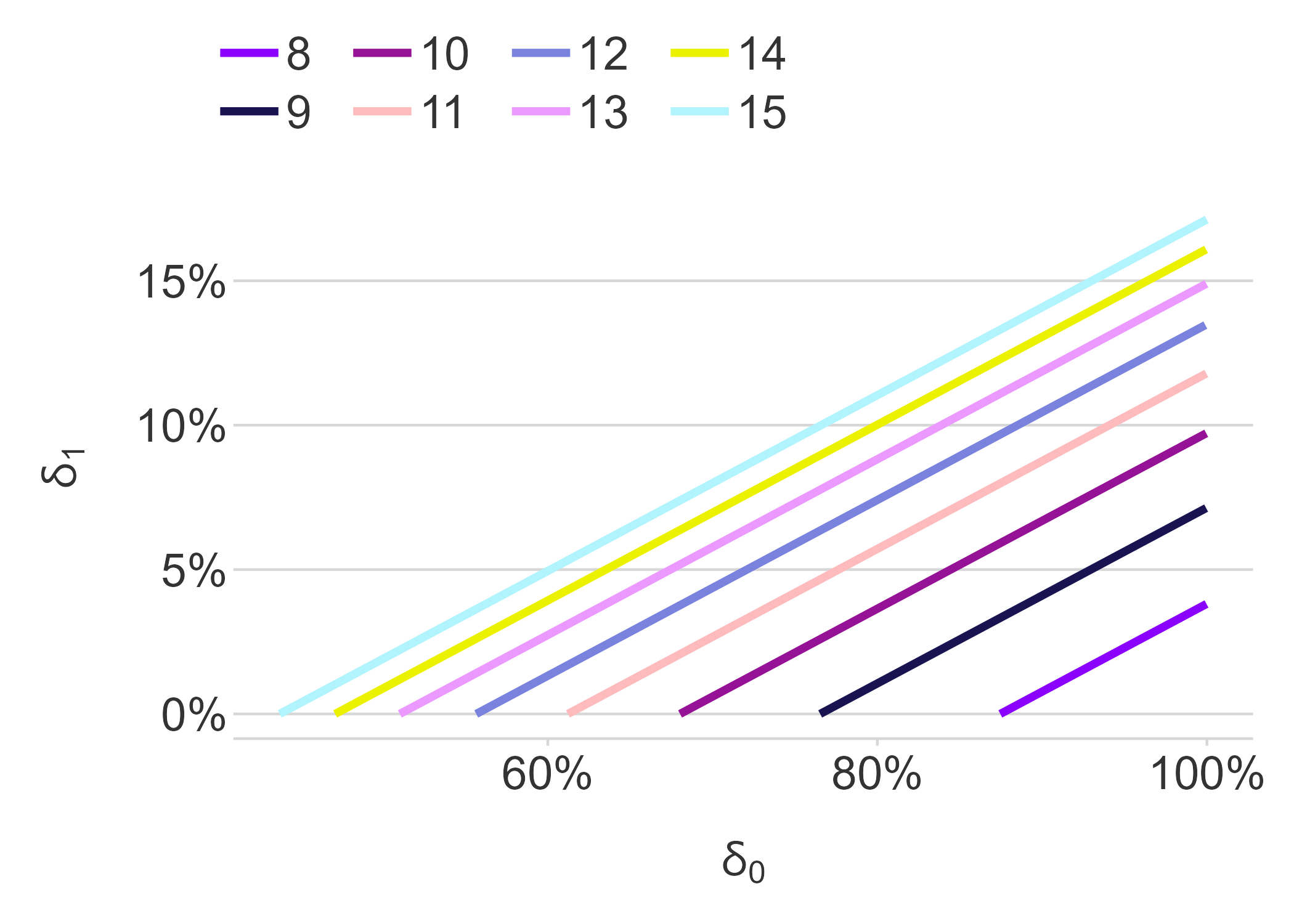}
 \end{center}
 \caption{Sensitivity analysis parameters corresponding to $\Psi_0 = 0$ under the assumptions of Proposition \ref{prop:4}. Each line corresponds to a different value of $\tau$, and each point on a line indicates a set of sensitivity parameters ($\tau, \delta_0, \delta_1$) corresponding to $\Psi_0 = 0$. Any values of $(\delta_0, \delta_1)$ to the lower right of the $\tau-$line are sensitivity parameters that can explain the observed association $\Psitilde$.}
 \label{fig:regions}
\end{figure}

We again construct this sensitivity analysis to be as simple as possible. To this end, this sensitivity analysis relies on Assumption \ref{asmpt:known-risk} which states that the increased risk due to informative missingness is the same in the treated and control groups. This is not required, and one may instead assume two different risk ratios $\tau_0, \tau_1$. This additional flexibility comes at the expense of a sensitivity analysis that is harder to interpret, and we therefore prefer the former approach. Furthermore, we again stress that all of the bounds in preceding sections may be ``inverted'' to make similar statements about what type of informative missingness might explain away the observed association under their stated assumptions. However, the analysis proposed in this section may be relatively easier to communicate than others.

\section{Estimation and inference}\label{sec:estimation}

In practice, since the nuisance functions $(e, \pi_0, \pi_1, \mu_0, \mu_1)$ are unknown, we cannot compute the bounds exactly. Instead, we must estimate the bounds as functions of these quantities using observed data. To do this, we propose using an influence-function-based estimation approach. Estimators based on influence functions typically have a so-called ``doubly-robust'' property, enabling them to achieve parametric rates of convergence while still allowing for the use of non-parametric and machine learning estimation methods for the nuisance functions (e.g. the propensity score $e$, the missingness probabilities $\pi_a$, and the outcomes models $\mu_a$ -- collectively denoted by $\eta = \{e(\cdot), \pi_0(\cdot), \pi_1(\cdot), \mu_0(\cdot), \mu_1(\cdot)\}$). This property occurs because the errors of influence-function-based estimators are products of the errors of nuisance function estimates (call them $\etahat = \{\ehat(\cdot), \pihat_0(\cdot), \pihat_1(\cdot), \muhat_0(\cdot), \muhat_1(\cdot)\}$). Consequently, the overall estimator can achieve a faster rate of convergence even when the individual components of $\etahat$ converge at slower rates (see \cite{kennedy2022semiparametric} for more details). As a result, we require minimal assumptions on the data distribution to estimate the bounds. 

All proposed bounds (and the point identification) can be expressed as linear combinations of elements of the vector $\btheta = \mathbb{E}\{\zeta(\bX)\},$ where $\zeta(\bx) = \{\mu_0(\bx), \mu_1(\bx), \pi_0(\bx), \pi_1(\bx),  \mu_0(\bx)\pi_0(\bx), \mu_1(\bx)\pi_1(\bx)\}$.  Because the influence function of a linear combination of functionals is a linear combination of the same influence functions \cite{kennedy2022semiparametric}, it suffices to state the influence function of the components of $\btheta$. All bounding quantities can be estimated based on the following influence functions: 

\begin{proposition}\label{prop:5}
The uncentered influence functions for the functionals $\E\{\mu_a(\bX)\}$, $\E\{\pi_a(\bX)\}$, and $\E\{\mu_a(\bX)\pi_{a}(\bX)\}$ are given by $\varphi_{1, a}(\bO; \eta), \varphi_{2, a}(\bO; \eta)$, $\varphi_{3, a}(\bO; \eta)$, respectively, where
\begin{align*}
\varphi_{1, a}(\bO; \eta) &= \frac{\I\{C = 0, A = a\}}{(1 - \pi_a(\bX))e_a(\bX)}\{Y - \mu_a(\bX)\} + \mu_a(\bX), \\
\varphi_{2, a}(\bO; \eta) &= \frac{\I\{A = a\}}{e_a(\bX)}\{C - \pi_a(\bX)\} + \pi_a(\bX), \\
\varphi_{3, a}(\bO; \eta) &= \varphi_{1, a}(\bO; \eta)\pi_{a}(\bX) + \varphi_{2,a}(\bO; \eta)\mu_a(\bX) - \mu_a(\bX)\pi_{a}(\bX).
\end{align*}
\end{proposition}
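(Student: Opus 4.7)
The plan is to derive each of the three uncentered influence functions via the standard pathwise-derivative calculus for nonparametric models (as in \cite{kennedy2022semiparametric}). Throughout, I will work with parametric submodels $P_t$ satisfying $dP_t/dP = 1 + t\,s(\bO) + O(t^2)$ for mean-zero bounded scores $s$, and verify that each claimed $\varphi_{j,a}$ satisfies (i) $\E[\varphi_{j,a}(\bO;\eta)] = \Psi_j$, the target functional, so the IF is uncentered; and (ii) $\frac{d}{dt}\Psi_j(P_t)\big|_{t=0} = \E[\{\varphi_{j,a}(\bO;\eta) - \Psi_j\}\,s(\bO)]$, the defining property of the canonical gradient.

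For $\varphi_{1,a}$, I would recast $\E\{\mu_a(\bX)\}$ as a mean-under-missingness functional with selection indicator $S = \I\{A = a, C = 0\}$ and observation probability $\P(S = 1 \mid \bX) = e_a(\bX)(1 - \pi_a(\bX))$. This is the classical AIPW setup, and the claimed form follows immediately from the standard derivation of its efficient influence function under a nonparametric model. The key check is that $\E[\I\{A = a, C = 0\}\{Y - \mu_a(\bX)\} \mid \bX] = 0$ by iterated expectations, which gives $\E[\varphi_{1,a} \mid \bX] = \mu_a(\bX)$ and yields both the uncentered property and cancellation of the leading-order bias in the pathwise derivative. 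The derivation for $\varphi_{2,a}$ is analogous but simpler: $C$ is fully observed whenever $A = a$, so only the treatment indicator $\I\{A = a\}$ with weight $1/e_a(\bX)$ appears, and the same AIPW identity gives the claimed form with $\E[\varphi_{2,a} \mid \bX] = \pi_a(\bX)$.

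For $\varphi_{3,a}$, I would apply the product rule for uncentered influence functions: if $\varphi_f, \varphi_g$ are uncentered IFs of $\E\{f(\bX)\}, \E\{g(\bX)\}$ satisfying $\E[\varphi_f \mid \bX] = f(\bX)$ and $\E[\varphi_g \mid \bX] = g(\bX)$, then $\varphi_f(\bO)\,g(\bX) + \varphi_g(\bO)\,f(\bX) - f(\bX)g(\bX)$ is the uncentered IF of $\E\{f(\bX)g(\bX)\}$. Setting $f = \mu_a, g = \pi_a$ produces the stated $\varphi_{3,a}$. The unbiasedness check $\E[\varphi_{3,a}] = \E[\mu_a(\bX)\pi_a(\bX)]$ is immediate via iterated expectations using the two conditional-mean identities above. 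For the pathwise-derivative condition, I would expand $\frac{d}{dt}\E_{P_t}\{\mu_a^{(t)}(\bX)\,\pi_a^{(t)}(\bX)\}\big|_{t=0}$ into three pieces --- one from each perturbed regression and one from the perturbed marginal of $\bX$ --- and match these to the corresponding decomposition of $\E[\{\varphi_{3,a} - \Psi\}\,s(\bO)]$ term-by-term using the already-established pathwise-derivative identities for $\varphi_{1,a}$ and $\varphi_{2,a}$.

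The main obstacle is the bookkeeping in the product step: verifying that the $-\mu_a(\bX)\pi_a(\bX)$ correction in $\varphi_{3,a}$ exactly cancels the double counting that would otherwise arise because $\varphi_{1,a}$ and $\varphi_{2,a}$ each contain their regression function as an additive piece. This reduces to careful tracking of conditional-mean-zero residuals and tower-property cancellations rather than any fundamentally deep calculation; given the explicit forms of $\varphi_{1,a}$ and $\varphi_{2,a}$, the product rule essentially reduces the verification to algebra.
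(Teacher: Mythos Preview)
Your proposal is correct and follows essentially the same approach as the paper. The paper's proof (labeled ``Proposition 6'' in the appendix) treats the covariates as discrete, invokes the well-known pointwise EIF identity $\text{EIF}[\E(Y\mid\bX=\bx)] = \frac{\I(\bX=\bx)}{\P(\bX=\bx)}\{Y-\E(Y\mid\bX=\bx)\}$, and then applies the chain/product rule directly to each of the three functionals---which is exactly your AIPW identification for $\varphi_{1,a},\varphi_{2,a}$ and your product-rule step for $\varphi_{3,a}$, just presented via the discrete-covariate shortcut rather than the explicit pathwise-derivative calculus you outline.
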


The derivations of these influence functions follow naturally from previously established results (see, e.g., \cite{hahn1998role}, \cite{luedtke2015statistics}); for completeness, we include all proofs in Appendix \ref{app:proofs}. To construct estimators, we plug-in unknown nuisance functions with their estimates and compute the empirical average of these influence functions. For example, we may estimate $\E\{\mu_a(\bX)\}$ by $\P_n\{\varphi_{1, a}(\bO; \etahat)\} = n\inv \sum_{i=1}^n\varphi_{1, a}(\bO_i; \etahat)$.

Suppose the target of inference is $\theta_\bc = \bc\trans\btheta$, a linear combination of $\btheta$ (and where the elements of $\bc$ are finite). Collect the influence functions into the vector $\mathbf{\varphi}(\bO, \eta) = \left\{\varphi_{1,0}(\bO, \eta), \varphi_{1,1}(\bO, \eta),
\varphi_{2,0}(\bO, \eta), \varphi_{2,1}(\bO, \eta),
\varphi_{3,0}(\bO, \eta), \varphi_{3,1}(\bO, \eta)\right\}.$  All the quantities discussed above, including the bounds, can be written in this form. Corollary \ref{corr:1} in Appendix \ref{app:addtlres} provides explicit expressions for all previously considered quantities. We now describe the asymptotic behavior of the estimators of these quantities.

\begin{theorem}\label{thm:1}
    Let $\thetahat_\bc = \mathbb{P}_n\{\bc\trans\varphi(\bO; \hat{\eta})\}$. Suppose that $\hat{\eta}$ is estimated on an independent sample of size $n$, and assume that $\|\bc\trans\varphi(\bO; \hat{\eta}) - \bc\trans\varphi(\bO; \eta)\| = o_p(1)$. Then, we have
    \begin{align*}
    \thetahat_\bc - \theta_\bc &= \mathbb{P}_n[\bc\trans\varphi(\bO; \eta) - \theta_\bc] + o_p(n^{-1/2}) + \mathcal{O}_p(R_2(\thetahat_\bc)).
    \end{align*}
    where
    \begin{align*}
        |R_2(\thetahat_\bc)| &\lesssim \sum_{a=0,1}\left(\|\mu_a(\bX) - \hat{\mu}_a(\bX)\|\left\{\|\pi_a(\bX) - \hat{\pi}_a(\bX)\| + \|e(\bX) - \hat{e}(\bX)\|\right\} \right. \\ &\left.+ \|e(\bX) - \hat{e}(\bX)\|\|\pi_a(\bX) - \hat{\pi}_a(\bX)\|\right). 
    \end{align*}
    Therefore, if $R_2(\thetahat_\bc) = o_p(n^{-1/2})$, it follows by the Central Limit Theorem that
     \begin{align*}
        \sqrt{n}\bigl(\thetahat_\bc - \theta_\bc\bigr) \stackrel{d}\to \mathcal{N}\bigl(0, \bc\trans \V\{\varphi(\bO; \eta)\}\bc\bigr).
     \end{align*}
\end{theorem}
Theorem \ref{thm:1} states that any linear combination of the estimated influence functions of $\btheta$ has a limiting normal distribution with asymptotic variance given by the variance of the corresponding linear combination of the true influence functions. This result holds whenever the nuisance estimators converge sufficiently fast so that the second-order remainder term $R_2(\thetahat_\bc)$ is $o_p(n^{-1/2})$. A sufficient condition is that the nuisance estimators converge faster than $n^{-1/4}$ in $L_2(\P)$ norm. Many off-the-shelf machine learning algorithms can achieve these rates under minimal assumptions on the data-generating mechanism \cite{kennedy2022semiparametric}. An asymptotically valid $(1-\alpha)\times 100\%$ confidence
interval may then be constructed as
\begin{align}\label{eqn:ci}
\thetahat_\bc \pm z_{1-\alpha/2}\sqrt{\bc^\top \widehat{\V}\{\varphi(\bO; \hat\eta)\}\bc/n},
\end{align}
where $z_q$ is the $q$-th quantile of the standard normal distribution.

To illustrate, the point identification from Proposition \ref{prop:4} corresponds to choosing $\bc_1 = (-1, 1, 0, 0, (1-\tau)\delta_0, (\tau-1)\delta_1)^\top$ so that $\theta_\bc = \Psi_0$ under Assumptions \ref{asmpt:non-informative}-\ref{asmpt:cens-positivity}, \ref{asmpt:known-missingness}, and \ref{asmpt:known-risk}. The influence-function based estimator for this quantity is then, 
\begin{align*}
    \Psihat_0 = \thetahat_{\bc_1} = \mathbb{P}_n\{\bc_1\trans\varphi(\bO; \hat{\eta})\},
\end{align*} 
with the corresponding confidence interval given by \eqref{eqn:ci}, substituting $\bc_1$ for $\bc$.

\begin{remark}
    The bound on the second-order remainder term $|R_2(\thetahat_\bc)|$ given in Theorem \ref{thm:1} may involve more error products than are strictly required for specific bounds. The precise second-order remainder terms for individual estimators of each bound are given in Corollary \ref{corr:2} in Appendix \ref{app:addtlres}.
\end{remark}

\begin{remark}
    Assuming $\etahat$ comes from an independent sample facilitates the statement of Theorem \ref{thm:1} but is not required, provided that cross-fitting is used to separate the estimation of nuisance functions from their evaluation in the influence function (see \cite{chernozhukov2018double}). 
\end{remark}

\begin{remark}\label{rmk:uniform}
    In cases where analysts wish to evaluate the bounds across a grid or range of sensitivity parameters, one may construct uniform confidence bands (that is, confidence intervals with simultaneous coverage guarantees for all parameters in some parameter space) at level $1-\alpha$ using the multiplier bootstrap under minimal additional conditions. We formalize this in Theorem \ref{thm:4} in Appendix \ref{app:multiplier}. 
\end{remark}

\begin{remark}
    As noted previously, all bounds and identifying expressions may be inverted to obtain tipping points, that is, values of the sensitivity parameters that set the bound or identifying expression equal to zero. In principle, such tipping points may be treated as functionals of the observed data distribution and estimated directly using influence-function based methods. In many cases, these quantities are ratios of linear functionals of $\btheta$ so that their efficient influence functions follow from standard results \cite{kennedy2022semiparametric}. We do not pursue such direct estimators here. Instead, because many analysts will evaluate the bounds over a grid of sensitivity parameters, we recommend inferring tipping points from the sensitivity curve itself, namely as the parameter value at which the lower or upper bound crosses zero. Combined with the uniform confidence bands described in Remark \ref{rmk:uniform}, this yields a simultaneous uncertainty quantification for the entire sensitivity curve and therefore an uncertainty region for the tipping point.
\end{remark}

\section{Alternative target estimands}\label{sec:alternative-effects}

The ATE may not always be a relevant causal estimand, particularly in scenarios where the informative missingness event $U_{I}$ can be thought to \textit{prevent} the outcome $Y$ from occurring -- for example, when $U_I$ reflects death. In this section, we therefore discuss two alternative estimands.

We first define a composite outcome $Z = \I\{Y = 1 \text{ or } U_I = 1\}$, which indicates whether either $Y$ or $U_I$ has occurred. This composite outcome avoids complications related to defining $Y(a)$ after $U_I(a)$. Notice that $Z(a) = 1$ whenever $U_I(a) = 1$, regardless of the value or existence of $Y(a)$. This treatment effect (the ``composite ATE'', hereafter) is
    $\Psi_1 = \E\{Z(1) - Z(0)\}$.
The composite ATE is appropriate when both $U_I$ and $Y$ are undesirable outcomes (e.g., $U_I$ is death and $Y$ is diabetes onset) or desirable outcomes (e.g., $U_I$ is study drop-out due to treatment success and $Y$ is a marker of positive health status) that the treatment may uniformly negatively or positively affect.

In some cases, however, this composite outcome may not be of interest. Building on ideas from the mediation literature \cite{robins2022interventionist}, \cite{stensrud2022separable} proposed a \textit{separable direct effect} (SDE) that essentially isolates the effect of the treatment on $Y$. Specifically, the SDE conceives of the treatment indicator $A$ as being comprised of two separate treatments, $A_y$ and $A_d$, where $A_y$ only affects the outcome and $A_d$ only affects $U_I$. That is, we can conceive of a hypothetical trial where $A_y$ and $A_d$ are jointly randomized and recover causal contrasts involving the potential outcomes $Y(a_y, a_d)$ for different realizations of $a_y, a_d$.  Here we are interested in the part of $A$ that affects $Y$ (i.e., $A_y$), while holding the part of $A$ that affects $U_I$ constant at $A_d = 0$. We therefore first assume that $Y(a) = Y(a_y, a_d)$, and that $U_I(a) = U_I(a_y, a_d)$, and define the SDE as
    $\Psi_2 = \E\{Y(1, 0) - Y(0, 0)\}$.
Again, this estimand represents the causal effect via the pathways from $A$ that affect $Y$ alone ($A \boldsymbol{\rightarrow} A_y \rightarrow Y$, where the bold arrow indicates a deterministic relationship), leaving the pathways that affect $U_I$ unchanged ($A\boldsymbol{\rightarrow} A_d \to U_I \to Y$). Neither $\Psi_1$ or $\Psi_2$ are identified under the basic assumptions (Assumptions \ref{asmpt:non-informative}-\ref{asmpt:cens-positivity}). However, these assumptions are sufficient to rewrite $\Psi_1$ similarly to how $\Psi_0$ was rewritten in Proposition \ref{prop:0}.

\begin{proposition}[Expression for $\Psi_1$]\label{prop:6}
Under Assumptions \ref{asmpt:non-informative}-\ref{asmpt:cens-positivity},

\begin{align*}
\Psi_1 = \tilde{\Psi} + \E[\pi_1^\star(\bX)(1 - \mu_1(\bX)) - \pi_0^\star(\bX)(1 - \mu_0(\bX))].
\end{align*}
\end{proposition}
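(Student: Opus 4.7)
The plan is to mirror the template of Proposition \ref{prop:0}, with $Z$ replacing $Y$. First I would decompose the composite outcome as $Z(a) = U_I(a) + \{1 - U_I(a)\}Y(a)$, which sidesteps any ambiguity about $Y(a)$ in the event $U_I(a) = 1$ because the factor $1 - U_I(a)$ annihilates the ill-defined term. This gives
\begin{align*}
\E\{Z(a)\} = \E\{U_I(a)\} + \E\bigl\{[1 - U_I(a)]\,Y(a)\bigr\}.
\end{align*}

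Next I would condition on $\bX$ and invoke Assumption \ref{asmpt:confounding} (no unmeasured confounding of the joint $(Y(a), U_I(a))$) together with Assumption \ref{asmpt:consistency} (consistency) to reduce each term to an observable conditional expectation under $A = a$. The first term becomes $\E\{U_I \mid \bX, A = a\} = \pi_a^\star(\bX)$, and the second becomes $\E\{(1 - U_I)Y \mid \bX, A = a\}$.

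The only nontrivial step is to express that second piece in terms of $\mu_a$. Here I would factor
\begin{align*}
\E\{(1 - U_I)Y \mid \bX, A = a\} = \{1 - \pi_a^\star(\bX)\}\,\E\{Y \mid \bX, A = a, U_I = 0\},
\end{align*}
and then invoke Assumption \ref{asmpt:non-informative} ($\{Y, U_I\} \indep U_{NI} \mid \bX, A$) to append the event $U_{NI} = 0$ to the conditioning set without altering the expectation. Since $\{U_I = 0, U_{NI} = 0\} = \{C = 0\}$, the inner expectation equals $\mu_a(\bX)$; positivity (Assumptions \ref{asmpt:positivity}--\ref{asmpt:cens-positivity}) ensures this conditional expectation is well-defined.

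Assembling the pieces yields $\E\{Z(a) \mid \bX\} = \mu_a(\bX) + \pi_a^\star(\bX)\{1 - \mu_a(\bX)\}$, and taking the difference for $a = 1, 0$ followed by an outer expectation over $\bX$ produces the claimed identity. I do not foresee any substantive obstacle: the argument is essentially a re-run of Proposition \ref{prop:0} with $Z$ in place of $Y$, and the one delicate point is the correct use of Assumption \ref{asmpt:non-informative} to drop the $U_{NI} = 0$ conditioning when passing to $\mu_a(\bX)$.
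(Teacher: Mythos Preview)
Your proof is correct and follows essentially the same route as the paper: both arguments condition on $\bX$, invoke ignorability and consistency to pass to observed $(Y, U_I)$ under $A = a$, and then use Assumption \ref{asmpt:non-informative} to identify the uncensored conditional mean with $\mu_a(\bX)$. The only cosmetic difference is that the paper writes $\P(Z(a)=1\mid\bx) = 1 - \P(Y(a)=0,U_I(a)=0\mid\bx)$ and factors the complement, whereas you use the equivalent additive decomposition $Z(a) = U_I(a) + \{1-U_I(a)\}Y(a)$; both yield $\mu_a(\bX) + \pi_a^\star(\bX)\{1-\mu_a(\bX)\}$ immediately.
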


\noindent To bound $\Psi_2$, we further require the so-called ``dismissible component conditions'' \cite{stensrud2022separable}. 

\begin{assumption}[Dismissible component conditions]\label{asmpt:dcc}
For all $\bx$,
\begin{align*}
\P\left\{Y(1, 1) = 1 \mid U_I(1, 1) = 0, \bx\right\} &= 
\P\left\{Y(1, 0) = 1 \mid U_I(1, 0) = 0, \bx\right\}, \\
\P\left\{U_I(1, 0) = 1 \mid \bx\right\} &= \P\left\{U_I(0, 0) = 1 \mid \bx\right\}.
\end{align*}
\end{assumption}

\noindent This assumption is required because we do not observe the hypothetical trial that jointly randomizes $(A_y, A_d)$. Instead, our observed data contain only the composite treatment indicator $A$, so that $A_y = A_d$ for all units. This assumption also intuitively formalizes the idea that $A_d$ does not affect $Y(a_y, a_d)$ except via its effect on $U_I$, and similarly, that $A_y$ does not affect $U_I$. We refer to \cite{stensrud2022separable} for more details.

\begin{proposition}[Expression for $\Psi_2$]\label{prop:7}
Under Assumptions \ref{asmpt:non-informative}-\ref{asmpt:cens-positivity} and Assumption \ref{asmpt:dcc},
$\Psi_2 = \E[Y(1, 0) - Y(0, 0)] = \Psitilde - \E[\pi_0^\star(\bX)(\mu_1(\bX) - \mu_0(\bX))]$.
\end{proposition}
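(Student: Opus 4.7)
The plan is to proceed analogously to the derivation of Proposition \ref{prop:6}, using the dismissible component conditions to exchange nested potential-outcome labels and then applying consistency, no unmeasured confounding, and non-informative censoring to rewrite each factor as an identifiable nuisance function. I would start by writing $\Psi_2 = \E[Y(1,0)] - \E[Y(0,0)]$ and conditioning on $\bX$. Because Section \ref{sec:alternative-effects} explicitly frames $U_I$ as a competing event that prevents $Y$ from occurring, I would adopt the convention $Y(a_y, a_d) = 0$ on $\{U_I(a_y, a_d) = 1\}$, which yields
\[
\E[Y(a_y, 0) \mid \bX = \bx] = \P\{Y(a_y, 0) = 1 \mid U_I(a_y, 0) = 0, \bx\}\,\P\{U_I(a_y, 0) = 0 \mid \bx\}
\]
for $a_y \in \{0, 1\}$.

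Next, I would handle the two arms in parallel. For $a_y = 0$, Assumption \ref{asmpt:consistency} gives $Y(0,0) = Y(0)$ and $U_I(0,0) = U_I(0)$, and Assumption \ref{asmpt:confounding} converts the two factors into $\P(Y = 1 \mid A = 0, U_I = 0, \bx)$ and $1 - \pi_0^\star(\bx)$, respectively. For $a_y = 1$, I would first invoke Assumption \ref{asmpt:dcc}: the first dismissible component condition replaces $(1,0)$ by $(1,1)$ in the outcome factor, and the second replaces $(1,0)$ by $(0,0)$ in the competing-event factor. Consistency followed by Assumption \ref{asmpt:confounding} then produces $\P(Y = 1 \mid A = 1, U_I = 0, \bx)$ and $1 - \pi_0^\star(\bx)$. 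To tie each $\P(Y = 1 \mid A = a, U_I = 0, \bx)$ back to $\mu_a(\bx) = \E(Y \mid A = a, C = 0, \bx)$, I would use that $\{C = 0\} = \{U_{NI} = 0,\, U_I = 0\}$ together with the conditional independence $Y \indep U_{NI} \mid \bX, A, U_I$, which I would derive in one line by factoring the joint density of $(Y, U_I, U_{NI}) \mid \bX, A$ under the joint independence asserted in Assumption \ref{asmpt:non-informative}.

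Combining the pieces then yields $\E[Y(a_y, 0) \mid \bX] = \mu_{a_y}(\bX)\{1 - \pi_0^\star(\bX)\}$ for both $a_y \in \{0,1\}$, and taking expectations and differencing gives
\[
\Psi_2 = \E\left[\{1 - \pi_0^\star(\bX)\}\{\mu_1(\bX) - \mu_0(\bX)\}\right] = \tilde{\Psi} - \E\left[\pi_0^\star(\bX)\{\mu_1(\bX) - \mu_0(\bX)\}\right],
\]
which is the stated expression.

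The main obstacle is the careful bookkeeping at the interface between Assumption \ref{asmpt:dcc} and the remaining identification assumptions. The DCC speaks about nested potential outcomes indexed by $(a_y, a_d)$, whereas Assumption \ref{asmpt:consistency} links only the single-argument potential outcomes $Y(a), U_I(a)$ to observables through the composite treatment $A$; verifying that the two DCC statements are exactly the exchanges needed to reduce every factor to an identifiable nuisance function, and being explicit about the competing-event convention that zeroes out $Y$ on $\{U_I = 1\}$, are the delicate points. A secondary but routine subtlety is the derivation of $Y \indep U_{NI} \mid \bX, A, U_I$ from the joint statement in Assumption \ref{asmpt:non-informative}; although straightforward, this step is indispensable because otherwise $\mu_a$, which conditions on $\{C = 0\}$, cannot be equated to the conditional mean on $\{U_I = 0\}$ that the DCC manipulation produces.
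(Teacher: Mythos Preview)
Your proposal is correct and follows essentially the same argument as the paper: adopt the competing-event convention $U_I(a_y,a_d)=1 \Rightarrow Y(a_y,a_d)=0$, use the law of iterated expectations to factor $\E[Y(a_y,0)\mid\bx]$, apply the dismissible component conditions to swap the $(a_y,a_d)$ indices, then invoke ignorability, consistency, and non-informative censoring by $U_{NI}$ to reduce each factor to $\mu_{a_y}(\bx)$ and $1-\pi_0^\star(\bx)$. Your treatment is in fact slightly more explicit than the paper's about the step $\P(Y=1\mid A=a,U_I=0,\bx)=\mu_a(\bx)$, which the paper handles in one line by citing Assumption~\ref{asmpt:non-informative}.
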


\noindent Unlike $\Psi_0$, these estimands do not depend on the unobserved quantity $\mu_a^\star(\bX)$, and therefore Assumptions \ref{asmpt:monotonicity-neg}-\ref{asmpt:bounded-risk} are not required to bound these quantities. Instead, $\Psi_1$ only depends on the unobserved quantities $\pi^\star_a(\bX)$ for $a = 0, 1$, and $\Psi_2$ only on $\pi^\star_0(\bX)$. In the next section, we provide bounds for these alternative estimands analogous to the bounds in Propositions \ref{prop:3} and \ref{prop:4}.

\subsection{Bounds}

Analogous to results for the ATE above, we first discuss bounds for $\Psi_1$ and $\Psi_2$ under Assumption \ref{asmpt:bounded-missingness}, which limits the amount of informative missingness. Bounds under minimal assumptions are obtained as a special case when we set $\delta_{a\ell} = 0, \delta_{au} = 1$. We also provide an expression for the point identified parameter under Assumption \ref{asmpt:known-missingness}.

\begin{proposition}\label{prop:8}
Under Assumptions \ref{asmpt:non-informative}-\ref{asmpt:cens-positivity} and \ref{asmpt:bounded-missingness}, $\ell_1 \leq \Psi_1 \leq u_1;\quad \ell_2 \leq \Psi_2 \leq u_2$ for
\begin{align*}
&\ell_1 = \tilde{\Psi} + \E\left[\delta_{\ell1}\pi_1(\bX)(1 - \mu_1(\bX)) - \delta_{u0}\pi_0(\bX)(1-\mu_0(\bX))\right], \\
&u_1 = \tilde{\Psi} + \E\left[\delta_{u1}\pi_1(\bX)(1 - \mu_1(\bX)) - \delta_{\ell0}\pi_0(\bX)(1 - \mu_0(\bX))\right], \\
&\ell_2 = \tilde{\Psi} - \E\left[\delta_{u0}\pi_0(\bX)(\mu_1(\bX) - \mu_0(\bX))\I\{\mu_1(\bX) > \mu_0(\bX)\}\right], \\
&u_2 = \tilde{\Psi} - \E\left[\delta_{u0}\pi_0(\bX)(\mu_1(\bX) - \mu_0(\bX))\I\{\mu_1(\bX) \le \mu_0(\bX)\}\right].
\end{align*}
\end{proposition}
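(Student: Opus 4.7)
Both halves of the proposition follow by substituting the closed-form identification expressions of Propositions~\ref{prop:6} and~\ref{prop:7} and then optimizing pointwise over the unidentified nuisance $\pi_a^\star(\bX)$ under the range constraints of Assumption~\ref{asmpt:bounded-censoring}. The general template is: (i) write $\Psi_j$ as $\tilde{\Psi}$ plus an expectation involving $\pi_a^\star$; (ii) at each $\bX$, replace $\pi_a^\star(\bX)$ by the endpoint of $[\delta_{\ell a}\pi_a(\bX), \delta_{ua}\pi_a(\bX)]$ that makes the integrand largest (for an upper bound) or smallest (for a lower bound); (iii) pull the constants $\delta$ through the expectation and use monotonicity of the expectation.

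For $\Psi_1$, Proposition~\ref{prop:6} gives $\Psi_1 = \tilde{\Psi} + \E\{\pi_1^\star(\bX)(1-\mu_1(\bX)) - \pi_0^\star(\bX)(1-\mu_0(\bX))\}$. Because the outcome is rescaled into $[0,1]$, both factors $1-\mu_a(\bX)$ are nonnegative almost surely, so the integrand is pointwise monotone increasing in $\pi_1^\star(\bX)$ and monotone decreasing in $\pi_0^\star(\bX)$. Substituting $\pi_1^\star(\bX) = \delta_{u1}\pi_1(\bX)$ together with $\pi_0^\star(\bX) = \delta_{\ell 0}\pi_0(\bX)$ at every $\bX$ yields the upper bound $u_1$, and the opposite substitution yields the lower bound $\ell_1$. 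This half is essentially mechanical.

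For $\Psi_2$, Proposition~\ref{prop:7} gives $\Psi_2 = \tilde{\Psi} - \E\{\pi_0^\star(\bX)(\mu_1(\bX) - \mu_0(\bX))\}$. The subtlety is that the sign of the integrand at each $\bX$ depends on $\mu_1(\bX) - \mu_0(\bX)$, whose sign is not uniform across $\bX$, so a single monotone substitution in $\pi_0^\star$ does not suffice. I would partition the sample space into $\{\mu_1(\bX) > \mu_0(\bX)\}$ and $\{\mu_1(\bX) < \mu_0(\bX)\}$ (the set $\{\mu_1 = \mu_0\}$ contributes $0$), and optimize $\pi_0^\star$ separately on each region: on the region where $\mu_1 - \mu_0$ is negative, the extremizing endpoint of $\pi_0^\star$ is the opposite one from the region where it is positive. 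Collecting the two pieces back into a single expression via indicator functions produces the displayed $\ell_2$ and $u_2$.

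\textbf{Main obstacle.} The $\Psi_1$ derivation is a routine monotone substitution. The real work is the case analysis in $\Psi_2$, where one must track how the sign of $\mu_1(\bX) - \mu_0(\bX)$ flips the direction of the $\pi_0^\star$ substitution, and then correctly assign the $\delta_{\ell 0}$ versus $\delta_{u0}$ endpoint to each indicator region so that the final expression genuinely sandwiches $\Psi_2$. Aside from keeping these signs straight, the remaining steps are linearity of the expectation, pulling out the scalar sensitivity parameters, and collecting terms.
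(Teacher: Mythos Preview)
Your approach is correct and essentially identical to the paper's: both start from the identification expressions in Propositions~\ref{prop:6} and~\ref{prop:7}, then optimize pointwise over the unidentified $\pi_a^\star$ subject to Assumption~\ref{asmpt:bounded-censoring}, with the sign-based case split on $\mu_1(\bX)-\mu_0(\bX)$ for $\Psi_2$.

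One small discrepancy worth flagging: for $\Psi_2$ you plan to assign $\delta_{\ell 0}$ on one indicator region and $\delta_{u0}$ on the other, but the displayed $\ell_2,u_2$ involve only $\delta_{u0}$. The paper's proof uses only the upper constraint $\pi_0^\star \le \delta_{0u}\pi_0$ together with the trivial bound $\pi_0^\star \ge 0$ on the favorable-sign region (rather than the sharper $\pi_0^\star \ge \delta_{0\ell}\pi_0$). Your two-endpoint optimization would produce tighter, still-valid bounds; the stated ones then follow by relaxing $\delta_{0\ell}$ to $0$ on that region. So your reasoning is right, just slightly sharper than what is actually claimed.
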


To illustrate, consider again the running example introduced in Section \ref{sec:bounds}. In this case, $\Psi_1 \approx 0.17$ and $\Psi_2 \approx 0.12$. Setting $\delta_{1u} = \delta_{0u} = 1$ and $\delta_{1\ell} = \delta_{0\ell} = 0$ (which hold by definition), we derive the bounds $[-0.00, 0.33]$ for $\Psi_1$ and $[0.11, 0.135]$ for $\Psi_2$. These can be further refined using Assumption $\ref{asmpt:bounded-missingness}$.

Finally, as with the ATE, we can point identify both the composite ATE and the SDE under Assumption \ref{asmpt:known-missingness}.

\begin{proposition}\label{prop:9}
Under Assumptions \ref{asmpt:non-informative}-\ref{asmpt:cens-positivity} and \ref{asmpt:known-missingness},
\begin{align*}
   \Psi_1 &= \tilde{\Psi} + \E\left[\delta_1\pi_1(\bX)\{1-\mu_1(\bX)\} - \delta_0\pi_0(\bX)\{1-\mu_0(\bX)\}\right], \\ 
   \Psi_2 &= \tilde{\Psi} - \E\left[\delta_0\pi_0(\bX)\{\mu_1(\bX) - \mu_0(\bX)\} \right].
\end{align*}
\end{proposition}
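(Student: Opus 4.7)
The plan is to leverage Propositions \ref{prop:6} and \ref{prop:7}, which have already reduced $\Psi_1$ and $\Psi_2$ to functionals of the counterfactual-censoring quantities $\pi_a^\star(\bX)$ together with observable functionals. Since Assumption \ref{asmpt:known-censoring} states precisely that $\pi_a^\star(\bX)/\pi_a(\bX) = \delta_a$ almost surely, the proof reduces to a direct substitution rather than a fresh identification derivation.

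First, I would take the identifying expression $\Psi_1 = \tilde{\Psi} + \E[\pi_1^\star(\bX)\{1 - \mu_1(\bX)\} - \pi_0^\star(\bX)\{1 - \mu_0(\bX)\}]$ from Proposition \ref{prop:6} and replace $\pi_a^\star(\bX)$ by $\delta_a \pi_a(\bX)$. This substitution is justified by Assumption \ref{asmpt:known-censoring} together with Assumption \ref{asmpt:cens-positivity}, which guarantees $\pi_a(\bX) < 1$ with probability one so that the ratio in Assumption \ref{asmpt:known-censoring} is well-defined on a set of probability one. Pulling the constants $\delta_a$ outside the expectation then yields the claimed formula for $\Psi_1$. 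The same substitution applied to $\Psi_2 = \tilde{\Psi} - \E[\pi_0^\star(\bX)\{\mu_1(\bX) - \mu_0(\bX)\}]$ from Proposition \ref{prop:7} gives $\Psi_2 = \tilde{\Psi} - \delta_0\E[\pi_0(\bX)\{\mu_1(\bX) - \mu_0(\bX)\}]$, as required.

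There is no genuine obstacle: the technical work of moving from counterfactual distributions to observable quantities, while isolating the one remaining unidentified object $\pi_a^\star(\bX)$, has already been discharged in Propositions \ref{prop:6} and \ref{prop:7}. The only point worth flagging is that the identity for $\Psi_2$ inherits Assumption \ref{asmpt:dcc} from Proposition \ref{prop:7}, even though the hypotheses of Proposition \ref{prop:9} as stated list only Assumptions \ref{asmpt:non-informative}-\ref{asmpt:cens-positivity} and \ref{asmpt:known-censoring}; this should either be noted in the proof or the statement should be amended to include Assumption \ref{asmpt:dcc} for the $\Psi_2$ claim.
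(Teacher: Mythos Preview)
Your proposal is correct and matches the paper's own proof essentially verbatim: the paper simply states that the results ``immediately follow from applying [Assumption \ref{asmpt:known-censoring}] to the expressions derived in Propositions \ref{prop:6} and \ref{prop:7}.'' Your observation that the $\Psi_2$ claim tacitly inherits Assumption \ref{asmpt:dcc} from Proposition \ref{prop:7} is a valid point that the paper's statement of the proposition does not make explicit.
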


\noindent As noted previously, Assumption \ref{asmpt:known-missingness} is quite strong, so it is most useful as part of sensitivity analyses, as discussed in Section \ref{sec:sensitivity}.  Moreover, for $\Psi_1$, once we are simply obtaining point estimates under a range of $\delta_1, \delta_0$, this is essentially equivalent to obtaining bounds $\ell_1,u_1$ under different choices of $\delta_{au},\delta_{a\ell}$.

\subsection{Estimation}

Estimation of quantities for these estimands using influence-function based estimators follows the same general strategy outlined in Section \ref{sec:estimation}. Except for the bounds on $\Psi_2$, all bounds are functions of previously described functionals $\btheta$, so the influence-function based estimator of the bounds can be obtained using results from Proposition \ref{prop:5} and Theorem \ref{thm:1}. These estimators are fully described in the Appendix, and Corollary \ref{corr:2} provides an expression for the second-order remainder terms that must be $o_p(n^{-1/2})$ for the estimators to achieve root-$n$ consistency and asymptotic normality.

The remaining challenge is estimating $\ell_2$ and $u_2$. These functionals are not pathwise-differentiable due to the presence of the indicator functions ($\I\{\mu_1(\bX) \ge \mu_0(\bX)\}$ and $\I\{\mu_1(\bX) < \mu_0(\bX)\}$) \cite{branson2023causal} and therefore have no influence functions. One option is to replace  the indicator functions with smooth approximations, such as the centered normal cumulative distribution function $\Phi_{\varepsilon}$, where $\varepsilon$ parameterizes the standard deviation. In other words, instead of targeting
$\ell_2 = \tilde{\Psi} - \E[\pi_0(\bX)(\mu_1(\bX) - \mu_0(\bX))\I\{\mu_1(\bX) > \mu_0(\bX)\}]$, we may instead target
$\ell_{2, \varepsilon} = \tilde{\Psi} - \E[\pi_0(\bX)(\mu_1(\bX) - \mu_0(\bX))\Phi_{\varepsilon}(\mu_1(\bX) - \mu_0(\bX))]$. While these estimands are not identical, the difference can be made arbitrarily small by decreasing the parameter $\varepsilon$.\footnote{However, as we decrease $\varepsilon$, we may increase the variance of the corresponding estimator \cite{yang2018asymptotic, branson2023causal}, leading to a bias-variance tradeoff.} Proposition \ref{prop:10} gives the influence function for $u_{2, \varepsilon}$.

\begin{proposition}\label{prop:10}
For a fixed $\varepsilon > 0$, the uncentered influence function for $\ell_{2, \varepsilon}$ is 

\begin{align}
\nonumber\varphi_{\ell_2}(\bO; \eta) &= \varphi_{1,1}(\bO) - \varphi_{1,0}(\bO) \\
&-\left(\frac{\I\{C = 0, A = 1\}}{(1-\pi_1(\bX))e_1(\bX)} - \frac{\I(C = 0, A = 0)}{(1 - \pi_0(\bX))e_0(\bX)}\right)\{Y - \mu_A(\bX)\}\pi_0(\bX)\Phi_{\varepsilon}(\mu_1(\bX) - \mu_0(\bX)) \\
\nonumber&- \frac{\I\{A = 0\}}{e_0(\bX)}\{C - \pi_0(\bX)\}[\mu_1(\bX) - \mu_0(\bX)]\Phi_{\varepsilon}(\mu_1(\bX) - \mu_0(\bX)) \\
\nonumber&- [\mu_1(\bX) - \mu_0(\bX)]\pi_0(\bX)\Phi_{\varepsilon}(\mu_1(\bX) - \mu_0(\bX)) \\
&- [\mu_1(\bX) - \mu_0(\bX)]\pi_0(\bX)\phi_{\varepsilon}(\mu_1(\bX) - \mu_0(\bX))\{\varphi_{1, 1}(\bO) - \varphi_{1, 0}(\bO) - \mu_1(\bX) + \mu_0(\bX)\}.\label{eqn:final}
\end{align}
\end{proposition}
where $\phi_\varepsilon$ is the probability density function corresponding to $\Phi_\varepsilon$.

\begin{remark}
The influence function for $u_{2, \varepsilon}$ is similar but reverses the signs of $\mu_1(\bX) - \mu_0(\bX)$ in the functions $\Phi_{\varepsilon}$ and $\phi_{\varepsilon}$, as well as in the term $\varphi_{1, 1}(\bO) - \varphi_{1, 0}(\bO) - \mu_1(\bX) + \mu_0(\bX)$. The full expression is available in Appendix \ref{app:addtlres}.
\end{remark}

To obtain an estimator of $\ell_{2, \epsilon}$, we plug in estimates of the nuisance components and compute the empirical average of these quantities. Theorem \ref{thm:2} in Appendix \ref{app:addtlres} establishes the conditions under which this estimator is root-$n$ consistent and asymptotically normal.

As an alternative option, one can estimate $\ell_2$ ($u_2$) by eliminating the final line \ref{eqn:final}, and replacing $\Phi_\epsilon$ with $\I(\mu_1(\bX) > \mu_0(\bX))$ ($\I(\mu_1(\bX) \le \mu_0(\bX))$). This estimator will also be root-n consistent and asymptotically normal under similar conditions to those given Theorem \ref{thm:2}. We formalize this result in Theorem \ref{thm:3} in Appendix \ref{app:addtlres}, invoking an additional ``margin condition'' that essentially requires the density of $\mu_1(\bX)-\mu_0(\bX)$ to be bounded. This condition is likely reasonable for many applications, though would be violated if, for example, the sharp null held with respect to the true conditional treatment effects, and there were no differential dropout, so that $\mu_1(\bx)=\mu_0(\bx)$ for all $\bx$.  
 
\section{Simulations}\label{sec:simulations}

We assess the theoretical properties of our proposed estimators through a simulation study. We begin by generating a population of size $N = 2,000,000$. For each individual $i=1,...,N$, we draw a single covariate independently $X_i \sim \mathcal U[-3, 3]$ and generate $A_i, C_i, Y_i$ according to the nuisance functions depicted below in Figure \ref{fig:0}. As illustrated, the propensity score model $e(x)$, the outcome risk $\mu_a(x)$, and the missingness probabilities $\pi_a(x)$ are all increasing functions of $x$. A complete description of the data-generating process is provided in Appendix \ref{app:dgp}. 
\begin{figure}
 \begin{center}
     \includegraphics[scale=0.6]{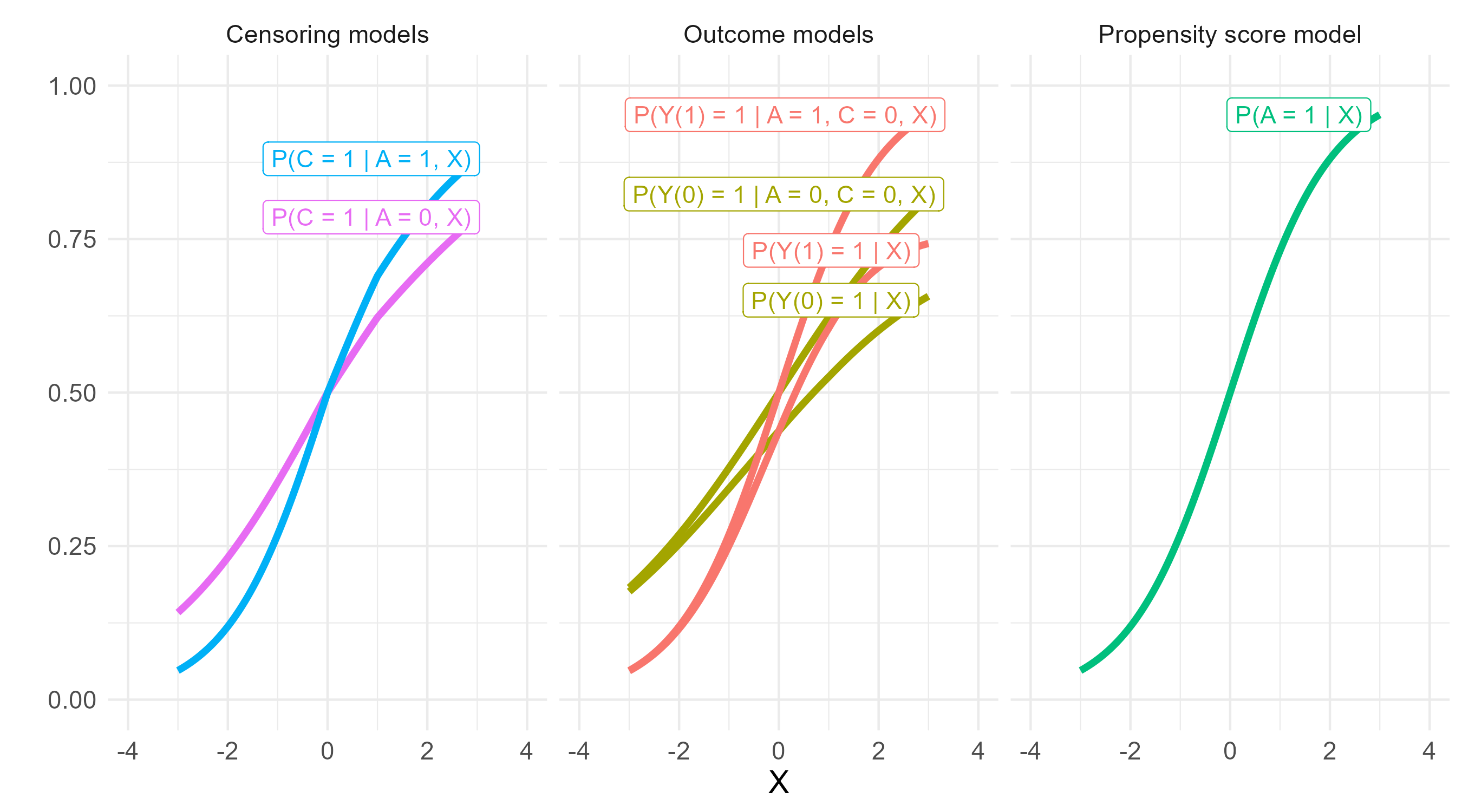}
 \end{center}
 \caption{Nuisance components used in simulation including missingness ($\pi_a(x)$), outcome ($\mu_a(x), \mu^*_a(x)$), and propensity score ($e_1(x)$) models.}\label{fig:0}
\end{figure}

We focus on estimating the following six quantities: $\Psitilde$ as defined in \eqref{observed-ate}, $\Omega_1 = \E[\pi_1(X)]$,  $\Omega_2 = \E[\pi_0(X)]$, $\Omega_3 = \E[\pi_1(X)\mu_1(X)]$, $\Omega_4 = \E[\pi_0(X)\mu_0(X)]$, and $\Omega_5 = \E[\pi_0(X)(\mu_1(X) - \mu_0(X))\Phi_\epsilon(\mu_1(X) - \mu_0(X))]$. Since the proposed bounds are almost entirely linear functions of these quantities -- and the proposed estimators are also linear in the estimated uncentered influence functions of these same quantities -- simulation results verifying the expected performance of each component quantity suffices to show the expected performance of any of the proposed estimates of the bounds. 

Next, we take 1,000 random samples of size $n = 1,000$. Following the simulation framework of \cite{kennedy2023towards}, for each nuisance parameter $\eta_j$, we simulate an estimated function $\hat{\eta}_j$ by introducing stochastic perturbations:
$\hat{\eta}_j = \text{expit}(\text{logit}(\eta_j) + \mathcal{N}(C_{1j}n^{-\alpha}, C_{2j}n^{-2\alpha}))$
for constants $C_{1j}$ and $C_{2j}$ that are specific to each nuisance function and parameter to ensure that we generate estimates that satisfy $\|\hat{\eta} - \eta\| = \mathcal{O}_p(n^{-\alpha})$ in the $L_2(\P)$ norm. Using the simulated estimates $\hat{\eta}$, we construct the estimated influence functions and take the empirical average and variance to construct confidence intervals.

We consider two scenarios for the convergence rate parameter $\alpha$: $0.3$ and $0.1$. In the first case, the convergence rate satisfies the conditions of Theorem \ref{thm:1}, and therefore the estimators are theoretically fast enough to achieve root-$n$ consistency and asymptotic normality. However, the second case -- $\alpha = 0.1$ -- does not satisfy the rate conditions of Theorem \ref{thm:1}, and therefore, we should not expect the estimator to perform well. If we believed that this was the convergence rate for an optimal non-parametric estimator, we would not be able to obtain a root-n convergent estimator without correctly specifying the nuisance component models.

The first two headers in Table \ref{tab:1} summarize these results, evaluating the bias, root mean squared error (RMSE), and coverage rates for each value of $\alpha$ across the 1,000 random samples. As expected, when $\alpha=0.3$, the estimators are approximately unbiased and the confidence intervals achieve the nominal $95\%$ coverage. In contrast, when $\alpha=0.1$, the estimators display increased bias and reduced coverage probabilities, aligning with our theoretical expectations.

To evaluate the practical performance of our estimators, we conduct a second simulation study where we estimate the nuisance functions $\hat{\eta}_j$ directly using sample splitting and SuperLearner to stack a generalized linear model and random forest for samples of size $n = 2,000$. Given the relatively simple data-generating processes, we expect the nuisance estimators to achieve convergence rates greater than $n^{-1/4}$, yielding similar results to the $\alpha = 0.3$ setting above. The final column in Table \ref{tab:1} displays these results and confirms this expectation. The estimators demonstrate minimal bias and confidence intervals with coverage rates close to the nominal $95\%$ level.

\begin{table}
\caption{Simulation results of six estimands for different convergence rates of nuisance parameters \label{tab:1}}
\centering
\begin{tabular}[t]{cccccccccc}
\toprule
\multicolumn{1}{c}{ } & \multicolumn{3}{c}{$\alpha = 0.3$} & \multicolumn{3}{c}{$\alpha = 0.1$} & \multicolumn{3}{c}{SuperLearner} \\
\cmidrule(l{3pt}r{3pt}){2-4} \cmidrule(l{3pt}r{3pt}){5-7} \cmidrule(l{3pt}r{3pt}){8-10}
Estimand & Bias & RMSE & Coverage & Bias & RMSE & Coverage & Bias & RMSE & Coverage\\
\midrule
$\tilde\Psi$ & 0.00 & 0.07 & 0.94 & 0.06 & 0.14 & 0.92 & 0.00 & 0.06 & 0.94\\
$\Omega_1$ & 0.00 & 0.02 & 0.95 & 0.03 & 0.03 & 0.67 & 0.00 & 0.02 & 0.95\\
$\Omega_2$ & 0.01 & 0.04 & 0.93 & 0.24 & 0.26 & 0.29 & 0.00 & 0.02 & 0.96\\
$\Omega_3$ & 0.00 & 0.02 & 0.94 & 0.05 & 0.06 & 0.62 & 0.00 & 0.01 & 0.96\\
$\Omega_4$ & 0.00 & 0.04 & 0.95 & 0.10 & 0.14 & 0.59 & 0.00 & 0.05 & 0.94\\
$\Omega_5$ & 0.00 & 0.05 & 0.93 & 0.00 & 0.07 & 0.97 & 0.00 & 0.05 & 0.92\\
\bottomrule
\end{tabular}
\end{table}

\section{Application}\label{sec:application}
Antipsychotic drugs (APs) are a mainstay in the treatment of schizophrenia. However, some APs are associated with safety risks, including A1c disregulation, type 2 diabetes, and other cardiometabolic morbidity. Randomized studies comparing APs on safety events that, like type 2 diabetes, take months or years to develop, are virtually non-existent; similarly, there is very limited randomized evidence on the effectiveness of APs among elderly individuals with schizophrenia, a population who may plausibly respond differently to these drugs relative to younger patients. Consequently,  clinicians must rely on evidence from observational studies to guide their clinical decision-making. Health insurance administrative databases, especially when paired with state-of-the-art causal inference methods, offer the promise of helping inform the relative effectiveness and safety in usual care populations. However, the potential for informative missingness is particularly acute in many of these databases because reasons for missingness are often not ascertained or communicated to the researcher.

We examined data on $n = 11,618$ elderly adults (age greater than 66) insured by Medicare Advantage diagnosed with schizophrenia who were relatively new users of APs identified in Inovalon Insights LLC's Payer Sourced Claims Data (2017-2020). For more details on inclusion criteria in a similar population, see \cite{agniel2024revisiting}.

We studied the causal effect of each of five common APs compared with taking any other AP in our data: aripiprazole ($n = 975$), haloperidol ($n = 1,024$), olanzapine ($n = 1,798$), quetiapine ($n = 3,702$), and risperidone ($n = 3,036$). People were assigned to a particular AP based on their first qualifying fill in the database. A total of 1,083 patients took other APs. We measured two outcomes that may play an important role in clinicians' prescribing decision-making: diagnosis of type 2 diabetes (which informs AP safety) and inpatient hospitalization for any psychiatric condition (which informs AP effectiveness), both measured at 12 months after the initial AP fill. Missingness was common ($> 20\%$) for both outcomes in all AP groups.

Estimation of all nuisance components was undertaken using random forests, and cross-fitting with two folds to separate evaluation of the influence function from nuisance function estimation. To account for additional randomness due to sample splitting, we repeated sample-splitting and estimation 11 times, using different seeds each time. We took estimates to be the median of the 11 estimates, and we computed standard errors using the approach in \cite{chernozhukov2018double}, which accounts for additional variability due to different seeds across the 11 estimates.

In our safety analyses, we find that three of the APs have lower risk of diabetes than the other APs in the naive analysis assuming all missingness is non-informative. We naively estimate that aripiprazole lowers diabetes risk by 4.9 percentage points (pp), haloperidol by 11.8pp, and quetiapine by 4.2pp, compared with treatment with all other APs. The CIs for haloperidol and quetiapine include only sizable negative effects, suggesting that the naive analysis points to strong evidence of risk reduction, while the CI for aripiprazole includes both large negative effects and some small positive effects, suggesting weaker evidence of risk reduction. Results for olanzapine and risperidone do not provide strong evidence in favor of risk reduction or increase. Similarly, in our effectiveness analyses examining risk of any psychiatric inpatient hospitalization, haloperidol and quetiapine show strong evidence of risk reduction (-2.1pp and -1.3pp, respectively), while we find no strong evidence in either direction for the other APs. See Table \ref{real-data:naive-estimates} for all estimates and CIs.

\begin{table}
\caption{Naive ATE estimates (percentage points) and confidence intervals for five APs on 12-month risk of diabetes and any psychiatric inpatient hospitalization.}\label{real-data:naive-estimates}
\centering
\begin{tabular}{l|l|l}
\hline
\textbf{Antipsychotic} & \textbf{$\Psitilde$ [CI], diabetes} & \textbf{$\Psitilde$ [CI], hospitalization}\\
\hline
Aripiprazole & -4.90 [-10.69, 0.89] & -0.60 [-1.66, 0.46]\\
\hline
Haloperidol & -11.76 [-17.28, -6.25] & -2.07 [-3.91, -0.23]\\
\hline
Olanzapine & 1.12 [-1.36, 3.60] & -0.15 [-0.94, 0.64]\\
\hline
Quetiapine & -4.19 [-6.39, -1.99] & -1.32 [-1.95, -0.70]\\
\hline
Risperidone & 1.37 [-0.76, 3.50] & 0.05 [-0.55, 0.66]\\
\hline
\end{tabular}
\end{table}

While this naive analysis suggests that haloperidol and quetiapine (and possibly aripiprazole) may reduce diabetes risk and risk of inpatient hospitalization, this is only under the assumption of non-informative missingness. The most general bounds from \eqref{assumption-free-bounds} are given in Table \ref{real-data:general-bounds} and do not rule out very large effects in either direction for all APs. 

\begin{table}

\caption{Estimates of general bounds on ATE (percentage points) and confidence intervals for five APs on 12-month diabetes risk and any psychiatric inpatient hospitalization.}\label{real-data:general-bounds}
\centering
\begin{tabular}[t]{c|c|c|c|c}
\hline
Antipsychotic & $\ell_0$ [CI], diabetes & $u_0$ [CI], diabetes & $\ell_0$ [CI], hospitalization& $u_0$ [CI], hospitalization\\
\hline
Aripiprazole & -28.73 [-34.57,-22.88] & 24.43 [21.58,27.28] & -42.00 [-43.22,-40.78] & 35.83 [30.41,41.24]\\
\hline
Haloperidol & -36.24 [-41.36,-31.12] & 23.54 [20.71,26.38] & -42.52 [-44.24,-40.81] & 43.15 [39.44,46.86]\\
\hline
Olanzapine & -27.34 [-30.21,-24.47] & 31.45 [29.68,33.22] & -41.79 [-42.87,-40.71] & 40.78 [38.44,43.13]\\
\hline
Quetiapine & -32.14 [-34.28,-30.00] & 26.33 [24.80,27.86] & -41.75 [-42.84,-40.66] & 43.84 [42.21,45.48]\\
\hline
Risperidone & -27.98 [-30.02,-25.94] & 28.87 [27.39,30.35] & -41.78 [-42.81,-40.75] & 41.44 [39.86,43.03]\\
\hline
\end{tabular}
\end{table}

This most general analysis allows informative missingness to either increase or decrease the risk of the outcomes. In this elderly population with schizophrenia, we are most concerned about missingness that is due to causes that might increase their risk of these adverse outcomes. We may encode this using Assumption \ref{asmpt:monotonicity-pos}, and we can use the bounds from Proposition \ref{prop:2} to provide more narrow bounds. Based on this proposition, we have a negative upper bound for the ATE ($u_0^{\text{pos}} < 0$) as long as the amount of informative missingness in the treated group ($\delta_{1u}$) is not too high. For example, $u_0^{\text{pos}} < 0$ for haloperidol's effect on diabetes as long as no more than 40\% of the missingness is informative. The estimates for aripiprazole's and quetiapine's effects on diabetes are slightly less robust, requiring no more than 21\% and 17\% informative missingness, respectively. The psychiatric inpatient hospitalization (effectiveness) analyses, on the other hand, are not robust to informative missingness at all. Even the largest magnitude effect on inpatient hospitalizations requires no more than 5\% informative missingness. 

Finally, we can examine what type of increased risk due to informative missingness would be required to explain away the naive estimate $\Psitilde$. Under Assumption \ref{asmpt:known-risk}, we require $\tau > -\Psitilde/\E[\mu_1(\bX)\pi_1(\bX)] + 1$, if $\Psi_0$ has any possibility of being non-negative. For the ATE of haloperidol on diabetes, we require informative missingness to increase diabetes risk by a factor of 10.7 to explain the naive ATE estimate. Similarly, aripiprazole requires $\tau > 4.0$ -- a still sizable increase -- while quetiapine is less robust requiring a more modest increased risk of diabetes of a factor of $\tau > 1.7$.\footnote{These numbers are estimated; for simplicity, we do not quantify uncertainty around these estimates.}

In short, unless the severity of informative missingness is quite high and assuming that there is no unmeasured confounding, we find the strongest evidence that haloperidol decreases the risk of diabetes among elderly adults with schizophrenia relative to all other drugs we study, a result that is consistent with findings from \cite{poulos2023antipsychotics}. 

\section{Discussion}\label{sec:discussion}

We have proposed a novel framework for bounding treatment effects and conducting sensitivity analyses in the presence of a mixture of informative and non-informative missingness. Just as most observational studies likely suffer from some amount of unobserved confounding, most studies with missingness likely suffer from some amount of informative missingness. To address this, we have developed assumption-lean bounds that are broadly applicable in virtually any study with missingness and a bounded outcome. We also provide a variety of additional assumptions that researchers may use to narrow these bounds or conduct sensitivity analyses in more general settings. Bounding approaches may be preferred when realistic values of sensitivity parameters can reasonably be specified, while ``tipping point''-style sensitivity analyses may be preferred when researchers may only know broadly what values of sensitivity parameters are implausible. In many applications, researchers may instead wish to estimate the bounds over a range of sensitivity parameters, which can both identify a tipping point (or tipping region) and provide a broader picture of how sensitive the bounds are to the assumed sensitivity parameter values.

A limitation of our proposed approach is that the estimation is not guaranteed to be constrained to the natural bounds of a bounded outcome. For example, in sensitivity analyses where one might wish to evaluate and reason about quantities like $\mathbb{E}[\pi_1(\bX)\mu_1(\bX)]$, these quantities could be negative even if the outcome is binary. A straightforward adaptation of our current approach, which leverages a one-step estimator based on the influence function, would be to implement targeted minimum loss-based estimation (TMLE) \cite{van2011targeted}. TMLE could enforce the necessary constraints, ensuring that the estimators remain within plausible bounds.

We have also limited our work to the setting with a single time point at which missingness occurs. However, many studies face scenarios where missingness occurs at multiple time points, and outcomes may be observed at different intervals. Extending our approach to a longitudinal or survival context is conceptually straightforward but would introduce significant notational complexity and require additional sensitivity parameters. To make such tools practical, further assumptions linking informative missingness to changes in risk over time would be necessary. The development of a comprehensive system for handling these more complex settings is an important area for future research.

We also only consider the case where outcome missingness is due to entirely unknown reasons. While this scenario will hold for many applied settings, such as the analysis of claims datasets, other datasets may contain information about specific competing events, such as death, or other reasons for outcome missingness generally. In the extreme case where $U_I$ is fully observed, $\pi_a^\star$ would be identifiable, and $\Psi_1$ and $\Psi_2$ would be point identified. Moreover, we could obtain narrower bounds on $\Psi_0$ under fewer assumptions, though ultimately we would still require making assumptions on the relationship between $\mu_a^\star$ and $\mu_a$ to bound or identify $\Psi_0$. Unfortunately, in most real-world settings, even when some reasons for outcome missingness are known, the remaining missing outcomes will be due to some unknown combination of informative and non-informative components. Extending our framework to this setting would be an interesting avenue for future research.

Finally, we have mentioned repeatedly how we have drawn inspiration for this work from the extensive and growing literature on bounding causal effects in the presence of unobserved confounding. We can extend our approach to account for \textit{both} informative missingness and unobserved confounding by adding assumptions about the relationship between the expected potential outcome for individuals where the outcome was observed versus unobserved. We provide a high-level sketch of this approach using the ATE and the composite ATE as the target of inference in Appendix \ref{app:adtl}. Additional work on sensitivity analyses for both of these assumptions simultaneously would be valuable future research.

\noindent\textbf{Acknowledgments:} The authors are indebted to Ben Buzzee, Harvard Medical School, for creation of the antipsychotic dataset. The authors would also like to thank two anonymous reviewers and the Associate Editor for helpful comments, questions, and suggestions that substantially improved the quality of this manuscript. Finally, the authors used ChatGPT for assistance proofreading and Claude Code to write and document code. All mistakes are our own. \hfill \break

\noindent\textbf{Research funding:} This work was funded by Grant R01-MH130213 from the National Institute of Mental Health. \hfill \break

\noindent\textbf{Conflict of interest:} Authors state no conflicts of interest. \hfill \break

\noindent\textbf{Author contributions:} All authors contributed to the formulation and development of this work. \hfill \break

\noindent\textbf{Data availability statement:} Code to replicate the simulations is available at \url{https://github.com/mrubinst757/mixture-missingness}. The data for the Application is not publicly available.

\bibliographystyle{plainnat}
\bibliography{bibbb.bib}

\section{Additional results}\label{app:addtlres}

\subsection{Bounds}

We present the alternative formulation of Assumption \ref{prop:3} that results in sharp bounds on $\Psi_0$, briefly discussed in Remark~\ref{rmk:prop3}.

\begin{assumption}[Bounded outcome risk, alternative]\label{asmpt:bounded-risk2}
\[
    \P\left[\tau_a\inv \leq \frac{\mu_a^\star(\bX)}{\mu_a(\bX)} \leq \min\left\{\frac{1}{\mu_a(\bX)},\tau_a\right\} \right] = 1, \text{ for some } \tau_a \ge 1, \quad a = 0, 1.
\]
\end{assumption}

\noindent Proposition \ref{prop:3a} gives an expression for bounds that may be derived under assumption \ref{asmpt:bounded-risk2}.

\begin{proposition}\label{prop:3a}
Under Assumptions \ref{asmpt:non-informative}-\ref{asmpt:cens-positivity}, either \ref{asmpt:bounded-missingness} or \ref{asmpt:known-missingness}, and \ref{asmpt:bounded-risk2}, $\widetilde{\ell_0}^\star \le \Psi_0 \le \widetilde{u}^\star$, where
\begin{align*}
    \widetilde \ell_0^\star = \tilde\Psi + \E[\delta_{1u}\pi_1(\bX)\mu_1(\bX)(\tau_1^{-1}-1)] - \E[\delta_{0u}\pi_0(\bX)\min\{(1-\mu_0(\bX)),\mu_0(\bX)(\tau_0-1)\}], \\
    \widetilde u_0^\star = \tilde\Psi + \E[\delta_{1u}\pi_1(\bx)\min\{(1-\mu_1(\bX)),\mu_1(\bX)(\tau_1-1)\}] - \E[\delta_{0u}\pi_0(\bX)\mu_0(\bX)(\tau_0^{-1}-1)].
\end{align*}    
\end{proposition}

\subsection{Estimation}

The corollaries and theorems below provide expressions for influence-function based estimators of all proposed bounds and identifying expressions in the main paper as well as their asymptotic properties. Throughout we let $\PP f = \mathbb{E}[f(\bO) \mid D_0^n]$, where $D_0^n$ is a training data of size $n$. Notice that $\PP(f(\bO)) = \E(f(\bO))$ for a fixed function $f$; however, we also consider estimated functions $\hat{f}$ that are learned on $D_0^n$, so that $\PP\hat{f}$ is fixed and not equal to $\mathbb{E}[\hat{f}]$. We let $a \lesssim b$ denote $a \le C b$ for some universal constant $C$. Recall that $\|f\|$ denotes the $L_2(\P)$ norm, i.e. $\left(\int f(\bo)^2d\P(\bo)\right)^{1/2}$, and we assume throughout that $Y$, $\mu_a(\bX)$, and $\hat\mu_a(\bX)$ are uniformly bounded.

\begin{corollary}\label{corr:1}
For fixed and finite values of the sensitivity parameters, the influence-function based estimators of the proposed bounds on $\Psi_0$ are,

\begin{align*}
\hat{\ell}_0^\star &= \mathbb{P}_n[\hat{\varphi}_{1,1}(\bO) - \hat{\varphi}_{1,0}(\bO) - (\delta_{1u}\hat{\varphi}_{3,1}(\bO) + \delta_{0u}\{\hat{\varphi}_{2,0}(\bO) - \hat{\varphi}_{3,0}(\bO))\}] \\
\hat{u}_0^\star &= \mathbb{P}_n[\hat{\varphi}_{1,1}(\bO) - \hat{\varphi}_{1,0}(\bO) + \delta_{1u}\{\hat{\varphi}_{2,1}(\bO) - \hat{\varphi}_{3,1}(\bO)\} + \delta_{0u}\hat{\varphi}_{3,0}(\bO)] \\
\hat{\ell}_0^{\text{pos}} &= \mathbb{P}_n[\hat{\varphi}_{1,1}(\bO) - \hat{\varphi}_{1,0}(\bO) - \delta_{0u}(\hat{\varphi}_{2,0}(\bO) - \hat{\varphi}_{3,0}(\bO))] \\
\hat{u}_0^{\text{pos}} &= \mathbb{P}_n[\hat{\varphi}_{1,1}(\bO) - \hat{\varphi}_{1,0}(\bO) + \delta_{1u}(\hat{\varphi}_{2,1}(\bO) - \hat{\varphi}_{3,1}(\bO))] \\
\hat{\ell}_0^{\text{neg}} &= \mathbb{P}_n[\hat{\varphi}_{1,1}(\bO) - \hat{\varphi}_{1,0}(\bO) - \delta_{1u}\hat{\varphi}_{3,1}(\bO)] \\
\hat{u}_0^{\text{neg}} &= \mathbb{P}_n[\hat{\varphi}_{1,1}(\bO) - \hat{\varphi}_{1,0}(\bO) + \delta_{0u}\hat{\varphi}_{3,0}(\bO)] \\
\hat{\tilde{\ell}}_0 &= \mathbb{P}_n[\hat{\varphi}_{1,1}(\bO) - \hat{\varphi}_{1,0}(\bO) + \delta_{u1}(\tau_{1}^{-1} - 1)\hat{\varphi}_{3,1}(\bO) - \delta_{u0}(\tau_0 - 1)\hat{\varphi}_{3,0}(\bO)] \\
\hat{\tilde{u}}_0 &= \mathbb{P}_n[\hat{\varphi}_{1,1}(\bO) - \hat{\varphi}_{1,0}(\bO) + \delta_{u1}(\tau_{1} - 1)\hat{\varphi}_{3,1}(\bO) - \delta_{u0}(\tau_0^{-1} - 1)\hat{\varphi}_{3,0}(\bO)] \\
\end{align*}

\noindent (where we use $\hat{\varphi}(\bO)$ to denote $\varphi(\bO; \hat\eta)$). For fixed values of the sensitivity parameters and $\varepsilon$, the influence function based estimators of the proposed bounds on $\Psi_1$ and $\Psi_2$ are,

\begin{align*}
\hat{\ell}_1 &= \mathbb{P}_n[\hat{\varphi}_{1,1}(\bO) - \hat{\varphi}_{1,0}(\bO) + \delta_{l1}(\hat{\varphi}_{2,1}(\bO) - \hat{\varphi}_{3,1}(\bO)) - \delta_{u0}(\hat{\varphi}_{2,0}(\bO) - \hat{\varphi}_{3,0}(\bO))] \\
\hat{u}_1 &= \mathbb{P}_n[\hat{\varphi}_{1,1}(\bO) - \hat{\varphi}_{1,0}(\bO) + \delta_{u1}(\hat{\varphi}_{2,1}(\bO) - \hat{\varphi}_{3,1}(\bO)) - \delta_{l0}(\hat{\varphi}_{2,0}(\bO) - \hat{\varphi}_{3,0}(\bO))]\\
\hat{\ell}_{2,\varepsilon} &= \mathbb{P}_n\left[\hat\varphi_{1,1}(\bO) - \hat\varphi_{1,0}(\bO) \right.\\
&\left.-\left(\frac{\I(C = 0, A = 1)}{(1-\hat{\pi}_1(\bX))\hat{e}_1(\bX)} - \frac{\I(C = 0, A = 0)}{(1 - \hat{\pi}_0(\bX))\hat{e}_0(\bX))}\right)\{Y - \hat{\mu}_A(\bX)\}\hat{\pi}_0(\bX)\Phi_{\varepsilon}(\hat{\mu}_1(\bX) - \hat{\mu}_0(\bX)) \right. \\
& -\left. \frac{\I(A = 0)}{\hat{e}_0(\bX)}\{C - \hat{\pi}_0(\bX)\}[\hat{\mu}_1(\bX) - \hat{\mu}_0(\bX)]\Phi_{\varepsilon}(\hat{\mu}_1(\bX) - \hat{\mu}_0(\bX)) \right. \\
&- \left. [\hat{\mu}_1(\bX) - \hat{\mu}_0(\bX)]\hat{\pi}_0(\bX)\Phi_{\varepsilon}(\hat{\mu}_1(\bX) - \hat{\mu}_0(\bX))[\hat{\varphi}_{1, 1}(\bX) - \hat{\varphi}_{1, 0}(\bX) - \hat{\mu}_1(\bX) + \hat{\mu}_0(\bX)] \right. \\
&- \left. [\hat{\mu}_1(\bX) - \hat{\mu}_0(\bX)]\hat{\pi}_0(\bX)\Phi_{\varepsilon}(\hat{\mu}_1(\bX) - \hat{\mu}_0(\bX))\right] \\
\hat{u}_{2,\varepsilon} &= \mathbb{P}_n\left[\hat\varphi_{1,1}(\bO)-\hat\varphi_{1,0}(\bO) \right.\\
&\left.-\left(\frac{\I(C = 0, A = 1)}{(1-\hat{\pi}_1(\bX))\hat{e}_1(\bX)} - \frac{\I(C = 0, A = 0)}{(1 - \hat{\pi}_0(\bX))\hat{e}_0(\bX))}\right)\{Y - \hat{\mu}_A(\bX)\}\hat{\pi}_0(\bX)\Phi_{\varepsilon}(\hat{\mu}_0(\bX) - \hat{\mu}_1(\bX)) \right. \\
&- \left. \frac{\I(A = 0)}{\hat{e}_0(\bX)}\{C - \hat{\pi}_0(\bX)\}[\hat{\mu}_1(\bX) - \hat{\mu}_0(\bX)]\Phi_{\varepsilon}(\hat{\mu}_0(\bX) - \hat{\mu}_1(\bX)) \right. \\
&- \left. [\hat{\mu}_1(\bX) - \hat{\mu}_0(\bX)]\hat{\pi}_0(\bX)\Phi_{\varepsilon}(\hat{\mu}_0(\bX) - \hat{\mu}_1(\bX))[\hat{\varphi}_{1, 0}(\bO) - \hat{\varphi}_{1, 1}(\bO) - \hat{\mu}_0(\bX) + \hat{\mu}_1(\bX)] \right. \\
&- \left. [\hat{\mu}_1(\bX) - \hat{\mu}_0(\bX)]\hat{\pi}_0(\bX)\Phi_{\varepsilon}(\hat{\mu}_0(\bX) - \hat{\mu}_1(\bX))] \right] 
\end{align*}
Finally, under assumption \ref{asmpt:known-missingness} and \ref{asmpt:known-risk}, the influence-function based estimators of $\Psi_0$, $\Psi_1$, and $\Psi_2$ are,

\begin{align*}
\hat{\Psi}_0 &= \mathbb{P}_n[\left(\hat\varphi_{1,1}(\bO) - \hat\varphi_{1,0}(\bO)\right) + \delta_1(\tau_1 - 1)\hat\varphi_{3, 1}(\bO) - \delta_0(\tau_0 - 1)\hat\varphi_{3, 0}(\bO)] \\
\hat{\Psi}_1 &= \mathbb{P}_n[\hat{\varphi}_{1,1}(\bO) - \hat{\varphi}_{1,0}(\bO) + \delta_1(\hat{\varphi}_{2,1}(\bO) - \hat{\varphi}_{3,1}(\bO)) - \delta_0(\hat{\varphi}_{2,0}(\bO) - \hat{\varphi}_{3,0}(\bO))] \\
\hat{\Psi}_2 &=\mathbb{P}_n[\hat{\varphi}_{1,1}(\bO) - \hat{\varphi}_{1,0}(\bO) - \delta_0(\hat{\varphi}_{3, 01}(\bO) - \hat{\varphi}_{3,0}(\bO))],
\end{align*}

\noindent where

\begin{align*}
\hat{\varphi}_{3,01}(\bO) &= \hat{\varphi}_{1, 1}(\bO)\hat{\pi}_0(\bX) + \hat{\varphi}_{2,0}(\bO)\hat{\mu}_1(\bX) - \hat{\mu}_1(\bX)\hat{\pi}_0(\bX).
\end{align*}
and $\varphi_{3,a}$ is defined as before.
\end{corollary}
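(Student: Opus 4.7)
The plan is to exploit the linearity of influence functions: since the uncentered influence function of a linear combination of functionals equals the same linear combination of uncentered influence functions, it suffices to (i) rewrite each bound as an explicit linear combination of the six base functionals $\mathbb{E}\{\mu_a(\bX)\}$, $\mathbb{E}\{\pi_a(\bX)\}$, and $\mathbb{E}\{\mu_a(\bX)\pi_a(\bX)\}$ for $a\in\{0,1\}$, (ii) plug in the uncentered influence functions $\varphi_{1,a}$, $\varphi_{2,a}$, $\varphi_{3,a}$ from Proposition \ref{prop:5}, (iii) replace the nuisance components by their estimates, and (iv) take the empirical average. The identity $\tilde{\Psi} = \mathbb{E}\{\mu_1(\bX)\} - \mathbb{E}\{\mu_0(\bX)\}$ and the definition $\mathbb{E}\{\pi_a(\bX)[1-\mu_a(\bX)]\} = \mathbb{E}\{\pi_a(\bX)\} - \mathbb{E}\{\pi_a(\bX)\mu_a(\bX)\}$ will be used repeatedly to convert every bound expression from Propositions \ref{prop:1}--\ref{prop:4}, \ref{prop:8}, and \ref{prop:9} into the required linear form.

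First I would handle the bounds on $\Psi_0$. For example, for $u_0$ from Proposition \ref{prop:1}, I rewrite $u_0 = \mathbb{E}\{\mu_1(\bX)\} - \mathbb{E}\{\mu_0(\bX)\} + \mathbb{E}\{\pi_1(\bX)\} - \mathbb{E}\{\pi_1(\bX)\mu_1(\bX)\} + \mathbb{E}\{\pi_0(\bX)\mu_0(\bX)\}$, which under linearity of the influence-function map yields $\hat{u}_0 = \mathbb{P}_n[\hat{\varphi}_{1,1}(\bO)-\hat{\varphi}_{1,0}(\bO)+\hat{\varphi}_{2,1}(\bO)-\hat{\varphi}_{3,1}(\bO)+\hat{\varphi}_{3,0}(\bO)]$, matching the claim. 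The same mechanical bookkeeping yields $\hat{\ell}_0$, and the bounds under monotonicity ($\hat{\ell}_0^{\text{pos}}, \hat{u}_0^{\text{pos}}, \hat{\ell}_0^{\text{neg}}, \hat{u}_0^{\text{neg}}$) and under the bounded outcome-risk Assumption \ref{asmpt:bounded-risk} ($\hat{\tilde{\ell}}_0, \hat{\tilde{u}}_0$) follow identically by multiplying the appropriate $\varphi$ terms by the sensitivity constants $\delta_{au}$, $\delta_{a\ell}$, or $(\tau - 1), (\tau^{-1} - 1)$. The $\Psi_1$ bounds $\hat{\ell}_1, \hat{u}_1$ follow the same pattern applied to Proposition \ref{prop:8}, and the $\Psi_0$ point estimator under Assumptions \ref{asmpt:known-risk}, \ref{asmpt:known-censoring} and the $\Psi_1$ point estimator under Assumption \ref{asmpt:known-censoring} drop out as the special cases obtained from Propositions \ref{prop:4} and \ref{prop:9}.

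The two genuinely new pieces are the $\Psi_2$ bounds $\hat{\ell}_{2,\epsilon}, \hat{u}_{2,\epsilon}$ and the $\Psi_2$ point estimator. For the bounds I simply invoke Proposition \ref{prop:10} for $\hat{u}_{2,\epsilon}$ and the stated analog for $\hat{\ell}_{2,\epsilon}$ (which reverses the sign of $\mu_1(\bX) - \mu_0(\bX)$ inside $\Phi_\epsilon$ and $\phi_\epsilon$ and in the bracketed $\varphi_{1,1} - \varphi_{1,0} - \mu_1 + \mu_0$ term), then form the plug-in empirical average. For the point estimator of $\Psi_2$ under Assumption \ref{asmpt:known-censoring}, Proposition \ref{prop:9} gives $\Psi_2 = \tilde{\Psi} - \delta_0 \mathbb{E}[\pi_0(\bX)\{\mu_1(\bX) - \mu_0(\bX)\}]$, so I need the influence function of the cross functional $\mathbb{E}\{\pi_0(\bX)\mu_1(\bX)\}$, which is not one of the six base functionals in Proposition \ref{prop:5}. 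I would derive it by the same Gateaux-derivative calculation that produced $\varphi_{3,a}$: writing $\pi_0(\bX)\mu_1(\bX) = \pi_0(\bX) \cdot \mu_1(\bX)$ and applying the product rule for influence functions yields $\varphi_{3,01}(\bO) = \varphi_{1,1}(\bO)\pi_0(\bX) + \varphi_{2,0}(\bO)\mu_1(\bX) - \mu_1(\bX)\pi_0(\bX)$, exactly the expression claimed. Combining gives $\hat{\Psi}_2$ as stated.

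The main obstacle is the product-functional influence function $\varphi_{3,01}$ for the cross term $\mathbb{E}\{\pi_0(\bX)\mu_1(\bX)\}$: it requires redoing the derivation of $\varphi_{3,a}$ in Proposition \ref{prop:5} but with mismatched treatment arms, which breaks the symmetry that made the ``same-arm'' case notationally clean. Everything else in the corollary is a direct bookkeeping consequence of linearity, Proposition \ref{prop:5}, and Proposition \ref{prop:10}.
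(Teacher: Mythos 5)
Your proposal is correct and follows essentially the same route as the paper: the paper's Remark after the influence-function proposition states that the corollary ``follows immediately'' from the mean-zero moment conditions of the derived EIFs, and the paper's derivation there already computes the cross-arm functional $m_{3aa'}=\E\{\mu_a(\bX)\pi_{a'}(\bX)\}$ for general $a\neq a'$, which is exactly the $\varphi_{3,01}$ term you flag as the one genuinely new piece and correctly obtain via the product rule. The rest — linearity of the influence-function map applied to each bound rewritten in the base functionals, plus the plug-in of Proposition \ref{prop:10} for $\hat{\ell}_{2,\epsilon}$ and $\hat{u}_{2,\epsilon}$ — matches the paper's bookkeeping.
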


Throughout the remainder of the Appendices, when $b$ is one of the previously defined estimands, we take $\varphi_{b}$ to be the influence-function of that estimand, and $\hat\varphi_b$ it's influence-function based estimate. For example, $\varphi_{l_1} = \varphi_{1,1}-\varphi_{1,0} + \delta_{\ell 1}(\varphi_{2,1}-\varphi_{3,1}) - \delta_{u 0}(\varphi_{2,0}-\varphi_{3,0})$, and $\P_n\hat\varphi_{l_1}$ is the influence-function based estimator of $P(\varphi_{l_1}(\bO;\eta)) = l_1$.

\begin{corollary}\label{corr:2}
For fixed and finite values of the sensitivity parameters $\Theta$ and a fixed value of $\varepsilon$, consider the influence-function–based estimators
\[
\hat b(\Theta)=\mathbb P_n\left(\hat\varphi(\bO;\Theta)\right),
\]
of the lower bounds $\ell_0^\star,\ell_0^{\text{pos}},\ell_0^{\text{neg}},\tilde\ell_0,\ell_1,\ell_2$, the corresponding upper bounds, and the point-identified parameters $\Psi_0,\Psi_1,\Psi_2$. 
Assume positivity conditions
\[
\P(\pi_a(\bX)<1-\epsilon)=1,
\qquad
\P(\hat\pi_a(\bX)<1-\epsilon)=1,
\qquad
\P(\epsilon < e(\bX)\le 1-\epsilon)=1,
\qquad
\P(\epsilon < \hat e(\bX)\le 1-\epsilon)=1.
\]
Assume that $\|\hat \varphi(\bO; \Theta) - \varphi(\bO; \Theta)\| = o_p(1)$. Then,

\begin{align*}
\mathbb{P}_n\bigl(\hat{\varphi}(\bO;\Theta) - b(\Theta)\bigr)= \mathbb{P}_n\bigl(\varphi(\bO; \Theta) - b(\Theta)\bigr) + o_p(n^{-1/2}) + \mathcal{O}_p\bigl(R_2(\hat{b}(\Theta))\bigr),
\end{align*}

\noindent so that when $R_2(\hat{b}(\Theta)) = o_p(n^{-1/2})$,

\begin{align*}
\sqrt{n}\{\mathbb{P}_n(\hat{\varphi}(\bO;\Theta) - b(\Theta))\} \stackrel{d}\to \mathcal{N}\bigl(0, \text{Cov}(\varphi(\bO;\Theta))\bigr).
\end{align*}

\noindent Moreover, for each proposed estimator $\hat{b}_j(\Theta)$, the expression $R_2(\hat{b}_j(\Theta))$ is given by,

    \begin{align*}
        |R_2(\hat{\ell}_0^\star) |&\lesssim \sum_{a = 0, 1} \|\mu_a - \hat{\mu}_a\|(\|\pi_a - \hat{\pi}_a\| + \|e_a - \hat{e}_a\|) + \|e_{a} - \hat{e}_{a}\|\|\pi_a - \hat{\pi}_a\| \\
        |R_2(\hat{u}_0^\star)| &\lesssim \sum_{a = 0, 1} \|\mu_a - \hat{\mu}_a\|(\|\pi_a - \hat{\pi}_a\| + \|e_a - \hat{e}_a\|) + \|e_{a} - \hat{e}_{a}\|\|\pi_a - \hat{\pi}_a\| \\
        |R_2(\hat{\ell}_0^{\text{pos}})| &\lesssim \sum_{a = 0, 1}\|\mu_a - \hat{\mu}_a\|(\|\pi_a - \hat{\pi}_a\| + \|e_a - \hat{e}_a\|) + \|e_{0} - \hat{e}_{0}\|\|\pi_0 - \hat{\pi}_0\| \\
        |R_2(\hat{u}_0^{\text{pos}})| &\lesssim \sum_{a = 0, 1}\|\mu_a - \hat{\mu}_a\|(\|\pi_a - \hat{\pi}_a\| + \|e_a - \hat{e}_a\|) + \|e_{1} - \hat{e}_{1}\|\|\pi_1 - \hat{\pi}_1\| \\
        |R_2(\hat{\ell}_0^{\text{neg}})| &\lesssim \sum_{a = 0, 1}\|\mu_a - \hat{\mu}_a\|(\|\pi_a - \hat{\pi}_a\| + \|e_a - \hat{e}_a\|) + \|e_{1} - \hat{e}_{1}\|\|\pi_1 - \hat{\pi}_1\| \\
        |R_2(\hat{u}_0^{\text{neg}})| &\lesssim \sum_{a = 0, 1}\|\mu_a - \hat{\mu}_a\|(\|\pi_a - \hat{\pi}_a\| + \|e_a - \hat{e}_a\|) + \|e_{0} - \hat{e}_{0}\|\|\pi_0 - \hat{\pi}_0\| \\
        |R_2(\hat{\tilde \ell}_0)| &\lesssim \sum_{a = 0, 1} \|\mu_a - \hat{\mu}_a\|(\|\pi_a - \hat{\pi}_a\| + \|e_a - \hat{e}_a\|) + \|e_{a} - \hat{e}_{a}\|\|\pi_a - \hat{\pi}_a\| \\
        |R_2(\hat{\tilde u}_0)| &\lesssim \sum_{a = 0, 1} \|\mu_a - \hat{\mu}_a\|(\|\pi_a - \hat{\pi}_a\| + \|e_a - \hat{e}_a\|) + \|e_{a} - \hat{e}_{a}\|\|\pi_a - \hat{\pi}_a\| \\
        |R_2(\hat{\ell}_1)| &\lesssim \sum_{a = 0, 1} \|\mu_a - \hat{\mu}_a\|(\|\pi_a - \hat{\pi}_a\| + \|e_a - \hat{e}_a\|) + \|e_{a} - \hat{e}_{a}\|\|\pi_a - \hat{\pi}_a\| \\
        |R_2(\hat{u}_1)| &\lesssim \sum_{a = 0, 1} \|\mu_a - \hat{\mu}_a\|(\|\pi_a - \hat{\pi}_a\| + \|e_a - \hat{e}_a\|) + \|e_{a} - \hat{e}_{a}\|\|\pi_a - \hat{\pi}_a\| \\
        |R_2(\hat{\Psi}_0)| &\lesssim \sum_{a = 0, 1} \|\mu_a - \hat{\mu}_a\|(\|\pi_a - \hat{\pi}_a\| + \|e_a - \hat{e}_a\|) + \|e_{a} - \hat{e}_{a}\|\|\pi_a - \hat{\pi}_a\| \\
        |R_2(\hat{\Psi}_1)| &\lesssim \sum_{a = 0, 1} \|\mu_a - \hat{\mu}_a\|(\|\pi_a - \hat{\pi}_a\| + \|e_a - \hat{e}_a\|) + \|e_{a} - \hat{e}_{a}\|\|\pi_a - \hat{\pi}_a\| \\
        |R_2(\hat{\Psi}_2)| &\lesssim \sum_{a = 0, 1} \|\mu_a - \hat{\mu}_a\|(\|\pi_a - \hat{\pi}_a\| + \|e_a - \hat{e}_a\|) + \|e_{0} - \hat{e}_{0}\|\|\pi_a - \hat{\pi}_a\| \\
        &+ \|\mu_1 - \hat{\mu}_1\|\|\pi_0 - \hat{\pi}_0\| 
    \end{align*}
\noindent Therefore, for any vector of estimators $\hat{b}(\Theta)$, $R_2(\hat{b}(\Theta))$ is simply the sum of the relevant quantities above.
\end{corollary}

\setcounter{theorem}{1}

\begin{theorem}\label{thm:2}
Let $\hat{\varphi}_{2,\varepsilon} = [\hat{\varphi}_{l_{2, \varepsilon}}, \hat{\varphi}_{u_{2, \varepsilon}}]$, where, for example, $\hat{\varphi}_{u_{2, \varepsilon}} = \varphi_{u_{2, \varepsilon}}(\bO; \hat{\eta})$, and $\hat{\eta}$ is learned on data $D_0^n$. Additionally, let $b_{2, \varepsilon} = [l_{2, \varepsilon}, u_{2, \varepsilon}]$. Assume that $\|\hat{\varphi}_{2,\varepsilon} - \varphi_{2,\varepsilon}\| = o_p(1)$. Then,

\begin{align*}
\mathbb{P}_n\left(\hat{\varphi}_{2, \varepsilon}(\bO) - b_{2, \varepsilon}\right)  = \mathbb{P}_n\bigl(\varphi_{2, \varepsilon}(\bO) - b_{2, \varepsilon}\bigr) + o_p(n^{-1/2}) + \mathcal{O}_p\bigl(R_2(\hat{b}_{2, \varepsilon})\bigr)
\end{align*}

\noindent for 

\begin{align*}
|R_2(\hat{b}_{2, \varepsilon})| &\lesssim \sum_{a=0,1}\|\mu_a - \hat{\mu}_a\|(\|\pi_a - \hat{\pi}_a\| + \|e_a - \hat{e}_a\| + \|\mu_a - \hat{\mu}_a\|) \\
&+ \|\pi_0 - \hat{\pi}_0\|(\|e_0 - \hat{e}_0\| + \|\mu_1 - \hat{\mu}_1\|) + \|\mu_0 - \hat{\mu}_0\|\|\mu_1 - \hat{\mu}_1\|
\end{align*}

\noindent so that when $R_2(\hat{b}_{2, \varepsilon}) = o_p(n^{-1/2})$,

\begin{align*}
\sqrt{n}\{\mathbb{P}_n\bigl(\hat{\varphi}_{2, \varepsilon}(\bO) - b_{2, \varepsilon}\bigr)\} \stackrel{d}\to \mathcal{N}\bigl(0, \text{Cov}(\varphi_{2, \varepsilon})\bigr).
\end{align*}
\end{theorem}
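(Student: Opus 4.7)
The plan is to follow the same three-term decomposition used in the proof of Corollary 2, and then show that the one-step correction structure of $\varphi_{u_{2,\epsilon}}$ (and analogously $\varphi_{\ell_{2,\epsilon}}$) still yields a second-order remainder with the claimed products of nuisance errors, despite the extra nonlinearity introduced by $\Phi_\epsilon$ and $\phi_\epsilon$. Specifically, I write
\begin{align*}
\mathbb{P}_n\hat{\varphi}_{2,\epsilon} - b_{2,\epsilon}
= \underbrace{(\mathbb{P}_n - P)(\hat{\varphi}_{2,\epsilon} - \varphi_{2,\epsilon})}_{T_1}
+ \underbrace{(\mathbb{P}_n - P)\varphi_{2,\epsilon}}_{T_2}
+ \underbrace{P(\hat{\varphi}_{2,\epsilon} - \varphi_{2,\epsilon})}_{T_3}.
\end{align*}
The term $T_1$ is $o_p(n^{-1/2})$ by the standard sample-splitting lemma: conditional on $D_0^n$ the function $\hat{\varphi}_{2,\epsilon}$ is fixed, so Chebyshev combined with the assumed $L_2$ consistency $\|\hat{\varphi}_{2,\epsilon}-\varphi_{2,\epsilon}\|=o_p(1)$ gives the bound. $T_2$ is a centered empirical mean of a fixed $L_2$ function, hence by the CLT yields the limiting $\mathcal{N}(0,\mathrm{Cov}[\varphi_{2,\epsilon}])$ distribution. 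So the entire work reduces to bounding $T_3$.

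For $T_3$ I treat the four additive pieces of $\varphi_{u_{2,\epsilon}}$ separately. The first piece is an IPW-style term that, after iterated expectations, collapses to a linear combination of $(\gamma_{0a}/\hat{\gamma}_{0a})(\mu_a-\hat{\mu}_a)\hat{\pi}_0 \Phi_\epsilon(\hat{\mu}_1-\hat{\mu}_0)$ for $a=0,1$; applying the add-and-subtract trick $\gamma_{0a}/\hat{\gamma}_{0a} = 1 + (\gamma_{0a}-\hat{\gamma}_{0a})/\hat{\gamma}_{0a}$ and the decomposition $|\gamma_{0a}-\hat{\gamma}_{0a}| \le |e_a-\hat{e}_a|+|\pi_a-\hat{\pi}_a|$ from the proof of Corollary 2, together with boundedness of $\Phi_\epsilon$ and positivity of $\hat{e}_a,\hat{\pi}_a$, produces the $\|\mu_a-\hat{\mu}_a\|(\|\pi_a-\hat{\pi}_a\|+\|e_a-\hat{e}_a\|)$ terms. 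The second piece collapses by iterated expectations to $P\{(e_0/\hat{e}_0)(\pi_0-\hat{\pi}_0)(\hat{\mu}_1-\hat{\mu}_0)\Phi_\epsilon(\hat{\mu}_1-\hat{\mu}_0)\}$, and the same add-and-subtract maneuver yields the $\|\pi_0-\hat{\pi}_0\|\|e_0-\hat{e}_0\|$ contribution, along with cross-products in $\|\pi_0-\hat\pi_0\|\|\mu_a-\hat\mu_a\|$.

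The main obstacle is the third and fourth pieces, which together constitute a first-order Taylor correction designed to cancel the leading bias from plugging $\hat{\mu}_a$ into $\Phi_\epsilon$. Writing $\hat{\Delta}=\hat{\mu}_1-\hat{\mu}_0$, $\Delta=\mu_1-\mu_0$, Taylor's theorem gives
\begin{align*}
\hat{\Delta}\Phi_\epsilon(\hat{\Delta}) - \Delta\Phi_\epsilon(\Delta)
= [\Phi_\epsilon(\Delta)+\Delta\phi_\epsilon(\Delta)](\hat{\Delta}-\Delta)
+ O\bigl((\hat\Delta-\Delta)^2\bigr),
\end{align*}
where the implicit constant depends on bounds for $\Phi_\epsilon,\phi_\epsilon,\phi_\epsilon'$ on a compact set (finite for any fixed $\epsilon>0$). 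The fourth piece contributes $P\{\hat{\pi}_0[\hat{\Delta}\Phi_\epsilon(\hat{\Delta})-\Delta\Phi_\epsilon(\Delta)]\}$ plus the target $u_{2,\epsilon}$; the third piece, by Proposition 5 and iterated expectations, contributes $P\{\hat{\pi}_0\hat{\Delta}\phi_\epsilon(\hat{\Delta})\cdot\bigl[(\mu_1-\hat\mu_1)-(\mu_0-\hat\mu_0)\bigr]\}$ plus a second-order residual of the form $P\{\hat{\pi}_0\hat{\Delta}\phi_\epsilon(\hat{\Delta})\cdot (\gamma_{0a}-\hat\gamma_{0a})(\mu_a-\hat\mu_a)/\hat\gamma_{0a}\}$, which Cauchy--Schwarz bounds by the same cross products already collected. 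The first-order $(\hat{\Delta}-\Delta)$ terms of pieces three and four cancel up to differences in evaluation point of $\Phi_\epsilon,\phi_\epsilon$ and in $\hat{\pi}_0$ versus $\pi_0$; bounding those differences by Lipschitzness of $\Phi_\epsilon,\phi_\epsilon$ and Cauchy--Schwarz produces the self-products $\|\mu_a-\hat{\mu}_a\|^2$ and the cross product $\|\mu_0-\hat{\mu}_0\|\|\mu_1-\hat{\mu}_1\|$, as well as $\|\pi_0-\hat{\pi}_0\|\|\mu_1-\hat{\mu}_1\|$.

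Collecting all contributions and bounding via Cauchy--Schwarz yields
\begin{align*}
|T_3| \lesssim \sum_{a=0,1}\|\mu_a-\hat{\mu}_a\|\bigl(\|\pi_a-\hat{\pi}_a\|+\|e_a-\hat{e}_a\|+\|\mu_a-\hat{\mu}_a\|\bigr)
+\|\pi_0-\hat{\pi}_0\|\bigl(\|e_0-\hat{e}_0\|+\|\mu_1-\hat{\mu}_1\|\bigr)
+\|\mu_0-\hat{\mu}_0\|\|\mu_1-\hat{\mu}_1\|,
\end{align*}
which is the stated $R_2(\hat{b}_{2,\epsilon})$. The same computation applied to $\varphi_{\ell_{2,\epsilon}}$ (whose structure is identical with signs flipped in the arguments of $\Phi_\epsilon,\phi_\epsilon$ and in the IPW-corrected term) gives the analogous bound. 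When $R_2(\hat{b}_{2,\epsilon})=o_p(n^{-1/2})$, $T_3$ is negligible and $T_2$ governs the limit, delivering the announced normal distribution. The most delicate bookkeeping step -- and the main obstacle -- is verifying the cancellation of first-order bias between the third and fourth pieces; this is the analogue in our setting of the classical one-step argument and is what determines the precise product structure of $R_2(\hat{b}_{2,\epsilon})$.
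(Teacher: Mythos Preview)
Your proposal is correct and follows essentially the same approach as the paper: the three-term decomposition $T_1+T_2+T_3$, the sample-splitting/Chebyshev argument for $T_1$, the CLT for $T_2$, and a Taylor/mean-value expansion of the $\Phi_\epsilon$ nonlinearity to show that the $\phi_\epsilon$ correction term cancels the first-order plug-in bias in $T_3$, leaving only the stated second-order products. The only cosmetic difference is organizational: the paper first splits $\varphi_{u_2}=\varphi_{u_2,1}-\varphi_{u_2,0}$ by the $a$-index and then pairs pieces (ii)+(iv), $r_1$+(i), $r_2$+(v), $r_3$+(iii), whereas you work with the four additive lines of the influence function directly and Taylor-expand $\hat\Delta\Phi_\epsilon(\hat\Delta)-\Delta\Phi_\epsilon(\Delta)$ as a single object; both routes isolate the same second-order terms and the same key cancellation.
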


\setcounter{theorem}{2}

\begin{theorem}\label{thm:3}
Let

\begin{align*}
\varphi_{\ell_2}(\bO; \eta)  &= \varphi_{1,1}(\bO) - \varphi_{1,0}(\bO) \\
&-\left(\frac{\I\{C = 0, A = 1\}}{(1-\pi_1(\bX))e_1(\bX)} - \frac{\I(C = 0, A = 0)}{(1 - \pi_0(\bX))e_0(\bX)}\right)\{Y - \mu_A(\bX)\}\pi_0(\bX)\I(\mu_1(\bX) > \mu_0(\bX)) \\
&- \frac{\I\{A = 0\}}{e_0(\bX)}\{C - \pi_0(\bX)\}[\mu_1(\bX) - \mu_0(\bX)]\I(\mu_1(\bX) > \mu_0(\bX)) \\
&- [\mu_1(\bX) - \mu_0(\bX)]\pi_0(\bX)\I(\mu_1(\bX) > \mu_0(\bX)),\\
\varphi_{u_2}(\bO; \eta)  &= \varphi_{1,1}(\bO) - \varphi_{1,0}(\bO) \\
&-\left(\frac{\I\{C = 0, A = 1\}}{(1-\pi_1(\bX))e_1(\bX)} - \frac{\I(C = 0, A = 0)}{(1 - \pi_0(\bX))e_0(\bX)}\right)\{Y - \mu_A(\bX)\}\pi_0(\bX)\I(\mu_1(\bX) \le \mu_0(\bX)) \\
&- \frac{\I\{A = 0\}}{e_0(\bX)}\{C - \pi_0(\bX)\}[\mu_1(\bX) - \mu_0(\bX)]\I(\mu_1(\bX) \le \mu_0(\bX)) \\
&- [\mu_1(\bX) - \mu_0(\bX)]\pi_0(\bX)\I(\mu_1(\bX) \le \mu_0(\bX)) 
\end{align*}

\noindent Let $\hat{\varphi}_{2} = [\hat{\varphi}_{l_{2}}, \hat{\varphi}_{u_{2}}]$, where, for example, $\hat{\varphi}_{u_{2}} = \varphi_{u_{2}}(\bO; \hat{\eta})$, and $\hat{\eta}$ is learned on data $D_0^n$. Additionally, let $b_{2} = [l_{2}, u_{2}]$. 

Assume that for all $t > 0$, there exists $\alpha > 0$ such that $\P\bigl(|\mu_1(\bX)-\mu_0(\bX)| < t\bigr) \le Ct^\alpha$ for some constant $C$, and finally $\|\hat{\varphi}_2 - \varphi_2\| = o_p(1)$. Then,

\begin{align*}
\mathbb{P}_n\left(\hat{\varphi}_{2}(\bO) - b_{2}\right)  = \mathbb{P}_n\left(\varphi_{2}(\bO) - b_{2}\right) + o_p(n^{-1/2}) + \mathcal{O}_p(R_2(\hat b_2))
\end{align*}

\noindent for 

\begin{align*}
|R_2(\hat{b}_{2})| &\lesssim \sum_{a=0,1}\|\mu_a - \hat{\mu}_a\|(\|\pi_a - \hat{\pi}_a\| + \|e_a - \hat{e}_a\|) + \|\pi_0 - \hat{\pi}_0\|\|e_0 - \hat{e}_0\| + (\|\hat\mu_1-\mu_1\|_{\infty} + \|\hat\mu_0-\mu_0\|_\infty)^{1+\alpha}
\end{align*}

\noindent so that when $R_2(\hat{b}_2) = o_p(n^{-1/2})$,

\begin{align*}
\sqrt{n}\{\mathbb{P}_n(\hat{\varphi}_{2}(\bO) - b_{2})\} \stackrel{d}\to \mathcal{N}\bigl(0, \text{Cov}(\varphi_{2})\bigr).
\end{align*}
\end{theorem}

\begin{remark}
The term $\bigl(\|\hat\mu_1-\mu_1\|_{\infty} + \|\hat\mu_0-\mu_0\|_\infty\bigr)^{1+\alpha}$ comes from the margin condition. This will be satisfied for $\alpha = 1$ when $\mu_1-\mu_0$ has a bounded density, in which case this would require $\|\mu_a-\hat\mu_a\|_\infty = o_p(n^{-1/4})$ for $R_2 = o_p(n^{-1/2})$ (along with the other error products).
\end{remark}

Finally, in Theorem \ref{thm:5} we consider doubly-robust style estimators of the bounds $\tilde u_0^\star, \tilde \ell_0^\star$ given in Proposition \ref{prop:3a}.

\begin{theorem}\label{thm:5}
Let
\begin{align*}
\varphi_{\tilde u_0^\star}(\bO;\eta) &= \varphi_{1,1}(\bO;\eta)-\varphi_{1,0}(\bO;\eta) \\
&\quad + \delta_{1u}\Big[\{\varphi_{2,1}(\bO;\eta)-\varphi_{3,1}(\bO;\eta)\}\I\{\tau_1\mu_1(\bX)>1\}
+(\tau_1-1)\varphi_{3,1}(\bO;\eta)\I\{\tau_1\mu_1(\bX)\le 1\}\Big] \\
&\quad - \delta_{0u}(\tau_0^{-1}-1)\varphi_{3,0}(\bO;\eta),
\end{align*}
and
\begin{align*}
\varphi_{\tilde \ell_0^\star}(\bO;\eta) &= \varphi_{1,1}(\bO;\eta)-\varphi_{1,0}(\bO;\eta)
+ \delta_{1u}(\tau_1^{-1}-1)\varphi_{3,1}(\bO;\eta) \\
&\quad - \delta_{0u}\Big[\{\varphi_{2,0}(\bO;\eta)-\varphi_{3,0}(\bO;\eta)\}\I\{\tau_0\mu_0(\bX)>1\}
+(\tau_0-1)\varphi_{3,0}(\bO;\eta)\I\{\tau_0\mu_0(\bX)\le 1\}\Big].
\end{align*}

\noindent Let $\hat{\varphi}_{\tilde b_0^\star} = [\hat{\varphi}_{\tilde \ell_0^\star},\hat{\varphi}_{\tilde u_0^\star}]$, where, for example, $\hat{\varphi}_{\tilde u_0^\star}=\varphi_{\tilde u_0^\star}(\bO;\hat\eta)$ and $\hat\eta$ is learned on data $D_0^n$, and define $\tilde b_0$ analogously.

Assume that for $a=0,1$ there exist constants $C_a,\alpha_a>0$ such that, for all $t>0$,
\[
\P\left(\left|\tau_a\mu_a(\bX)-1\right|<t\right)\le C_a t^{\alpha_a},
\]
and assume finally that $\|\hat{\varphi}_{\tilde b_0^\star}-\varphi_{\tilde b_0^\star}\|=o_p(1)$. Then
\begin{align*}
\mathbb{P}_n\left[\hat{\varphi}_{\tilde b_0^\star}(\bO)-\tilde b_0^\star\right]
=
\mathbb{P}_n\left[\varphi_{\tilde b_0^\star}(\bO)-\tilde b_0^\star\right]
+o_p(n^{-1/2})+\mathcal{O}_p\bigl(R_2(\hat{\tilde{b_0^\star}})\bigr),
\end{align*}
for
\begin{align*}
|R_2(\hat{\tilde b}_0)|
&\lesssim
\sum_{a=0,1}\|\mu_a-\hat\mu_a\|
\bigl(\|\pi_a-\hat\pi_a\|+\|e_a-\hat e_a\|\bigr)
+\sum_{a=0,1}\|e_a-\hat e_a\|\|\pi_a-\hat\pi_a\| \\
&\quad + \bigl(\|\hat\mu_1-\mu_1\|_\infty\bigr)^{1+\alpha_1}
+ \bigl(\|\hat\mu_0-\mu_0\|_\infty\bigr)^{1+\alpha_0}.
\end{align*}
Therefore, when $R_2(\hat{\tilde b}_0)=o_p(n^{-1/2})$ (requiring, $\alpha_a \ge 1,\quad a=0, 1$),
\begin{align*}
\sqrt{n}\left\{\mathbb{P}_n\left(\hat{\varphi}_{\tilde b_0}(\bO)-\tilde b_0\right)\right\}
\stackrel{d}{\to}
\mathcal{N}\left(0,\text{Cov}(\varphi_{\tilde b_0})\right).
\end{align*}
\end{theorem}

\begin{remark}
    Unlike the smooth bounds in Corollary \ref{corr:2}, the estimators of considered in Theorem \ref{thm:3} and \ref{thm:5} are not doubly robust in the usual sense because of the use of the indicator function. Thus consistency of the relevant outcome regression is required in order for the margin term to vanish. Conditional on this, the remaining nuisance estimation error retains the same product-of-errors structure as in the smooth case. The margin conditions are not overly strong conditions, requiring in effect that the density of $\mu_1(\bX)-\mu_0(\bX)$ is bounded for Theorem \ref{thm:3}, and that the densities of $\mu_a(\bX)$ is bounded for $a = 0, 1$, giving exponents $\alpha = 1$ on the quantity $\|\hat\mu_a-\mu_a\|_\infty^{1+\alpha}$.
    \end{remark}

\begin{remark}
    Instead of the estimator $\P_n\bigl(\hat\varphi_{\tilde b_0^\star}\bigr)$, analogous to the relationship of $\Psi_{2,\varepsilon}$ to $\Psi_2$ and their respective estimators, we may construct an influence-function based estimator $\P_n\bigl(\hat\varphi_{b_0^\star,\varepsilon}\bigr)$ of the quantity $\tilde b_{0,\varepsilon}^\star$, which we define as the same as $\tilde b_{0}^\star$, but replacing $\I$ with $\Phi_\varepsilon$ for some fixed $\varepsilon > 0$. In this case $\varphi_{\tilde b_0^\star,\varepsilon}$ has the same form as $\varphi_{\tilde b_0^\star}$, but with $\Phi_{\varepsilon}$ replacing $\I$, but with the following additional term,

    \begin{align*}
        \delta_{au}\tau_a\varphi_{1,a}(\bO)\left\{\pi_a(\bX)(1-\mu_a(\bX))\phi_{\varepsilon}\bigl(\tau_a\mu_a(\bX) -1\bigr) + (\tau_a-1)\pi_a(\bX)\mu_a(\bX)\phi_\varepsilon\bigl(1-\tau_a\mu_a(\bX)\bigr)\right\},
    \end{align*}
    where we take $a = 1$ and add this quantity to obtain an estimator of $\tilde u_{0,\varepsilon}^\star$; and similarly take $a = 0$ and add this quantity to obtain an estimator of $\tilde \ell_{0,\varepsilon}^\star$. The proofs follow analogously to those for the results above, and we omit them for brevity.
\end{remark}

\subsection{Uniform inference}\label{app:multiplier}

Under only mildly stronger conditions than those required to ensure asymptotic normality of the bound estimators for fixed values of the sensitivity parameters, one may construct confidence bands that hold simultaneously over a continuous set of sensitivity parameters by using the multiplier bootstrap to approximate the distribution of the supremum over the standardized statistics. The following result closely follow previously derived results for uniform inference in sensitivity analysis outlined \cite{kennedy2019nonparametric, mauro2020instrumental}.

\begin{theorem}\label{thm:4}
Assume the regularity conditions of Coroallary \ref{corr:2}, and let $\Theta$ denote any of the previously discussed sensitivity parameters and let $\Pi$ be a compact set of values these parameters may take. Let $\varphi(\bO;\eta,\Theta)$ denote a p-dimensional vector of uncentered efficient influence functions (EIF) of the previously discussed bounds, so that $\varphi_j(\bO;\eta,\Theta)$ are a function of the nuisance functions $\eta$ and sensitivity parameters $\Theta$ for $j = 1,\dots,p$. We then define the set $\Delta := \Pi \times \{1,\dots,p\}$ with elements $\Theta^\star$, so that, for example, for $\Theta^\star = (j,\Theta) \in \Delta$, $\sigma(\Theta^\star):=\sigma_j(\Theta)$.

Let $b(\Theta^\star)$ denote $\E(\varphi(\bO;\eta,\Theta^\star))$, the statistical target of inference, and let $\hat b(\Theta^\star) = \P_n(\varphi(\bO;\hat\eta,\Theta^\star))$. Recall that $R_2(\hat b(\Theta^\star)) = \E(\hat b(\Theta^\star) - b(\Theta^\star))$, where these expressions for bounds on the specific elements $R_2(\hat b_j(\Theta))$ are given in Corollary \ref{corr:2}. Let
\[
\sigma^2(\Theta^\star) =\E\{(\varphi(\bO; \eta,\Theta^\star)-b_j(\Theta^\star))^2\}, \qquad
\hat\sigma^2(\Theta^\star)=\mathbb \P_n\{(\varphi(\bO; \hat\eta, \Theta^\star) - \hat b(\Theta^\star))^2\}.
\]
Assume that

\begin{enumerate}
    \item $\sup_{\Theta^\star\in\Delta}\left|\frac{\hat\sigma(\Theta^\star)}{\sigma(\Theta^\star)}-1\right|=o_p(1)$
    \item $\sup_{\Theta^\star\in\Delta}\| \hat\varphi(\Theta^\star)-\varphi(\Theta^\star)\|_2 = o_p(1)$
    \item $\sup_{\Theta^\star\in\Delta}|R_2(\hat b(\Theta^\star))| = o_p(n^{-1/2}).$
\end{enumerate}
Then
\[
\sup_{\Theta^\star\in\Delta} \left\lvert\sqrt{n}\frac{\hat b(\Theta^\star) -  b(\Theta^\star)}{\hat\sigma(\Theta^\star)} - \sqrt{n}\frac{\P_n(\varphi(\bO;\eta,\Theta^\star)) -b(\Theta^\star)}{\sigma(\Theta^\star)}\right\rvert = o_p(1).
\]
\end{theorem}

\begin{remark}
The conditions in Theorem \ref{thm:4} are mild in our setting. The first imposes only uniform consistency of the variance estimators. The second and third conditions impose no additional substantive assumptions: the bounds and their second-order remainder terms may be written as finite linear combinations of nuisance-dependent quantities with coefficients depending on $\Theta$. Since $\Delta$ is compact, these coefficients are uniformly bounded over $\Theta^\star\in\Delta$. Thus, the pointwise consistency and rate conditions from Corollary \ref{corr:2} extend immediately to uniform versions by taking suprema over $\Theta^\star$.
\end{remark}

\begin{remark}
    From this result, one may then apply Theorem 4 from \cite{kennedy2019nonparametric} to apply the multiplier bootstrap to generate some value $\hat c_\alpha$ such that
    \[
    \P\left(\hat b(\Theta^\star) - \frac{\hat c_\alpha \hat \sigma(\Theta^\star)}{\sqrt{n}} \le b(\Theta^\star) \le \hat b(\Theta^\star) + \frac{\hat c_\alpha \hat \sigma(\Theta^\star)}{\sqrt{n}} \text{, for all } \Theta^\star \in \Delta\right) = 1 - \alpha + o(1).
    \]
\end{remark}

\section{Proofs}\label{app:proofs}

\subsection{Bounds}\label{app:bounds}

\begin{proof}[Proof of Proposition 1: expression for $\Psi_0$]
\begin{align*}
\E[Y(a) \mid \bX = \bx] &= \E[Y(a) \mid \bx, A = a] \\
&= \E[Y \mid \bx, A = a] \\
&= \E[Y \mid \bx, A = a, C = 1]\P(C = 1 \mid A = a, \bx) \\
&+ \E[Y \mid \bx, A = a, C = 0]\P(C = 0 \mid A = a, \bx) \\
&= \{\E[Y \mid \bx, A = a, C = 1, U_I = 1, U_{NI} =0]\P(U_I = 1, U_{NI} =0\mid \bx, C = 1,A = a)\\
&+ \E[Y \mid \bx, A = a, U_{NI} =1, U_I =0]\P(U_{NI} = 1, U_I = 0\mid \bx, C= 1,A = a)\}\pi_a(\bx) \\
&+ \mu_a(\bx)(1-\pi_a(\bx)) \\
&= \left\{\E[Y \mid \bx, A = a, U_I = 1, U_{NI} =0]\left(\frac{\pi_a^\star(\bx)}{\pi_a(\bx)}\right)\right.\\
&\left.+ \E[Y \mid \bx, A = a, U_{NI} =1, U_I =0]\left(1-\frac{\pi_a^\star(\bx)}{\pi_a(\bx)}\right)\right\}\pi_a(\bx) + \mu_a(\bx)(1-\pi_a(\bx)) \\
&= \mu_a^\star(\bx)\pi_a^\star(\bx) + \mu_a(\bx)(\pi_a(\bx)-\pi_a^\star(\bx)) + \mu_a(\bx)(1-\pi_a(\bx)) \\
&= \mu_a(\bx) + \pi^\star_a(\bx)(\mu^\star_a(\bx)-\mu_a(\bx))
\end{align*}

\noindent where the first equality holds by ignorability, the second by consistency, the third by the law of iterated expectations, and the fourth by iterating expectations and the fact that $U_I$ and $U_{NI}$ are mutually exclusive. To obtain the fifth equality, further notice that

\begin{align*}
&\P(U_I = 1, U_{NI} = 0 \mid C = 1, \bx, A = a) \\
&= \P(U_I = 1  \mid C = 1, \bx, A = a) \\
&= \frac{\P(U_I = 1, C = 1 \mid \bx, A = a)}{\P(C = 1 \mid \bx, A = a)} \\
&= \frac{\P(U_I = 1 \mid \bx, A = a)}{\P(C = 1 \mid \bx, A = a)}
\end{align*}

\noindent where the first equality holds due to mutual exclusivity, the second by definition, and the third by the fact that $U_I = 1 \implies C = 1$. By mutual exclusivity we can also see that $\P(U_I = 0, U_{NI} = 1 \mid C = 1, \bx, A = a) = 1 - \P(U_I = 1, U_{NI} = 0 \mid C = 1, \bx, A = a)$. Combining these facts gives us the fifth equality. The sixth equality uses $Y \perp U_{NI} \mid X, A, U_I = 0$ to obtain $\E[Y \mid \bX, A = a, U_{NI} = 1, U_I = 0] = \E[Y \mid \bX, A = a, U_{NI} = 0, U_I = 0] = \mu_a(\bX)$, and the final equality follows by simplifying terms.
\end{proof}

\begin{proof}[Proof of Proposition 2 and Corollary \ref{corr:0}: general bounds on ATE]
Assume 1-5, and either \ref{asmpt:known-missingness} or \ref{asmpt:bounded-missingness}. For $a = 0, 1$, we have that
\begin{align*}
\E[Y(a) \mid \bx] &= \mu_a(\bx) + \pi_a^\star(\bx)[\mu_a^\star(\bx)-\mu_a(\bx)] \\
&\le \mu_a(\bx) + \delta_{au}\pi_a(\bx)[\mu_a^\star(\bx)-\mu_a(\bx)] \\
&\le \mu_a(\bx) + \delta_{au}\pi_a(\bx)[1-\mu_a(\bx)],
\end{align*}

\noindent where the first equality holds by Proposition 1, the second by the fact that $\pi_a^\star(\bx) \le \delta_{au}\pi_a(\bx)$, and the final inequality by the fact that $\mu_a^\star(\bx) \le 1$ for all $\bx$. Using the same logic it is easy to see that

\begin{align*}
\E[Y(a) \mid \bx] &= \mu_a(\bx) + \pi_a^\star(\bx)[\mu_a^\star(\bx)-\mu_a(\bx)] \\
&\ge \mu_a(\bx) + \pi^\star_a(\bx)[0-\mu_a(\bx)] \\
&\ge \mu_a(\bx) - \delta_{au}\pi_a(\bx)\mu_a(\bx).
\end{align*}

\noindent Subtracting the upper bound for $\E[Y(0) \mid \bx]$ from the lower bound for $\E[Y(1) \mid \bx]$ and integrating over the distribution of $\bX$ gives the result for $\ell_{0}^\star$, and the expression for $u_0^\star$ follows analogously. The bounds from Proposition \ref{prop:1} immediately follow setting $\delta_{u1} = \delta_{u0} = 1$.
\end{proof}

\begin{remark}
Under assumptions 1-5, and either \ref{asmpt:known-missingness} or \ref{asmpt:bounded-missingness}, the bounds $(\ell_{0}^\star,u_{0}^\star)$ (and thereby $\ell_0,u_0$) on $\Psi_0$ are sharp. For example, $\ell_{0}^\star$ holds with equality a distribution satisfying $\pi_a^\star(\bx)/\pi_a(\bx) = \delta_{au}$ and $\mu_a^\star(\bx) = 1-a$ for $a = 0, 1$ and for all $\bx$, and $u_0^\star$ holds with equality for a distribution satisfying $\pi_a^\star(\bx)/\pi_a(\bx) = \delta_{au}$ and $\mu_a^\star(\bx) = a$ for $a = 0, 1$ and for all $\bx$. 
\end{remark}

\begin{proof}[Proof of Proposition 3: bounds on ATE, monotonicity]
First assume positive monotonicity (\ref{asmpt:monotonicity-pos}), so that for all $\bx, a$, $\mu_a^\star(\bx) \ge \mu_a(\bx)$. Then

\begin{align*}
\E[Y(a) \mid \bx] &= \mu_a(\bx) + \pi_a^\star(\bx)[\mu_a^\star(\bx)-\mu_a(\bx)] \\
&\ge \mu_a(\bx)
\end{align*}

\noindent where the second inequality follows because this expression is minimized when $\mu_a^\star(\bx) = \mu_a(\bx)$. The results follow from combining these lower bounds with the upper bounds derived in the proof of Proposition 2, where we make assumption \ref{asmpt:known-missingness} or \ref{asmpt:bounded-missingness} and allow $\delta_{1u}$ and $\delta_{0u}$ to take values less than 1. To obtain $\ell_0^{\text{pos}}$, simply subtract the upper bound for $\E[Y(0) \mid \bx]$ from the lower bound for $\E[Y(1) \mid \bx]$ and integrate over the covariate distribution; and the expression for $u_0^{\text{pos}}$ follows analogously.

Next assume negative monotonicity (\ref{asmpt:monotonicity-neg}), so that for all $\bx, a$, $\mu_a^\star(\bx) \le \mu_a(\bx)$. Then

\begin{align*}
\E[Y(a) \mid \bx] &= \mu_a(\bx) + \pi_a^\star(\bx)[\mu_a^\star(\bx)-\mu_a(\bx)] \\
&\le \mu_a(\bx)
\end{align*}

\noindent where the inequality follows because the expression is maximized when $\mu_a^\star(\bx) = \mu_a(\bx)$. As with the bounds under positive monotonicity, we may combine this with the lower bounds from the proof of Proposition 2 to obtain the expressions for $\ell_0^{\text{pos}}$ and $u_0^{\text{pos}}$.
\end{proof}

\begin{remark}
Under assumptions 1-5, \ref{asmpt:known-missingness} or \ref{asmpt:bounded-missingness}, and either positive or negative monotonicity (\ref{asmpt:monotonicity-pos} or \ref{asmpt:monotonicity-neg}), the bounds on $\Psi_0$ are sharp. For example, the bound $l_0^{pos}$ holds with equality for a distribution satisfying $\pi_0^\star(\bx)/\pi_0(\bx) = \delta_{0u}$, $\mu_0^\star(\bx) = 1$, and $\mu_1^\star(\bx) = \mu_1(\bx)$ for all $\bx$; and the bound $u_0^{pos}$ holds with equality for a distribution satisfying $\pi_1^\star(\bx)/\pi_1(\bx) = \delta_{1u}$, $\mu_1^\star(\bx) = 1$, and $\mu_0^\star(\bx) = \mu_0(\bx)$ for all $\bx$. Similarly, the bound $\ell_0^{neg}$ holds with equality, letting $\pi_1^\star(\bx)/\pi_1(\bx) = \delta_{1u}$, $\mu_1^\star(\bx) = 0$, and $\mu_0(\bx) = \mu_0^\star(\bx)$ for all $\bx$, and for $u_0^{neg}$ letting $\pi_0^\star(\bx)/\pi_0(\bx) = \delta_{0u}$, $\mu_0^\star(\bx) = 0$, and $\mu_1(\bx) = \mu_1^\star(\bx)$ for all $\bx$.
\end{remark}

\begin{proof}[Proof of Proposition 4: bounds on ATE, bounded outcome risk]
By assumption, we know that

\begin{align*}
\frac{\mu_a^\star(\bx)}{\mu_a(\bx)} &\le \tau_a \\
&\implies \mu_a^\star(\bx) - \mu_a(\bx) \le \mu_a(\bx)[\tau_a - 1].
\end{align*}

\noindent and therefore,

\begin{align*}
\E[Y(a) \mid \bx] &= \mu_a(\bx) + \pi_a^\star(\bx)[\mu_a^\star(\bx)-\mu_a(\bx)] \\
& \le \mu_a(\bx) + \pi_a^\star(\bx)[\mu_a(\bx)(\tau_a - 1)] \\
& \le \mu_a(\bx) + \delta_{au}\pi_a(\bx)[\mu_a(\bx)(\tau_a - 1)].
\end{align*}

\noindent where the first inequality follows by plugging in the expression above, and the second by noting that $\tau_a - 1 \ge 0$ and by \ref{asmpt:known-missingness} or \ref{asmpt:bounded-missingness}. Similarly,

\begin{align*}
\frac{\mu_a^\star(\bx)}{\mu_a(\bx)} &\ge \tau_a^{-1} \\
&\implies \mu_a^\star(\bx) - \mu_a(\bx) \ge \mu_a(\bx)[\tau_a^{-1} - 1],
\end{align*}

\noindent and therefore,

\begin{align*}
\E[Y(a) \mid \bx] &= \mu_a(\bx) + \pi_a^\star(\bx)[\mu_a^\star(\bx)-\mu_a(\bx)] \\
& \ge \mu_a(\bx) + \pi_a^\star(\bx)[\mu_a(\bx)(\tau_a^{-1} - 1)] \\
& \ge \mu_a(\bx) + \delta_{au}\pi_a(\bx)[\mu_a(\bx)(\tau_a^{-1} - 1)],
\end{align*}

\noindent where the second inequality here follows by \ref{asmpt:bounded-risk} and by the fact that $\tau_a^{-1} - 1 \le 0$. The final results follow again by subtracting the relevant bounds on $\E[Y(a) \mid \bx]$ and integrating over the covariate distribution.
\end{proof}

\begin{remark}\label{rmk:sharp1}
Under assumptions 1-5, \ref{asmpt:known-missingness} or \ref{asmpt:bounded-missingness}, and \ref{asmpt:bounded-risk}, the bounds $\tilde \ell_0, \tilde u_0$ are sharp as long as $\tau_a\mu_a(\bx) \le 1$ for all $\bx$. For example, the upper bound is attained when $\pi^\star_a(\bx)/\pi_a(\bx) = \delta_{au}$ and $\mu_a^\star(\bx) = \tau_a \mu_a(\bx)$ for all $\bx$ and $a = 0, 1$, and similarly the lower bound is attained when $\pi_a(\bx)/\pi_a(\bx) = \delta_{au}$ and $\mu_a^\star(\bx) = \tau_a^{-1} \mu_a(\bx)$ for all $\bx$ and $a = 0, 1$. On the other hand, if $\tau_a\mu_a(\bx) \le 1$ for some $\bx$, sharp bounds may be recovered by using the expression given in Proposition \ref{prop:3a}.
\end{remark}

\begin{proof}[Proof of Proposition 5: point identification of $\Psi_0$]
This result follows immediately from applying \ref{asmpt:known-missingness} and \ref{asmpt:known-risk} to the expression derived in Proposition 1.
\end{proof}

\begin{proof}[Proof of Proposition 7: expression for $\Psi_1$]
\begin{align*}
\P(Z(a) = 1 \mid \bX = \bx) &= 1 - \P(Y(a) = 0, U_I(a) = 0 \mid \bX = \bx) \\
&= 1 - \P(Y(a) = 0 \mid U_I(a) = 0, A = a, \bx)\P(U_I(a) = 0 \mid A = a, \bx) \\
&= 1 - \P(Y = 0 \mid U_I = 0, A = a, \bx)\P(U_I = 0 \mid A = a, \bx) \\
&= 1 - \P(Y = 0 \mid C = 0, A = a, \bx)\P(U_I = 0 \mid A = a, \bx) \\
&= 1 - [1 - \mu_a(\bx)][1 - \pi^\star_a(\bx)] \\
\end{align*}

\noindent where the first equality holds by definition, the second by ignorability, the third by consistency, the fourth by non-informative missingness by $U_{NI}$, and the fifth by definition. This expression then implies that:

\begin{align*}
\E[Z(1) - Z(0) \mid \bx] &= [1 - \mu_0(\bx)][1 - \pi^\star_0(\bx)] - [1 - \mu_1(\bx)][1 - \pi^\star_1(\bx)] \\
&= 1 - \mu_0(\bx) - \pi^\star_0(\bx) + \mu_0(\bx)\pi^\star_0(\bx) - [1 - \mu_1(\bx) - \pi^\star_1(\bx) + \mu_1(\bx)\pi^\star_1(\bx)] \\
&= \mu_1(\bx) - \mu_0(\bx) + \pi_1^\star(\bx)[1 - \mu_1(\bx)] - \pi_0^\star(\bx)[1 - \mu_0(\bx)]
\end{align*}

\noindent where these equalities follow by simplifying terms. Integrating the final expression over the covariate distribution gives the result.
\end{proof}

\begin{proof}[Proof of Proposition 8: expression for $\Psi_2$]
We adopt the convention that $U_I(a_y, a_d) = 1 \implies Y(a_y, a_d) = 0$. Then,

\begin{align*}
\E[Y(1, 0) \mid \bX = \bx] &= \E[Y(1, 0) \mid U_I(1, 0) = 0, \bx]\P(U_I(1, 0) = 0 \mid \bx) \\
&= \E[Y(1, 1) \mid U_I(1, 1) = 0, \bx]\P(U_I(0, 0) = 0 \mid \bx) \\
&= \E[Y(1) \mid U_I(1) = 0, \bx]\P(U_I(0) = 0 \mid \bx) \\
&= \E[Y(1) \mid A = 1, U_I(1) = 0, \bx]\P(U_I(0) = 0 \mid A = 0, \bx) \\
&= \E[Y \mid A = 1, U_I = 0, \bx]\P(U_I = 0 \mid A = 0, \bx) \\
&= \E[Y \mid A = 1, C = 0, \bx]\P(U_I = 0 \mid A = 0, \bx) \\
&= \mu_1(\bx)(1 - \pi_0^\star(\bx))
\end{align*}

\noindent where the first line follows by the law of iterated expectations (and the fact that $\E[Y(1, 0) \mid U_I(1, 0) = 1, \bx] = 0$), the second by the dismissible component conditions, the third by definition, the fourth by ignorability, the fifth by consistency, the sixth by non-informative missingness by $U_{NI}$, and the seventh follows by definition. Identification of $\E[Y(0, 0) \mid \bx] = \mu_0(\bx)(1 - \pi_0^\star(\bx))$ follows analogous steps. Subtracting these expressions and integrating over the covariate distribution gives the result.
\end{proof}

\begin{proof}[Proof of Proposition 9: bounds on $\Psi_1$]
For simplicity, allow that $\mu_a$ and $\mu_a^\star$ denote the regression functions defined using $Z$ as the outcome instead of $Y$. Then,

\begin{align*}
\E[Z(a) \mid \bx] &= \mu_a(\bx) + \pi_a^\star(\bx)[\mu_a^\star(\bx)-\mu_a(\bx)] \\
&= \mu_a(\bx) + \pi_a^\star(\bx)[1-\mu_a(\bx)] \\
& \le \mu_a(\bx) + \delta_{au}\pi_a(\bx)[1-\mu_a(\bx)],
\end{align*}

\noindent where the second equality follows because $\mu_a^\star(\bx) = 1$. Similarly,

\begin{align*}
\E[Z(a) \mid \bx] &= \mu_a(\bx) + \pi_a^\star(\bx)[1-\mu_a(\bx)] \\
& \ge \mu_a(\bx) + \delta_{a\ell}\pi_a(\bx)[1-\mu_a(\bx)],
\end{align*}

\noindent The final results follow again by subtracting the relevant on $\E[Z(a) \mid \bx]$ and integrating over the covariate distribution.
\end{proof}

\begin{remark}
Under assumptions 1-5 and \ref{asmpt:known-missingness} or \ref{asmpt:bounded-missingness}, the bounds $\ell_1$ and $u_1$ are sharp: $\ell_1$ is attained when $\pi^\star_1(\bx)/\pi_1(\bx) = \delta_{1\ell}$ and $\pi^\star_0(\bx)/\pi_0(\bx) = \delta_{0u}$ for all $\bx$, and $u_1$ is attained when $\pi^\star_1(\bx)/\pi_1(\bx) = \delta_{1u}$ and $\pi^\star_0(\bx)/\pi_0(\bx) = \delta_{0\ell}$ for all $\bx$.
\end{remark}

\begin{proof}[Proof of Proposition 9: bounds on $\Psi_2$]

\begin{align*}
\E[Y(1,0) - Y(0, 0) \mid \bx] &= [1-\pi_0^\star(\bx)][\mu_1(\bx) - \mu_0(\bx)] \\
&= \tilde{\Psi}(\bx) - \pi_0^\star(\bx)[\mu_1(\bx) - \mu_0(\bx)] \\
&\ge \tilde{\Psi}(\bx) - \delta_{0u}\pi_0(\bx)[\mu_1(\bx) - \mu_0(\bx)]\I[\mu_1(\bx) > \mu_0(\bx)],
\end{align*}

\noindent where the first equality follows by Proposition 8, the second by rearranging terms, and the third by \ref{asmpt:known-missingness} or \ref{asmpt:bounded-missingness}. Similarly,

\begin{align*}
&\tilde{\Psi}(\bx) - \pi_0^\star(\bx)[\mu_1(\bx) - \mu_0(\bx)] \\
&\le \tilde{\Psi}(\bx) - \delta_{0u}\pi_0(\bx)[\mu_1(\bx) - \mu_0(\bx)]\I[\mu_1(\bx) \le \mu_0(\bx)].
\end{align*}
\end{proof}

\begin{remark}
Under assumptions 1-5 and \ref{asmpt:known-missingness} or \ref{asmpt:bounded-missingness}, the bounds $\ell_2$ and $u_2$ are sharp: $\ell_2$ is attained with equality when $\pi_0^\star(\bx)/\pi_0(\bx) = \delta_{0u}$ and $\mu_1(\bx) > \mu_0(\bx)$ for all $\bx$, and $u_2$ is attained with equality when $\pi_0^\star(\bx)/\pi_0(\bx) = \delta_{0u}$ and $\mu_1(\bx) \le \mu_0(\bx)$ for all $\bx$.
\end{remark}

\begin{proof}[Proof of Proposition 10: point identification of $\Psi_1, \Psi_2$]
These results immediately follows from applying \ref{asmpt:known-missingness} to the expressions derived in Propositions 7 and 8.
\end{proof}

\begin{proof}[Proof of Proposition \ref{prop:3a}]
The proof of Proposition \ref{prop:3a} follows directly from the proof of Proposition \ref{prop:3}, noting that $\frac{\mu_a^\star(\bx)}{\mu_a(\bx)} \le \min\{1/\mu_a(\bx), \tau_a\}$ implies that $\mu_a^\star(\bx) - \mu_a(\bx) \le \min\{1 - \mu_a(\bx), \mu_a(\bx)(\tau_a -1)\}$, and substituting this into the expression for the upper bounds on $\E[Y(a) \mid \bx]$ accordingly.  
\end{proof}

\subsection{Estimation}

\begin{proof}[Proof of Proposition 6]
We treat the covariates as discrete to simplify the derivations, defining $\pr(\bx) = \P(\bX = \bx)$, $m_{1a} = \sum_{\bx} \mu_a(\bx)\pr(\bx)$, $m_{2a} = \sum_{\bx} \pi_a(\bx)\pr(\bx)$. To generalize the result to cover all estimators given in Corollary \ref{corr:1}, we also define $m_{3aa'} = \sum_{\bx} \mu_a(\bx)\pi_{a'}(\bx)\pr(\bx)$. It is well-known that $\text{EIF}[\mathbb{E}[Y \mid \bX = \bx]] = \frac{\I(\bX = \bx)}{\pr(\bx)}\{Y - \mathbb{E}[Y \mid \bX = \bx]\}$. Applying this fact and the chain-rule, we obtain,

\begin{align*}
\text{EIF}[m_{1a}] &= \sum_{\bx}\text{EIF}[\mu_a(\bx)]\pr(\bx) + \mu_a(\bx)\text{EIF}[\pr(\bx)] \\
&= \frac{\I(C = 0, A = a)}{\P(C = 0, A = a \mid \bX)}[Y - \mu_a(\bX)] + \mu_a(\bX) - m_{1a}.
\end{align*}

\noindent Similarly, it is well-known that,

\begin{align*}
\text{EIF}[m_{2a}] &= \sum_{\bx}\text{EIF}[\pi_a(\bx)]\pr(\bx) + \pi_a(\bx)\text{EIF}[\pr(\bx)] \\
&= \frac{\I(A = a)}{\P(A = a \mid \bX)}[C - \pi_a(\bX)] + \pi_a(\bX) - m_{2a}.
\end{align*}

\noindent Finally,

\begin{align*}
\text{EIF}[m_{3aa'}] &= \sum_{\bx} \text{EIF}[\mu_a(\bx)]\pi_{a'}(\bx)\pr(\bx)) + \mu_a(\bx)\text{EIF}[\pi_{a'}(\bx)]\pr(\bx)) + \mu_a(\bx)\pi_{a'}(\bx)\text{EIF}[\pr(\bx))] \\
&= \frac{\I(C = 0, A = a)}{\P(C = 0, A = a \mid \bX)}[Y - \mu_a(\bX)]\pi_{a'}(\bX) \\&+ \frac{\I(A = a')}{\P(A = a' \mid \bX)}[C - \pi_{a'}(\bX)]\mu_a(\bX) + \mu_a(\bX)\pi_{a'}(\bX) - m_{3aa'}.
\end{align*}
\end{proof}

\begin{proof}[Proof of Proposition 11]
The result follows from applying the chain-rule and noting that, for a discrete $\bX$, $\eif[\E\{\Phi_\varepsilon(\mu_1(\bX)-\mu_0(\bX))\mid \bx\}] = \phi_{\varepsilon}(\mu_1(\bx)-\mu_0(\bx))\{\varphi_{1,1}(\bO)-\varphi_{1,0}(\bO) - \mu_1(\bx) + \mu_0(\bx)\}/\pr(\bx)$. 
\end{proof}

\begin{remark}
Corollary \ref{corr:1} follows immediately from the proof of Propositions 6 and 11. Because the influence function has mean-zero, so that the expressions above can be used to establish a moment condition establishes the estimator as the solution that sets the expression to zero \cite{kennedy2022semiparametric}. For example, applying this to $b_{3aa'}$, we see that

\begin{align*}
&\mathbb{E}[\eif[m_{3aa'}]] = 0 \\
&\implies \hat{m}_{3aa'} = \mathbb{P}_n\left[\frac{\I(C = 0, A = a)}{\hat{\P}(C = 0, A = a \mid \bX)}[Y - \hat{\mu}_a(\bx)]\hat{\pi}_{a'}(\bX) \right.\\ 
&+\left. \frac{\I(A = a')}{\hat{\P}(A = a' \mid \bX)}[C - \hat{\pi}_{a'}(\bX)]\hat{\mu}_a(\bX) + \hat{\mu}_a(\bX)\hat{\pi}_{a'}(\bX)\right]
\end{align*}
\end{remark}

\begin{proof}[Proof of Corollary \ref{corr:2}]
For the estimator $\P_n\hat{\varphi}_b$ of $b$, we can decompose the error into components,

\begin{align*}
\P_n\hat{\varphi}_b - \PP\varphi_b = \underbrace{(\P_n - \PP)(\hat{\varphi}_b - \varphi_b)}_{T_1} + \underbrace{(\P_n - \PP)\varphi_b}_{T_2} + \underbrace{\PP(\hat{\varphi}_b - \varphi_b)}_{T_3}
\end{align*}

\noindent (noting that $\PP\varphi_b = b$). The first term $T_1$ can be shown to be $o_p(n^{-1/2})$ as long as $\|\hat{\varphi}_b - \varphi_b\|^2 = o_p(1)$, using the law of iterated expectations, the fact that $\hat{\varphi}_b$ is fixed conditional on $D_0^n$, and Chebyshev's inequality \cite{kennedy2022semiparametric}. The second term is the centered empirical average of a fixed function; therefore, by the Central Limit Theorem, this term has distribution $\mathcal{N}(0, \text{Var}(\varphi_b) / n)$. It remains to analyze $T_3$: if we can show conditions where this term is $o_p(n^{-1/2})$, then the asymptotic distribution is characterized by $T_2$ alone. 

Let $\gamma_{0a}(\bx) = (1-\pi_a(\bx))e_a(\bx)$, and define $\hat{\gamma}_{0a}(\bx)$ analogously. To ease notation, we omit the dependence of the functions on $\bx$ (e.g. $\gamma_{0a} = \gamma_{0a}(\bx)$). Since any estimator $\hat{b}$ considered in Corollary \ref{corr:1} can be expressed as a finite linear combination of $[\hat{m}_{1a}, \hat{m}_{2a}, \hat{m}_{3aa'}]$ (for $a, a' = 0, 1$), the expressions for $R_2(\hat{b})$ follow from take the relevant combination of remainder terms, which we derive below.

First, by the law of iterated expectation, we see that,

\begin{align*}
\PP(\hat{\varphi}_{1a} - m_{1a}) &= \PP\left(\frac{\gamma_{0a}}{\hat{\gamma}_{0a}}(\mu_a - \hat{\mu}_a) + \hat{\mu}_a - \mu_a\right) \\
&= \PP\left(\frac{\gamma_{0a}}{\hat{\gamma}_{0a}}(\mu_a - \hat{\mu}_a) - \frac{\hat{\gamma}_{0a}}{\hat{\gamma}_{0a}}(\mu_a - \hat{\mu}_a)\right) \\
&= \PP\left(\frac{(\gamma_{0a} - \hat{\gamma}_{0a})(\mu_a - \hat{\mu}_a)}{\hat{\gamma}_{0a}}\right),
\end{align*}

\begin{align*}
\PP(\hat{\varphi}_{2a} - m_{2a}) &= \PP\left(\frac{e_{a}}{\hat{e}_{a}}(\pi_a - \hat{\pi}_a) + \hat{\pi}_a - \pi_a\right) \\
&= \PP\left(\frac{e_{a}}{\hat{e}_{a}}(\pi_a - \hat{\pi}_a) - \frac{\hat{e}_{a}}{\hat{e}_{a}}(\pi_a - \hat{\pi}_a)\right) \\
&= \PP\left(\frac{(e_{a} - \hat{e}_{a})(\pi_a - \hat{\pi}_a)}{\hat{e}_{a}}\right),
\end{align*}

\begin{align*}
\PP(\hat{\varphi}_{3aa'} - m_{3aa'}) &= \PP\left(\frac{\gamma_{0a}}{\hat{\gamma}_{0a}}(\mu_a - \hat{\mu}_a)\hat{\pi}_{a'} + \frac{e_{a'}}{\hat{e}_{a'}}(\pi_{a'} - \hat{\pi}_{a'})\hat{\mu}_a + \hat{\mu}_a\hat{\pi}_{a'} - \mu_a\pi_{a'}\right) \\
&= \PP\left(\frac{\gamma_{0a}}{\hat{\gamma}_{0a}}(\mu_a - \hat{\mu}_a)\hat{\pi}_{a'} - \underbrace{\frac{e_{a'}}{\hat{e}_{a'}}(\pi_{a'} - \hat{\pi}_{a'})(\mu_a - \hat{\mu}_a)}_{SO_1} \frac{e_{a'}}{\hat{e}_{a'}}\mu_a(\pi_{a'} - \hat{\pi}_{a'}) + \hat{\mu}_a\hat{\pi}_{a'} - \mu_a\pi_{a'}\right) \\
&= \PP\left(SO_1 - \underbrace{\frac{\gamma_{0a}}{\hat{\gamma}_{0a}}(\mu_a - \hat{\mu}_a)(\pi_{a'} - \hat{\pi}_{a'})}_{SO_2} \right.\\ &\left.+\frac{\gamma_{0a}}{\hat{\gamma}_{0a}}(\mu_a - \hat{\mu}_a)\pi_{a'} + \frac{e_{a'}}{\hat{e}_{a'}}\mu_a(\pi_{a'} - \hat{\pi}_{a'}) + \hat{\mu}_a\hat{\pi}_{a'} - \mu_a\pi_{a'}\right) \\
&= P\left(SO_1 + SO_2 + \underbrace{\frac{\mu_a(\pi_{a'} - \hat{\pi}_{a'})(e_{a} - \hat{e}_{a})}{\hat{e}_{a}}}_{SO_3} + \frac{\gamma_{0a}}{\hat{\gamma}_{0a}}\pi_{a'}(\mu_a - \hat{\mu}_a)  - \hat{\pi}_{a'}(\mu_a - \hat{\mu}_a)\right) \\
&= \PP\left(SO_1 + SO_2 + SO_3 + \underbrace{\pi_{a'}\frac{(\gamma_{0a} - \hat{\gamma}_{0a})(\mu_a - \hat{\mu}_a)}{\hat{\gamma}_{0a}}}_{SO_4} + \underbrace{(\mu_a - \hat{\mu}_a)(\pi_{a'} - \hat{\pi}_{a'}))}_{SO_5}\right).
\end{align*}

\noindent Additionally, notice that,
\begin{align*}
|\gamma_{0a} - \hat{\gamma}_{0a}| &= |(1 - \pi_a)e_a - (1 - \hat{\pi}_a)\hat{e}_a| \\
&= |e_a[(1 - \pi_a) - (1 - \hat{\pi}_a)] - (1 - \hat{\pi}_a)(\hat{e}_a - e_a)| \\
&\le |e_a - \hat{e}_a| + |\pi_a - \hat{\pi}_a| \\
&\implies (\gamma_{0a} - \hat{\gamma}_{0a})^2 \le (|e_a - \hat{e}_a| + |\pi_a - \hat{\pi}_a|)^2 \le 2[(e_a - \hat{e}_a)^2 + (\pi_a - \hat{\pi}_a)^2].
\end{align*}

\noindent Applying this result, Cauchy-Schwartz, and using boundedness of the propensity score, the missingness probabilities, and their estimates, we obtain:

\begin{align*}
|\PP(\hat{\varphi}_{1a} - \varphi_{1a})| &\lesssim \|\mu_a - \hat{\mu}_a\|(\|e_a - \hat{e}_a\| + \|\pi_a - \hat{\pi}_a\|) \\
|\PP(\hat{\varphi}_{2a} - \varphi_{2a})| &\lesssim \|\pi_a - \hat{\pi}_a\|\|e_{a} - \hat{e}_{a}\| \\
|\PP(\hat{\varphi}_{3aa'} - \varphi_{3aa'})| &\lesssim \|\mu_a - \hat{\mu}_a\|(\|\pi_a - \hat{\pi}_a\| + \|e_a - \hat{e}_a\| + \|\pi_{a'} - \hat \pi_{a'}\|) + \|e_{a} - \hat{e}_{a}\|\|\pi_{a'} - \hat{\pi}_{a'}\|
\end{align*}

\noindent noting that when $a = a'$, $\varphi_{3aa}=\varphi_{3a}$, and so,

\begin{align*}
|\PP(\hat{\varphi}_{3a} - \varphi_{3a})| &\lesssim \|\mu_a - \hat{\mu}_a\|(\|\pi_a - \hat{\pi}_a\| + \|e_a - \hat{e}_a\|) + \|e_{a} - \hat{e}_{a}\|\|\pi_{a} - \hat{\pi}_{a}\|
\end{align*}
\end{proof}

\begin{proof}[Proof of Theorem 2]
In an abuse of notation, we will omit the dependence of the quantities on $\varepsilon$ (so that $b_2$ is short-hand for $b_{2, \varepsilon}$). As with the proof of Theorem 1, we consider the decomposition,

\begin{align*}
\P_n\hat{\varphi}_{b_2} - \PP\varphi_{b_2} = \underbrace{(\P_n - \PP)(\hat{\varphi}_{b_2} - \varphi_{b_2})}_{T_1} + \underbrace{(\P_n - \PP)\varphi_{b_2}}_{T_2} + \underbrace{\PP(\hat{\varphi}_{b_2} - \varphi_{b_2})}_{T_3}.
\end{align*}

\noindent As explained previously, $T_1 = o_p(n^{-1/2})$ under the conditions we've assumed, can apply the CLT to $T_2$, and need to analyze $T_3$ and show the conditions when this is $o_p(n^{-1/2})$: this would imply the asymptotic distribution is characterized by $T_2$ alone. To do this we first consider $\ell_2$ and its estimator $\P_n\varphi_{\ell_2}(\bO; \hat{\eta})$. We also decompose $\varphi_{\ell_2}(\bO; \eta) = \varphi_{1,1}-\varphi_{1,0}-(\varphi_{\ell_2, 1} - \varphi_{\ell_2, 0})$, where, for example,

\begin{align*}
\varphi_{\ell_2, a}(\bO; \eta) &= \left(\frac{\I(C = 0, A = a)}{\P(C = 0, A = a \mid \bX)}\right)\{Y - \mu_a(\bX)\}\pi_0(\bX)\Phi_{\varepsilon}(\mu_1(\bX) - \mu_0(\bX)) \\
&+ \frac{\I(A = 0)}{\P(A = 0 \mid \bX)}\{C - \pi_0(\bX)\}\mu_a(\bX)\Phi_{\varepsilon}(\mu_1(\bX) - \mu_0(\bX)) \\
&+ \mu_a(\bX)\pi_0(\bX)\Phi_{\varepsilon}(\mu_1(\bX) - \mu_0(\bX))[\varphi_{1, 1}(\bO) - \varphi_{1, 0}(\bO) - \mu_1(\bX) + \mu_0(\bX)] \\
&+ \mu_a(\bX)\pi_0(\bX)\Phi_{\varepsilon}(\mu_1(\bX) - \mu_0(\bX))].
\end{align*}

\noindent Similarly, we let $\ell_2 = \tilde\Psi - (\ell_{2, 1} - \ell_{2, 0})$. To ease notation we let $\Phi = \Phi_{\varepsilon}(\mu_1(\bx) - \mu_0(\bx))$, $\hat{\Phi} = \Phi_{\varepsilon}(\hat{\mu}_1(\bx) - \hat{\mu}_0(\bx))$, and similarly with $\phi$ and $\hat{\phi}$. We again omit the dependence of the functions on $\bx$.Finally, we let $\Delta \mu = \mu_1 - \mu_0$.

We have already shown that $\PP(\hat\varphi_{1,1}(\bO)-\hat\varphi_{1,0}(\bO)-\varphi_{1,1}(\bO)+\varphi_{1,0}(\bO))$ is second-order in the nuisance estimation. We therefore now consider $\PP(\varphi_{\ell_2, 0}(\bO; \hat{\eta}) - \varphi_{\ell_2, 0}(\bO;\eta))$. Using the law of iterated expectations, we se that this is equal to,

\begin{align*}
&= \PP\left[\underbrace{\frac{\gamma_{00}}{\hat{\gamma}_{00}}\{\mu_0 - \hat{\mu}_0\}\hat{\pi}_0\hat{\Phi}}_{i} + \underbrace{\frac{e_0}{\hat{e}_0}\{\pi_0 - \hat{\pi}_0\}\hat{\mu}_0\hat{\Phi}}_{ii}
+ \underbrace{\hat{\mu}_0\hat{\pi}_0\hat{\phi}[\frac{\gamma_{01}}{\hat{\gamma}_{01}}(\mu_1 - \hat{\mu}_1) - \frac{\gamma_{00}}{\hat{\gamma}_{00}}(\mu_0 - \hat{\mu}_0)]}_{iii} 
+ \underbrace{\hat{\mu}_0\hat{\pi}_0\hat{\Phi}}_{iv} - \underbrace{\mu_0\pi_0\Phi}_{v}\right] .
\end{align*}

\noindent Consider $(ii) + (iv)$:

\begin{align*}
&= \frac{e_0}{\hat{e}_0}[\pi_0 - \hat{\pi}_0]\hat{\mu}_0\hat{\Phi} + \frac{\hat{e}_0}{\hat{e}_0}\hat{\mu}_0\hat{\pi}_0\hat{\Phi} \\
&= \frac{e_0}{\hat{e}_0}[\pi_0 - \hat{\pi}_0]\hat{\mu}_0\hat{\Phi} - \frac{\hat{e}_0}{\hat{e}_0}\hat{\mu}_0[\pi_0 - \hat{\pi}_0]\hat{\Phi} + \pi_0\hat{\mu}_0\hat{\Phi} \\
&= \underbrace{\frac{[e_0 - \hat{e}_0][\pi_0 - \hat{\pi}_0]}{\hat{e}_0}\hat{\mu}_0\hat{\Phi}}_{SO_1} + \underbrace{\pi_0\hat{\mu}_0\hat{\Phi}}_{r_1}
\end{align*}

\noindent Next consider $r_1 + (i)$:

\begin{align*}
&= \frac{\gamma_{00}}{\hat{\gamma}_{00}}(\mu_0 - \hat{\mu}_0)\hat{\pi}_0\hat{\Phi} - \frac{\hat{\gamma}_{00}}{\hat{\gamma}_{00}}\pi_0(\mu_0 - \hat{\mu}_0)\hat{\Phi} + \underbrace{\pi_0\mu_0\hat{\Phi}}_{r_2} \\
&= \underbrace{-\frac{\gamma_{00}}{\hat{\gamma}_{00}}(\mu_0 - \hat{\mu}_0)(\pi_0 - \hat{\pi}_0)\hat{\Phi}}_{SO_2} + \underbrace{\frac{(\gamma_{00} - \hat{\gamma}_{00})}{\hat{\gamma}_{00}}\pi_0(\mu_0 - \hat{\mu}_0)\hat{\Phi}}_{SO_3} + r_2
\end{align*}

\noindent Next consider $r_2 + (v)$. Let $\Delta f = f_1 - f_0$ for a function $f$ admitting arguments $1$ and $0$. Then by the Mean Value Theorem, we obtain:

\begin{align*}
\hat{\Phi} - \Phi &= \Phi_{\varepsilon}(\Delta \mu) - \Phi_{\varepsilon}(\Delta \hat{\mu}) \\
&= \Phi_{\varepsilon}(\Delta \mu^\star)[\hat{\mu}_1 - \hat{\mu}_0 - \mu_1 + \mu_0]
\end{align*}

\noindent where $\Delta \mu^\star$ lies between $\Delta \mu$ and $\Delta \hat{\mu}$. We can plug this into the expression for $r_2 + (v)$:

\begin{align*}
&= \pi_0\mu_0[\hat{\Phi} - \Phi] \\
&= \underbrace{\pi_0\mu_0\phi(\Delta \mu^\star) [\hat{\mu}_1 - \hat{\mu}_0 - \mu_1 + \mu_0]}_{r_3} 
\end{align*}

\noindent Finally, consider $r_3 + (iii)$:

\begin{align*}
&= \pi_0\mu_0\phi^\star[\hat{\mu}_1 - \hat{\mu}_0 - \mu_1 + \mu_0] + \hat{\mu}_0\hat{\pi}_0\hat{\phi}\left[\frac{\gamma_{01}}{\hat{\gamma}_{01}}(\mu_1 - \hat{\mu}_1) - \frac{\gamma_{00}}{\hat{\gamma}_{00}}(\mu_0 - \hat{\mu}_0)\right] \\
&= \underbrace{[(\mu_0 - \hat{\mu}_0) - (\mu_1 - \hat{\mu}_1)](\pi_0\mu_0\phi^\star - \hat{\pi}_0\hat{\mu}_0\hat{\phi})}_{SO_4} + \hat{\mu}_0\hat{\pi}_0\hat{\phi}\left[\frac{\gamma_{01}}{\hat{\gamma}_{01}}(\mu_1 - \hat{\mu}_1) - \frac{\gamma_{00}}{\hat{\gamma}_{00}}(\mu_0 - \hat{\mu}_0)\right] \\ &- \hat{\mu}_0\hat{\pi}_0\hat{\phi}\left[(\mu_1 - \hat{\mu}_1) - (\mu_0 - \hat{\mu}_0)\right]\\
&= SO_4 + \underbrace{\hat{\mu}_0\hat{\pi}_0\hat{\phi}\left[\frac{(\gamma_{01} - \hat{\gamma}_{01})(\mu_1 - \hat{\mu}_1)}{\hat{\gamma}_{01}} - \frac{(\gamma_{00} - \hat{\gamma}_{00})(\mu_0 - \hat{\mu}_0)}{\hat{\gamma}_{00}}\right]}_{SO_5}
\end{align*}

\noindent By Cauchy-Schwarz, the (assumed) boundedness of $\gamma_{0a}$, the (assumed) consistency of the estimated influence function, and the boundedness of $\Phi_{\varepsilon}$ for a fixed $\varepsilon$, we obtain:

\begin{align*}
|\PP[SO_1 + SO_2 + SO_3 + SO_5]| &\lesssim \sum_{a=0,1}\|\mu_a - \hat{\mu}_a\|(\|\pi_a - \hat{\pi}_a\| + \|e_a - \hat{e}_a\|) + \|\pi_0 - \hat{\pi}_0\|\|e_0 - \hat{e}_0\|
\end{align*}

\noindent Furthermore, by the Mean Value Theorem, notice that,

\begin{align*}
\phi^\star - \phi &= \Phi_{\varepsilon}(\Delta \mu^\star) - \Phi_{\varepsilon}(\Delta \mu) \\
&= \phi'_{\varepsilon}(\Delta \tilde{\mu})[\mu_1^\star - \mu_0^\star - \mu_1 + \mu_0] \\
\end{align*}

\noindent for some value of $\Delta \tilde{\mu}$ between $\Delta \mu$ and $\Delta \mu^\star$. Moreover, notice that $|\Delta\mu -\Delta\mu^\star| \le |\Delta \mu - \Delta \hat{\mu}|$, and recall that $[a - \hat{a}][bc - \hat{b}\hat{c}] = [a - \hat{a}](c[b - \hat{b}] + \hat{b}[c - \hat{c}])$ and that $\phi'_{\varepsilon}$ is bounded for a fixed $\epsilon$. Therefore,

\begin{align*}
|\PP[SO_4]| \lesssim \sum_{a=0,1}\|\mu_a - \hat{\mu}_a\|^2 + \|\mu_0 - \hat{\mu}_0\|(\|\mu_1 - \hat{\mu}_1\| + \|\pi_0 - \hat{\pi}_0\|) + \|\mu_1 - \hat{\mu}_1\|\|\pi_0 - \hat{\pi}_0\|
\end{align*}

Putting everything together we obtain,

\begin{align*}
|\PP(\hat{\varphi}_{\ell_2, 0} - \varphi_{\ell_2,0})| &\lesssim \sum_{a=0,1}\|\mu_a - \hat{\mu}_a\|(\|\pi_a - \hat{\pi}_a\| + \|e_a - \hat{e}_a\| + \|\mu_a - \hat{\mu}_a\|) \\
&+ \|\pi_0 - \hat{\pi}_0\|(\|e_0 - \hat{e}_0\| + \|\mu_1 - \hat{\mu}_1\|) + \|\mu_0 - \hat{\mu}_0\|\|\mu_1 - \hat{\mu}_1\|.
\end{align*}

\noindent Similar derivations give that an analogous expression for  $\PP(\hat{\varphi}_{\ell_2, 1} - \varphi_{\ell_2, 1})$. Finally, notice that $\hat{\varphi}_{u_2, 1}$ and $\hat{\varphi}_{u_2,0}$ have equivalent decompositions up to sign changes, so that these same bounds hold for these quantities, respectively, completing the proof.
\end{proof}

\begin{proof}[Proof of Theorem 3]
We largely follow the same steps in the proof above. However, here we consider $\PP(\hat{\varphi}_{\ell_{2,0}} - \varphi_{\ell_{2,0}})$ (with these terms defined analogously), which is easy to see and is equivalent to,

\begin{align*}
&= \PP\left[\underbrace{\frac{\gamma_{00}}{\hat{\gamma}_{00}}\{\mu_0 - \hat{\mu}_0\}\hat{\pi}_0\hat{\I}}_{i} + \underbrace{\frac{e_0}{\hat{e}_0}\{\pi_0 - \hat{\pi}_0\}\hat{\mu}_0\hat{\I}}_{ii} 
+ \underbrace{\hat{\mu}_0\hat{\pi}_0\hat{\I}}_{iv} - \underbrace{\mu_0\pi_0\I}_{v}\right], 
\end{align*}

\noindent where $\hat\I = \hat\I(\hat{\mu}_1(\bX) > \hat\mu_0(\bX))$. We can follow the same steps in the proof above to obtain that $(i)+(ii)+(iv)$ takes the form:

\begin{align*}
&= \underbrace{\frac{[e_0 - \hat{e}_0][\pi_0 - \hat{\pi}_0]}{\hat{e}_0}\hat{\mu}_0\hat{\I}}_{SO_{1,0}}  \\
&\underbrace{-\frac{\gamma_{00}}{\hat{\gamma}_{00}}(\mu_0 - \hat{\mu}_0)(\pi_0 - \hat{\pi}_0)\hat{\I}}_{SO_{2,0}} + \underbrace{\frac{1}{\hat{\gamma}_{00}}\pi_0(\mu_0 - \hat{\mu}_0)(\gamma_{00} - \hat{\gamma}_{00})\hat{\I}}_{SO_{3,0}} + \pi_0\mu_0(\hat{\I}-\I).
\end{align*}

\noindent We can follow the same steps to decompose $\ell_{2,1}$ to obtain,

\begin{align*}
\PP(\hat\varphi_{\ell_2,1}-\varphi_{\ell_2,1}) &= \PP\left[\underbrace{\frac{[e_0 - \hat{e}_0][\pi_0 - \hat{\pi}_0]}{\hat{e}_0}\hat{\mu}_1\hat{\I}}_{SO_{1,1}}  \right.\\
&\left.\underbrace{-\frac{\gamma_{01}}{\hat{\gamma}_{01}}(\mu_1 - \hat{\mu}_1)(\pi_0 - \hat{\pi}_0)\hat{\I}}_{SO_{2,1}} + \underbrace{\frac{1}{\hat{\gamma}_{01}}\pi_0(\mu_1 - \hat{\mu}_1)(\gamma_{01} - \hat{\gamma}_{01})\hat{\I}}_{SO_{3,1}} + \pi_0\mu_1(\hat{\I}-\I)\right].
\end{align*}

\noindent Putting both expressions together we are left with the final term,

\begin{align*}
\PP(\pi_0(\mu_1-\mu_0)(\hat{\I}-\I)).
\end{align*}

\noindent By Lemma 2 from \cite{bonvini2022sensitivity}, we know that $|\hat\I-\I| = \I\bigl(|\Delta\mu|<|\Delta\hat\mu-\Delta\mu|\bigr)$ for $\Delta\mu = \mu_1-\mu_0$. Therefore,

\begin{align*}
&|\PP(\pi_0(\mu_1-\mu_0)(\hat{\I}-\I))| \\
&\le |\PP(\pi_0|\mu_1-\mu_0|\I(|\Delta\mu|<|\Delta\hat\mu-\Delta\mu|))| \\
&\le |\PP(|\mu_1-\mu_0|\I(|\Delta\mu|<|\Delta\hat\mu-\Delta\mu|))| \\
&\lesssim \bigl(\|\hat\mu_1-\mu_1\|_\infty+\|\hat\mu_0-\mu_0\|_\infty\bigr)^{1+\alpha},
\end{align*}

\noindent where the final inequality we used the margin condition, the triangle inequality, and taking $t = \|\hat\mu_1-\mu_1\|_\infty+\|\hat\mu_0-\mu_0\|_\infty$. The exact same logic can be used to bound $\PP(\hat \varphi_{u_2}-\varphi_{u_2})$, thus completing the proof.
\end{proof}

\begin{proof}[Proof of Theorem \ref{thm:5}]
    We follow the same decomposition as in the proof of Theorem \ref{thm:3}. For either $b\in\{\tilde \ell_0^\star,\tilde u_0^\star\}$,
\begin{align*}
\mathbb{P}_n\hat\varphi_b-\PP\varphi_b = \underbrace{(\mathbb{P}_n-\PP)(\hat\varphi_b-\varphi_b)}_{T_1}
+ \underbrace{(\mathbb{P}_n-\PP)\varphi_b}_{T_2} + \underbrace{\PP(\hat\varphi_b-\varphi_b)}_{T_3}.
\end{align*}
As before, $T_1=o_p(n^{-1/2})$ under the assumed $L_2(\P)$ consistency of the estimated influence
functions and sample splitting (or cross-fitting), and $T_2$ is asymptotically normal by the Central Limit Theorem. It remains to bound $T_3$. We begin with the upper bound. Let
\[
\I_1(\bX)=\I\{\tau_1\mu_1(\bX)>1\},\qquad
\hat \I_1(\bX)=\I\{\tau_1\hat\mu_1(\bX)>1\}.
\]
Then
\begin{align*}
\tilde u_0 &=
\PP\Big[
\varphi_{1,1}-\varphi_{1,0}
+\delta_{1u}\big\{(\varphi_{2,1}-\varphi_{3,1})\I_1+(\tau_1-1)\varphi_{3,1}(1-\I_1)\big\}
-\delta_{0u}(\tau_0^{-1}-1)\varphi_{3,0}
\Big].
\end{align*}
Replacing nuisance functions by their estimates gives the analogous expression with $\hat \I_1$.
Subtracting the two expressions, the smooth terms are exactly linear combinations of the same
remainder terms already analyzed in Corollary \ref{corr:2}, yielding the bound on these quantities,
\[
\sum_{a=0,1}\|\mu_a-\hat\mu_a\|
\bigl(\|\pi_a-\hat\pi_a\|+\|e_a-\hat e_a\|\bigr)
+\sum_{a=0,1}\|e_a-\hat e_a\|\|\pi_a-\hat\pi_a\|.
\]
It remains to analyze the non-smooth term coming from replacing $\I_1$ by $\hat \I_1$. At the
population level this term is
\begin{align*}
\PP\Big(
\delta_{1u}\big[(\pi_1-\pi_1\mu_1)-(\tau_1-1)\pi_1\mu_1\big](\hat \I_1-\I_1)
\Big)
=
\PP\Big(
\delta_{1u}\pi_1(1-\tau_1\mu_1)(\hat \I_1-\I_1)
\Big).
\end{align*}
By the same argument as in the proof of Theorem \ref{thm:3},
\[
\hat \I_1-\I_1 = \I\Big(\big|\tau_1\mu_1(\bX)-1\big| < \tau_1\big|\hat\mu_1(\bX)-\mu_1(\bX)\big|\Big),
\]
so that
\begin{align*}
&\left|
\PP\Big(
\delta_{1u}\pi_1(1-\tau_1\mu_1)(\hat \I_1-\I_1)
\Big)
\right| \\
&\lesssim
\|\hat\mu_1-\mu_1\|_\infty
\,\P\!\left(
\big|\tau_1\mu_1(\bX)-1\big|
<
\tau_1\|\hat\mu_1-\mu_1\|_\infty
\right) \\
&\lesssim
\bigl(\|\hat\mu_1-\mu_1\|_\infty\bigr)^{1+\alpha_1},
\end{align*}
where the final inequality uses the stated margin condition and taking $t = \|\hat\mu_1-\mu_1\|_\infty$. The proof for the lower bound is identical. Combining the smooth and non-smooth pieces yields the stated bound on $R_2(\hat{\tilde b}_0)$, and the asymptotic normality result follows.
\end{proof}

\subsection{Uniform inference}
\setcounter{theorem}{3}

\begin{proof}[Proof sketch of Theorem \ref{thm:4}]
    This proof closely follows \cite{mauro2020instrumental}, which in turn follows \cite{kennedy2019nonparametric}, noting that the latter contains many of the technical details. Define $\|\cdot\|_\Delta := \sup_{\Theta^\star\in\Delta} |\cdot|.$. First, by variance consistency and Slutsky's Theorem, we have that

    \begin{align*}
        \left\|\frac{\sqrt{n}(\hat b(\Theta^\star) - b(\Theta^\star))}{\hat \sigma(\Theta^\star)} - \frac{\sqrt{n}(\P_n - \PP)(\varphi(\Theta^\star))}{\sigma(\Theta^\star)}\right\|_{\Delta} = \left\|\frac{\sqrt{n}(\hat b(\Theta^\star) - b(\Theta^\star))}{\sigma(\Theta^\star)} - \frac{\sqrt{n}(\P_n - \PP)(\varphi(\Theta^\star))}{\sigma(\Theta^\star)}\right\|_{\Delta} + o_p(1).
    \end{align*}
    Then,
        \begin{align*}
        \frac{\sqrt{n}(\hat b(\Theta^\star) - b(\Theta^\star))}{\sigma(\Theta^\star)} - \frac{\sqrt{n}(\P_n - \PP)(\varphi(\Theta^\star))}{\sigma(\Theta^\star)} = \frac{\sqrt{n}}{\sigma(\Theta^\star)}\left\{\underbrace{(\P_n - \PP)(\hat\varphi(\Theta^\star) - \varphi(\Theta^\star))}_{B_{1,\Theta^\star}} + \underbrace{\PP(\hat\varphi(\Theta^\star)-\varphi(\Theta^\star))}_{B_{2,\Theta^\star}}\right\}
    \end{align*}
    Standard empirical process arguments as in \cite{kennedy2019nonparametric} imply that $\sup_{\Theta^\star\in\Delta}|\frac{\sqrt{n}}{\sigma(\Theta^\star)}B_{1,\Theta^\star}|= o_p(1)$. Moreover, we have by assumption that $\sup_{\Theta^\star\in\Delta}|\frac{\sqrt{n}}{\sigma(\Theta^\star)}B_{2,\Theta^\star}| = o_p(1)$. This completes the proof.  
\end{proof}

\newpage

\section{Joint sensitivity analyses}\label{app:adtl}

In this section we sketch an approach to conduct a joint sensitivity analysis with respect to both the MAR and no unmeasured confounding assumptions. We consider $\Psi_0, \Psi_1$, and $\Psi_2$ separately in each section below for the case of a binary outcome $Y$. Throughout we let the set of nuisance functions be $\eta = \{\mu_1(\bX), \mu_0(\bX), \pi_1(\bX), \pi_0(\bX), e(\bX)\}$, defined in the main text as $\mu_a(\bx) = \E[Y \mid A = a, C = 0, \bX = \bx]$, $\pi_a(\bx) = \P(C = 1 \mid A = a, \bX = \bx)$, and $e(\bx) = \P(A = 1 | \bX = \bx)$. Let $\omega = \P(A = 1).$ At times it will be convenient to use the alternative notation $e_a(\bx) = \P(A = a | \bX = \bx)$. Finally, let $\lambda_a(\bx) = \E\{Y(a) | \bX = \bx, A = 1-a\}$ and define $\mubar_a(\bx) = \E[Y \mid A = a, \bX = \bx] = \mu_a(\bx)\{1 - \pi^*_a(\bx)\} + \mu^*_a(\bx)\pi^*_a(\bx) = \mu_a(\bx) + \pi^*_a(\bx)\{\mu_a^\star(\bx) - \mu_a(\bx)\}$ (a result we previously showed in the proof of Proposition 1).

\subsection{General bounds on $\Psi_0$}

We begin by considering $\Psi_0$. We first present general bounds on $\Psi_0$.

\begin{proposition}
Under Assumptions 1 (non-informative missingness by $U_{NI}$), 3 (consistency), and 4 (positivity), we have 
\begin{align}
    \ellbar_0 \leq \Psi_0 \leq \ubar_0
    \label{eqn:identity2}
\end{align} for
\begin{align*}
\ellbar_0 &= - \omega + \E\left(\mu_1(\bX)\{1 - \pi_1(\bX)\}e(\bX) - [\mu_0(\bX) + \pi_0(\bX)\{1 - \mu_0(\bX)\}]\{1 - e(\bX)\}\right)\\
\ubar_0 &= \omega + \E\left(\left[\mu_1(\bX) + \pi_1(\bX)\{1 - \mu_1(\bX)\}\right]e(\bX) - \mu_0(\bX)\{1 - \pi_0(\bX)\}\{1 - e(\bX)\}\right)
\end{align*}
\end{proposition}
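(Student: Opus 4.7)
The plan is to decompose each potential outcome mean by conditioning on the actual treatment received, bound the unobservable counterfactual conditional mean by the trivial range $[0,1]$, and then bound the remaining observable arm-specific mean $\mubar_a(\bX)$ using only the mixture structure of $C$. Because no unmeasured confounding (Assumption 2) is dropped here, $\mu^{(a)}_{1-a}(\bX) = \E\{Y(a) \mid \bX, A = 1-a\}$ can no longer be equated with $\mubar_a(\bX)$ and must be handled by a separate, crude bound. The two sources of uncertainty (confounding and informative censoring) therefore both need to be squeezed, but they enter at different layers of the iterated expectation.

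First, I would apply iterated expectations and consistency (Assumption 3) to write, for each $a \in \{0,1\}$,
$$\E\{Y(a)\} = \E\bigl[\mubar_a(\bX)\,e_a(\bX) + \mu^{(a)}_{1-a}(\bX)\,e_{1-a}(\bX)\bigr].$$
Since $Y$ is binary, $0 \le \mu^{(a)}_{1-a}(\bX) \le 1$, which yields
$$\mubar_a(\bX)\,e_a(\bX) \;\le\; \E\{Y(a) \mid \bX\} \;\le\; \mubar_a(\bX)\,e_a(\bX) + e_{1-a}(\bX).$$
Next, to replace the still-unobservable $\mubar_a(\bX)$ with estimable functionals, I would use the identity $\mubar_a(\bX) = \mu_a(\bX) + \pi_a^*(\bX)\{\mu_a^*(\bX) - \mu_a(\bX)\}$ together with the structural constraints $0 \le \pi_a^*(\bX) \le \pi_a(\bX)$ (inherited from $C = 1-(1-U_{NI})(1-U_I)$ under Assumption 1) and $0 \le \mu_a^*(\bX) \le 1$. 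Extremizing pointwise over the admissible $(\pi_a^*, \mu_a^*)$ gives
$$\mu_a(\bX)\{1-\pi_a(\bX)\} \;\le\; \mubar_a(\bX) \;\le\; \mu_a(\bX) + \pi_a(\bX)\{1-\mu_a(\bX)\}.$$

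To assemble $\ellbar_0$ and $\ubar_0$ I would chain the worst-case directions: the lower envelope on $\Psi_0$ pairs the lower bound on $\E\{Y(1) \mid \bX\}$ with the upper bound on $\E\{Y(0) \mid \bX\}$, and symmetrically for the upper envelope. Taking an outer expectation over $\bX$ and recognizing $\omega = \E\{e(\bX)\}$ collects the $e_{1-a}(\bX)$ contribution into a constant depending on $\omega$, yielding the claimed expressions. Positivity (Assumption 4) enters only to guarantee the conditional expectations at the decomposition step are well-defined, and Assumption 1 enters only through the inequality $\pi_a^*(\bX) \le \pi_a(\bX)$.

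The step I expect to be most delicate is arguing that the worst-case choices along the two independent slackness axes---namely the counterfactual mean $\mu^{(a)}_{1-a}(\bX)$ and the pair $(\pi_a^*(\bX),\mu_a^*(\bX))$ inside $\mubar_a$---can be attained simultaneously without producing an incoherent joint distribution. Under Assumptions 1, 3, and 4 alone, neither of these is further constrained by the observed data law, so the two extremizations are functionally independent; an explicit construction placing all informative-censoring mass at $\mu_a^* \in \{0,1\}$ and setting the off-arm counterfactual mean to its matching extremum demonstrates that the resulting bounds are sharp.
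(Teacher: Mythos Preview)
Your proposal is correct and follows essentially the same route as the paper: decompose $\E\{Y(a)\mid\bX\}$ by conditioning on $A$, bound the off-arm counterfactual mean $\mu^{(a)}_{1-a}(\bX)$ by $[0,1]$, then bound $\mubar_a(\bX)$ via the mixture identity and the constraints $0\le\pi_a^*(\bX)\le\pi_a(\bX)$, $0\le\mu_a^*(\bX)\le1$, and finally combine opposite extremes for $a=1$ and $a=0$.

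One small correction on bookkeeping: the inequality $\pi_a^*(\bX)\le\pi_a(\bX)$ follows directly from the definition $C=1-(1-U_{NI})(1-U_I)$ and does not require Assumption~1. Where Assumption~1 actually enters is in the identity $\mubar_a(\bX)=\mu_a(\bX)+\pi_a^*(\bX)\{\mu_a^*(\bX)-\mu_a(\bX)\}$ itself, since one needs $\{Y,U_I\}\indep U_{NI}\mid\bX,A$ to equate $\E[Y\mid A=a,U_I=0,\bX]$ with $\mu_a(\bX)=\E[Y\mid A=a,C=0,\bX]$ (and similarly for $\mu_a^*$). This does not affect the validity of your bounds, only the attribution of which assumption does what. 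Your added discussion of sharpness is a nice bonus not present in the paper's proof.
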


\begin{proof}
First, notice that

\begin{align*}
\P\{Y(a) = 1\} &= \E\left[\mubar_{a}(\bX) + e_{1-a}(\bX)\{\lambda_{a}(\bX) - \mubar_{a}(\bX)\}\right] \\
&\le \E\left[\mubar_{a}(\bX) + e_{1-a}(\bX)\{1 - \mubar_{a}(\bX)\}\right] \\
&= \E\left\{\mubar_a(\bX)e_a(\bX) + e_{1-a}(\bX)\right\} \\
&\le \E\left([\mu_a(\bX) + \pi_a^\star(\bX)\{1 - \mu_a(\bX)\}]e_a(\bX) + e_{1-a}(\bX)\right) \\
&\le \E\left([\mu_a(\bX) + \pi_a(\bX)\{1 - \mu_a(\bX)\}]e_a(\bX) + e_{1-a}(\bX)\right)
\end{align*}

\noindent Next, notice that

\begin{align*}
\P\{Y(a) = 1\} &= \E\left[\mubar_{a}(\bX) + e_{1-a}(\bX)\{\lambda_{a}(\bX) - \mubar_{a}(\bX)\}\right] \\
&\ge \E\left\{\mubar_{a}(\bX) - e_{1-a}(\bX)\mubar_{a}(\bX)\right\} \\
&= \E\left\{\mubar_a(\bX)e_a(\bX)\right\} \\
&= \E\left(\left[\mu_a(\bX) + \pi^*_a(\bX)\{\mu_a^\star(\bx) - \mu_a(\bx)\}\right]e_a(\bX)\right) \\
&\ge \E\left[\mu_a(\bX)\{1 - \pi^*_a(\bX)\}e_a(\bX)\right] \\
&\ge \E\left[\mu_a(\bX)\{1 - \pi_a(\bX)\}e_a(\bX)\right].
\end{align*}
Combining these two expressions yields the result.
\end{proof}

\begin{remark}
We can also invoke Assumptions \ref{asmpt:known-missingness}-\ref{asmpt:known-risk} and analogous assumptions with respect to the counterfactual quantities $\mu_a(\bx)$ and $\lambda_a(\bx)$ to further narrow these bounds. For example, consider the following monotonicity assumptions,

\setcounter{assumption}{9}
\begin{intassumption}[Counterfactual monotonicity, positive]\label{asmpt:monotonicity-pos-a}
    $\P\{\mu_1(\bX) \geq \mu_{0}(\bX)\} = 1, \quad \P\{\lambda_1(\bX) \geq \lambda_0(\bX)\} = 1 \qquad a = 0, 1$.
\end{intassumption}
\begin{intassumption}[Counterfactual monotonicity, negative]\label{asmpt:monotonicity-neg-a}
    $\P\{\mu_1(\bX) < \mu_{0}(\bX)\} = 1, \quad \P\{\lambda_1(\bX) < \lambda_0(\bX)\} = 1 \qquad a = 0, 1$.
\end{intassumption}
\noindent Positive counterfactual outcome monotonicity assumes that people with greater counterfactual risk of the outcome are always more likely to select into treatment. Similarly, we could invoke known and bounded counterfactual outcome risk assumptions impose assumptions on the ratio $\lambda_a(\bx) / \mu_a(\bx)$ similar to assumptions 6.c and 6.d. In either case, these assumptions allow us to bound or rewrite \eqref{eqn:identity2} in terms of the quantities $\mubar_a(\bX), e(\bX)$. For example, assume 9.a. Then it is easy to show that,

\begin{align*}
\Psi_0 &\ge \tilde{\Psi} + \E[\pi_1(\bX)[1 - \mu_1(\bX)]],   \\
\Psi_0 &\le \E[\mu_1(\bX)e(\bX) - \{\mu_0(\bX) + \pi_0(\bX)(1-\mu_0(\bX))\}\{1-e(\bX)\}] - \omega.
\end{align*}

\noindent We could similarly invoke other assumptions to possibly narrow these bounds, or rewrite them as a function of additional sensitivity parameters.
\end{remark}

In contrast to the results in the main paper, the expressions for bounds when relaxing the no unmeasured confounding assumption are often also a function of the propensity score $e(\bX)$. This leads to a more complicated expression for the corresponding one-step estimators. Yet all bounds remain linear combinations of the elements of $\eta$. We therefore give expressions for the uncentered influence functions of additional relevant quantities below in Proposition \ref{prop:new}.

\setcounter{proposition}{11}

\begin{proposition}\label{prop:new}
The uncentered influence functions for the functionals $\E\{\mu_a(\bX)e_a(\bX)\}$, $\E\{\pi_a(\bX)e_a(\bX)\}$, $\E\{\mu_a(\bX)\pi_{a}(\bX)e_a(\bX)\}$, and $\E[e_a(\bX)]$ are given by $\varphi_{4, a}(\bO; \eta), \varphi_{5, a}(\bO; \eta)$, $\varphi_{6, a}(\bO; \eta)$, $\varphi_{7, a}(\bO; \eta)$, respectively, where
\begin{align*}
\varphi_{4, a}(\bO; \eta) &= \varphi_{1, a}(\bO; \eta)e_a(\bX) + \varphi_{7, 1}(\bO; \eta)\mu_a(\bX) \\
\varphi_{5, a}(\bO; \eta) &= \varphi_{2, a}(\bO; \eta)e_a(\bX) + \varphi_{7, 1}(\bO; \eta)\pi_a(\bX) \\
\varphi_{6, a}(\bO; \eta) &= \varphi_{3, a}(\bO; \eta)e_a(\bX) + \varphi_{7, 1}(\bO; \eta)\mu_a(\bX)\pi_a(\bX) \\
\varphi_{7, 1}(\bO; \eta) &= \I(A_i = a)
\end{align*}
\end{proposition}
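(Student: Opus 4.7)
The plan is to apply a product rule for uncentered influence functions, layered on top of the component-wise influence functions from Proposition \ref{prop:5}. First, $\varphi_{7,a}$ is immediate: since $\E\{e_a(\bX)\} = \P(A=a) = \E\{\I\{A=a\}\}$, the target is the mean of an observed indicator, so no augmentation is required and the uncentered influence function is simply $\I\{A=a\}$.

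For the remaining three functionals, I would rely on the following product rule. If $\Psi_f = \E\{f(\bX)\}$ and $\Psi_g = \E\{g(\bX)\}$ are pathwise differentiable with uncentered influence functions $\phi_f^u$ and $\phi_g^u$, then $\Psi_{fg} = \E\{f(\bX)g(\bX)\}$ admits the uncentered influence function
\begin{equation*}
\phi_{fg}^u(\bO) = g(\bX)\phi_f^u(\bO) + f(\bX)\phi_g^u(\bO) - f(\bX)g(\bX).
\end{equation*}
To derive this, I would introduce a one-dimensional regular submodel $P_\epsilon$ with score $s(\bO)$ at $\epsilon = 0$ and expand the pathwise derivative as
\begin{equation*}
\frac{d\Psi_{fg}(P_\epsilon)}{d\epsilon}\bigg|_{0} = \E\{\dot f(\bX) g(\bX)\} + \E\{f(\bX) \dot g(\bX)\} + \E\{f(\bX)g(\bX) s(\bO)\},
\end{equation*}
then use the defining identity $\E\{\dot f(\bX)\} = \E\{(\phi_f^u - f(\bX))s(\bO)\}$ for $\phi_f^u$, together with the tower property, to rewrite $\E\{\dot f(\bX) g(\bX)\} = \E\{g(\bX)(\phi_f^u - f(\bX))s(\bO)\}$, and analogously for the second term. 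Collecting coefficients of $s(\bO)$ identifies the centered influence function as $\phi_{fg}^u - \Psi_{fg}$.

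With the product rule in place, each remaining claim reduces to substitution: $\varphi_{4,a}$ corresponds to $f = \mu_a$, $g = e_a$, using $\phi_f^u = \varphi_{1,a}$ and $\phi_g^u = \varphi_{7,a}$; $\varphi_{5,a}$ uses $f = \pi_a$, $g = e_a$; and $\varphi_{6,a}$ uses $f = \mu_a \pi_a$, $g = e_a$, where the uncentered influence function for the first factor is already $\varphi_{3,a}$ from Proposition \ref{prop:5} (itself an instance of the product rule applied to $\mu_a$ and $\pi_a$). As a sanity check I would verify $\E\{\phi_{fg}^u\} = \Psi_{fg}$ by iterated expectation: the inverse-propensity weighted residual components of $\varphi_{1,a}, \varphi_{2,a}, \varphi_{3,a}$ have conditional mean zero given $\bX$, so the cross terms collapse and recover the target functional.

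The main obstacle is the rigorous justification of the product rule in the nonparametric model --- namely, showing that the proposed $\phi_{fg}^u$ lies in the nonparametric tangent space and is the \emph{efficient} influence function rather than merely an influence function. This is standard given the factorization $P(Y \mid A, C, \bX)P(C \mid A, \bX)P(A \mid \bX)P(\bX)$ and the variational independence of $\mu_a, \pi_a, e_a$ across the corresponding tangent subspaces, which ensures that the projection of $\phi_{fg}^u$ onto each subspace coincides with the appropriate individual component.
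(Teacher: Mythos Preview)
Your approach via the product rule for uncentered influence functions is exactly the chain-rule argument the paper invokes (and then omits), so methodologically you are aligned. The issue is that carrying out your own substitution does \emph{not} reproduce the displayed formulas: your rule gives
\[
\phi_{fg}^u(\bO) \;=\; g(\bX)\phi_f^u(\bO) + f(\bX)\phi_g^u(\bO) - f(\bX)g(\bX),
\]
so for $\E\{\mu_a(\bX)e_a(\bX)\}$ you obtain $\varphi_{1,a}(\bO;\eta)\,e_a(\bX) + \varphi_{7,a}(\bO;\eta)\,\mu_a(\bX) - \mu_a(\bX)e_a(\bX)$, which differs from the stated $\varphi_{4,a}$ by the subtracted term (and likewise for $\varphi_{5,a}$ and $\varphi_{6,a}$). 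A one-line expectation check confirms that your version is the right one: since $\E[\varphi_{1,a}\mid\bX]=\mu_a(\bX)$ and $\E[\varphi_{7,a}\mid\bX]=e_a(\bX)$, the statement's expression has mean $2\,\E\{\mu_a(\bX)e_a(\bX)\}$ rather than $\E\{\mu_a(\bX)e_a(\bX)\}$. So the discrepancy is a slip in the statement, not in your derivation --- but you should flag it rather than assert that ``substitution'' recovers the displayed formulas verbatim.

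A smaller technical point: your product-rule derivation upgrades the integrated identity $\E\{\dot f(\bX)\}=\E\{(\phi_f^u-f(\bX))s(\bO)\}$ to the weighted version $\E\{\dot f(\bX)g(\bX)\}=\E\{g(\bX)(\phi_f^u-f(\bX))s(\bO)\}$ by appealing to the tower property. That step actually needs the \emph{pointwise} relation $\dot f(\bx)=\E\{(\phi_f^u-f(\bX))s(\bO)\mid\bX=\bx\}$, which holds here because $\phi_f^u-f(\bX)$ is (up to the $\I\{\bX=\bx\}/p(\bx)$ localization) the efficient influence function of the conditional-mean functional $\bx\mapsto f(\bx)$; the tower property alone does not deliver it. Making that explicit --- as the paper does in its proof of Proposition~\ref{prop:5} by differentiating $f(\bx)$ at fixed $\bx$ --- closes the argument cleanly.
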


This result follows from a simple application of the chain-rule \cite{kennedy2022semiparametric}; we omit the full derivation for brevity. One-step estimators for linear combinations of any of the functionals noted in Propositions 6 and \ref{prop:new} can be constructed using these results. Extending the conditions and results of Theorems 1 and 2 to show the asymptotic normality of the resulting estimator is straightforward, though we also omit this for brevity.

\subsection{Sensitivity analyses for $\Psi_1$}

\begin{proposition}\label{prop:new2}
Under Assumptions 1, 3, 4, and 6.d, we have $\bar{\ell}_1 \leq \Psi_1 \leq \bar{u}_1$, where

\begin{align*}
\bar{\ell}_1 &= \E\{\P(A = 1 \mid \bX)[\delta_{1\ell}\pi_1(\bX)(1-\mu_1(\bx)) + \mu_1(\bX)]\} \\
&- \E\{1 - (1 - \mu_0(\bX))(1 - \delta_{0u}\pi_0(\bX))(1 + \P(A = 1 \mid \bX))\},\\
\ubar_1 &= \E\left(1 - \{1 - \mu_1(\bX)\}\{1 - \delta_{1u}\pi_1(\bX)\}\left[1 + \{1-e(\bX)\}\right]\right) \\
&- \E\{\P(A = 0 \mid \bX)[\delta_{0\ell}\pi_0(\bX)(1-\mu_0(\bX)) + \mu_0(\bX)]\}.
\end{align*}
\end{proposition}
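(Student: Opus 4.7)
The plan is to follow the same strategy as the proof of Proposition 9 (the analog under no unmeasured confounding), but without invoking Assumption 2. Because Assumption 2 is no longer available, I cannot identify $\P\{Z(a) = 1\}$ from $\P(Z = 1 \mid A = a, \bX)$ directly. Instead, the natural move is to split the marginal counterfactual probability over the observed treatment arm:
\begin{align*}
\P\{Z(a) = 1\} = \E\bigl[\P\{Z(a) = 1 \mid A = a, \bX\}\,e_a(\bX) + \P\{Z(a) = 1 \mid A = 1-a, \bX\}\,e_{1-a}(\bX)\bigr].
\end{align*}
The first conditional probability is identifiable: under Assumption 3 (consistency), $\P\{Z(a) = 1 \mid A = a, \bX\} = \P(Z = 1 \mid A = a, \bX)$, and by the same argument used in the proof of Proposition 7 (which only uses Assumption 1), this equals $1 - \{1 - \mu_a(\bX)\}\{1 - \pi_a^*(\bX)\}$. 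The second conditional probability is the genuinely unidentified object, since no remaining assumption ties counterfactuals under $A = a$ to those observed under $A = 1-a$.

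Next I would apply the cited bound on the proportion of informative censoring, $\delta_{al}\pi_a(\bX) \le \pi_a^*(\bX) \le \delta_{au}\pi_a(\bX)$, to the identified piece; since $1 - \{1-\mu_a(\bX)\}\{1-\pi_a^*(\bX)\}$ is monotone increasing in $\pi_a^*(\bX)$, its extremes over the allowed range occur at the two endpoints of $\pi_a^*$. For the unidentified piece, I would use only the tautological bound $0 \le \P\{Z(a) = 1 \mid A = 1-a, \bX\} \le 1$. Combining these yields, for each $a$,
\begin{align*}
\E\bigl[\{\mu_a(\bX) + \delta_{al}\pi_a(\bX)(1-\mu_a(\bX))\}e_a(\bX)\bigr] &\le \P\{Z(a) = 1\} \\
&\le \E\bigl[\{\mu_a(\bX) + \delta_{au}\pi_a(\bX)(1-\mu_a(\bX))\}e_a(\bX) + e_{1-a}(\bX)\bigr].
\end{align*}

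Finally, I would assemble these per-arm bounds into bounds on $\Psi_1 = \P\{Z(1) = 1\} - \P\{Z(0) = 1\}$. The lower bound $\ellbar_1$ is obtained by subtracting the upper bound on $\P\{Z(0)=1\}$ from the lower bound on $\P\{Z(1)=1\}$ — pairing the $\delta_{1l}$ endpoint together with the trivial zero lower bound on the counterfactual term in arm 1, against the $\delta_{0u}$ endpoint together with the trivial unit upper bound on the counterfactual term in arm 0 — and simplifying with $e_1(\bX) + e_0(\bX) = 1$; the upper bound $\ubar_1$ follows from the symmetric pairing. I expect the only real obstacle to be bookkeeping: tracking which sensitivity parameter and which trivial bound pair with which arm for each of the two overall bounds, rather than any substantive analytic difficulty.
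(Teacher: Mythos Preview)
Your approach is essentially the same as the paper's: both decompose $\P\{Z(a)=1\mid \bX\}$ over the observed treatment arm, identify the $A=a$ piece via consistency and Assumption~1 as $1-\{1-\mu_a(\bX)\}\{1-\pi_a^\star(\bX)\}$, bound the unidentified $A=1-a$ piece by the trivial $[0,1]$ interval, and then apply the $\delta$-bounds to $\pi_a^\star$. One small note: you correctly pair $\delta_{1l}$ with the lower bound on $\P\{Z(1)=1\}$, which is exactly what the paper's proof derives; the appearance of $\delta_{1u}$ in the displayed formula for $\bar\ell_1$ appears to be a typo in the statement rather than a discrepancy in your argument.
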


\begin{proof}[Proof of Proposition \ref{prop:new2}]
First, notice that
\begin{align*}
\P(Z(a) = 1 \mid \bx) &= \P(Z(a) \mid \bx, A = a) \\ &+ \P(A = a' \mid \bx)(\P(Z(a) = 1 \mid \bx, A = a') -\P(Z(a) = 1 \mid \bx, A = a)) \\
&\le \P(Z(a) \mid \bx, A = a) + e_{1-a}(\bx)(1 - \P(Z(a) = 1 \mid \bx, A = a)) \\
&= 1 - \P(Y(a) = 0, U_I(a) = 0 \mid \bx, A = a)(1 + e_{1-a}(\bx))) \\
&= 1 - [1 - \mu_a(\bx)][1 - \pi^*_a(\bx)](1 + e_{1-a}(\bx)) \\
&\le 1 - [1 - \mu_a(\bx)][1 - \delta_{au}\pi_a(\bx)](1 + e_{1-a}(\bx))
\end{align*}

\noindent Next, notice that
\begin{align*}
\P(Z(a) = 1 \mid \bx) &= \P(Z(a) = 1 \mid \bx, A = a) \\&+ e_{1-a}(\bx)(\P(Z(a) = 1 \mid \bx, A = a') -\P(Z(a) = 1 \mid \bx, A = a)) \\
&\ge \P(Z(a) = 1 \mid \bx, A = a)e_a(\bx) \\
&= e_a(\bx)[1 - (1 - \mu_a(\bx))(1 - \pi_a^\star(\bx))] \\
&\ge e_a(\bx)[\delta_{a\ell}\pi_a(\bx)(1 - \mu_a(\bx)) + \mu_a(\bx)]
\end{align*}
\end{proof}

\begin{remark}
We can again narrow these bounds by invoking additional assumptions that mirror assumptions \ref{asmpt:known-missingness}-\ref{asmpt:known-risk}. For example, consider the monotonicity assumption,

\begin{assumption}[Joint counterfactual monotonicity, positive]\label{asmpt:monotonicity-pos-b}
    $\P\{\P(Z(a) = 1 \mid \bX, A = 1) \geq \P(Z(a) = 1 \mid \bX, A = 0)\} = 1, \qquad a = 0, 1$.
\end{assumption}

This assumption simply states that the potential risk of the outcome or informative missingness is greater in the treated group than the control group. Combined with assumption \ref{asmpt:known-missingness} or \ref{asmpt:bounded-missingness}, it is easy to show that

\begin{align*}
\Psi_1 &\le  \tilde{\Psi} + \E\{\delta_{1u}\pi_1(\bX)(1 - \mu_1(\bX)) - \delta_{0\ell}\pi_0(\bX)(1 - \mu_0(\bX))\}\\
\Psi_1 &\ge \E\{e_1(\bX)[\mu_1(\bX) + \delta_{1\ell}\pi_1(\bX)(1 - \mu_1(\bX))] - \omega \\
&- e_0(\bX)[\mu_0(\bX) + \delta_{0u}\pi_0(\bX)(1 - \mu_0(\bX))]\}, 
\end{align*}

\noindent noting that $\delta_{au} = 1, \delta_{al} = 0, \quad a = 0, 1$ holds under no assumption (i.e. without \ref{asmpt:known-missingness} or \ref{asmpt:bounded-missingness}).

\end{remark}

\subsection{Sensitivity analyses for $\Psi_2$}

A similar approach could be taken to bound $\Psi_2$. However, as shown below, this becomes more challenging in part because the expression depends on the quantity $\P(A = 1 \mid U_I(0) = 0, \bX)$. Additional assumptions are therefore needed to bound $\Psi_2$ as a function of the observed data distribution. Providing a more complete characterization of bounds on $\Psi_2$ while relaxing no unmeasured confounding would be an interesting avenue for future research.

\begin{align*}
\P(Y(a,0) = 1 \mid \bx) &= \P(Y(a, 0) = 1 \mid U_I(a, 0) = 0, \bx)\P(U_I(a, 0) = 0 \mid \bx) \\
 &= \P(Y(a, a) = 1 \mid U_I(a, a) = 0, \bx)\P(U_I(0, 0) = 0 \mid \bx) \\
 &= \P(Y(a) = 1 \mid U_I(a) = 0, \bx)\P(U_I(0) = 0 \mid \bx) \\
 &= \{\mu_a(\bX) + \P(A = 1-a \mid \bx, U_I(a) = 0)[\P(Y(a) = 1 \mid U_I(a) = 0, \bx, A = 1-a) \\
 &- \mu_a(\bX)]\}\{1 - \P(U_I(0) = 1 \mid \bx)\}
\end{align*}
\newpage

\section{Simulation specifications}\label{app:dgp}

To run the simulations in Section \ref{sec:simulations}, we generate data under a potential outcomes framework with covariates, treatment assignment, and two types of censoring: non-informative censoring and informative censoring. All code for the simulations is available at \url{https://github.com/mrubinst757/mixture-missingness}.

\subsection{Covariates and treatment}
For each individual $i=1,\dots,N$, we generate a covariate
\[
X_i \sim \text{Unif}(-3,3),
\]
and assign treatment according to
\[
A_i \sim \text{Bernoulli}(e(X_i)), \qquad e(X) = \text{expit}(X).
\]

\subsection{Censoring mechanisms.}
Let $C(a)$ denote the censoring indicator under treatment $a \in \{0,1\}$. We first define the baseline censoring probability
\[
\gamma_{a}(X) = \mathbb{P}(C(a)=1 \mid X),
\]
with functional form
\[
\gamma_{1}(X) =
\begin{cases}
\text{expit}(X), & X < 0,\\
\text{expit}(0.8X), & 0 \le X < 1,\\
\text{expit}(0.2 + 0.6X), & X \ge 1,
\end{cases}
\]
and
\[
\gamma_{0}(X) =
\begin{cases}
\text{expit}(0.6X), & X < 0,\\
\text{expit}(0.5X), & 0 \le X < 1,\\
\text{expit}(0.1 + 0.4X), & X \ge 1.
\end{cases}
\]
We partition the censoring into non-informative and informative components using sensitivity parameters $\delta_a \in [0,1]$:
\[
\mathbb{P}(U_I(a)=1 \mid X) = \delta_a \gamma_{a}(X),
\]
\[
\mathbb{P}(U_{NI}(a)=1 \mid X) = (1-\delta_a)\gamma_{a}(X),
\]
where $U_{NI}(a)$ denotes non-informative censoring and $U_I(a)$ denotes informative censoring. These indicators are generated sequentially to ensure they cannot both occur:
\[
U_{NI}(a) \sim \text{Bernoulli}((1-\delta_a)\gamma_{c,a}(X)),
\]
\[
U_I(a) \sim \text{Bernoulli}\!\left(
\frac{\delta_a \gamma_{a}(X)}{1-(1-\delta_a)\gamma_{a}(X)}
\right)
\quad\text{if } U_{NI}(a)=0,
\]
and $U_I(a)=0$ otherwise. The overall censoring indicator is
\[
C(a) = \mathbb{I}\{U_{NI}(a)=1 \ \text{or}\ U_I(a)=1\}.
\]

\subsection{Outcome model}
Let
\[
\mu_a(X,u) = \mathbb{P}(Y(a)=1 \mid X, U_I(a)=u)
\]
denote the conditional outcome probability given the informative censoring indicator. We impose a sensitivity parameter $\tau_a$ governing the association between the outcome and informative censoring:
\[
\frac{\mu_a(X,1)}{\mu_a(X,0)} = \frac{\mu_a^\star(X)}{\mu_a(X)}=\tau_a.
\]
where the first equality holds by randomization (or ignorability) of $A$. To ensure probabilities lie in $[0,1]$, we generate
\[
\mu_a(X,1) = \text{expit}(\beta X), \qquad
\mu_a(X,0) = \mu_a(X,1)/\tau_a,
\]
when $\tau_a \ge 1$, and
\[
\mu_a(X,0) = \text{expit}(\beta X), \qquad
\mu_a(X,1) = \tau_a \mu_a(X,0),
\]
when $\tau_a < 1$. We then generate the potential outcomes as
\[
Y(a) \sim \text{Bernoulli}(\mu_a(X,U_2(a))).
\]
Finally, we apply consistency to obtain the observed outcome,
\[
Y = A Y(1) + (1-A)Y(0),
\]
and the observed censoring indicator,
\[
C = A C(1) + (1-A)C(0).
\]
\subsection{Parameterization}
We then set $N=2,000,000$, the outcome model coefficient $\beta = 1$, and sensitivity parameters that obey assumptions (\ref{asmpt:known-missingness}) and (\ref{asmpt:known-risk}), with $\delta_1 =0.5$, $\delta_0 = 0.2$, $\tau_1 = 0.5$, and $\tau_0 = 2$. We then conduct the simulations as described in Section \ref{sec:simulations}.

\end{document}